\def\ps@headings{%
\def\@oddhead{\mbox{}\scriptsize\rightmark \hfil \thepage}%
\def\@evenhead{\scriptsize\thepage \hfil \leftmark\mbox{}}%
\def\@oddfoot{}%
\def\@evenfoot{}}
\def\hlinew#1{%
  \noalign{\ifnum0=`}\fi\hrule \@height #1 \futurelet
   \reserved@a\@xhline}
\newtheorem{theorem}{Theorem}
\newtheorem{definition}{Definition}
\newtheorem{lemma}{Lemma}
\newtheorem{Proposition}{Proposition}
\newtheorem{corollary}{Corollary}
\begin{document}
\title{De-anonymizing Social Networks with Overlapping Community Structure}
\date{}
\author{Luoyi Fu$^{1}$, Xinyu Wu$^{2}$, Zhongzhao Hu$^{1}$, Xinzhe Fu$^{1}$, Xinbing Wang$^{1,2}$\\
$^{1,2}$Dept. of \{Computer Science, Electronic Engineering\}, Shanghai Jiao Tong University, China. \\
Email:$^{1}$\{hzz5611577,fxz0114,yiluofu,xwang8\}@sjtu.edu.cn, $^{2}$wuxinyu@sjtu.edu.cn. \\
}
\thanks{This work was accepted by IEEE International Conference of Computer Communication (INFOCOM) 2018.}
\maketitle

\begin{abstract}
The advent of social networks poses severe threats on
user privacy as adversaries can de-anonymize users' identities
by mapping them to correlated cross-domain networks. Without ground-truth mapping, prior literature proposes various cost
functions in hope of measuring the quality of mappings. However,
there is generally a lacking of rationale behind the cost functions,
whose minimizer also remains algorithmically unknown.

We jointly tackle above concerns under a more practical social network model parameterized by \emph{overlapping communities}, which, neglected by prior art, can serve as side information for de-anonymization. Regarding the unavailability of ground-truth mapping to adversaries, by virtue of the Minimum Mean Square Error (MMSE), our first contribution is a well-justified cost function minimizing the expected number of mismatched users over all possible true mappings.
While proving the NP-hardness of minimizing MMSE, we validly transform it into the weighted-edge matching problem (WEMP), which, as disclosed theoretically, resolves the tension between optimality and complexity:
(i) WEMP asymptotically returns a negligible mapping error in large network size under mild conditions facilitated by higher overlapping strength; (ii) WEMP can be algorithmically characterized via the convex-concave based de-anonymization algorithm (CBDA), perfectly finding the optimum of WEMP. Extensive experiments further confirm the effectiveness of CBDA under overlapping communities, in terms of
averagely $90\%$ re-identified users in the rare true cross-domain co-author networks when communities overlap densely, and roughly $70\%$ enhanced re-identification ratio compared to
non-overlapping cases.

	\end{abstract}
	\section{Introduction}
		With the mounting popularity of social networks, the privacy of users has been under great concern, as information of users in social networks is often released to public for wide usage in academy or advertisement \cite{cite:arxiv-community,privacy}. Although users can be anonymized by removing personal identifiers such as names and family addresses, it is not sufficient for privacy protection since adversaries may re-identify these users by correlated side information, for example the cross domain networks where the identities of these users are unveiled \cite{cite:arxiv-community}.

Such user identification process in social networks resorting to auxiliary information is called \emph{Social Network De-anonymization}. Initially proposed by Narayanan and Shimatikov \cite{cite:de-anonymization}, this fundamental issue has then gained increasing attention, leading to a large body of subsequent works \cite{cite:seedless,cite:allerton,cite:shouling1,cite:shouling2,cite:improved-bound,cite:arxiv-community,palla2005uncovering}.  Particularly, this family of works embarked on de-anonymization under a common framework, as will also be the framework of interest in our setting. To elaborate, in the framework there is an underlying network $G$ which characterizes the relationship among users. Then there are two networks observed in reality, named as published network $G_1$ and auxiliary network $G_2$, whose node sets are identical and edges are independently sampled from $G$ with probability $s_1$ and $s_2$ respectively. \emph{The aim of de-anonymization is to discover the correct mapping between $V_1$ and $V_2$, which corresponds the same user in two networks, with the network structure as the only side information available to the adversaries.}

Regardless of the considerable efforts paid to de-anonymization, there is still a severe lacking of a comprehensive understanding about the conditions under which the adversaries can perfectly de-anonymize user identities. It can be accounted for from three aspects. \underline{(i) Analytically,} despite a variety of existing work \cite{cite:seedless,cite:allerton} that proposed several cost functions in measuring the quality of mappings, the theoretical devise of those costs functions
lacks sufficient rationale behind. \underline{(ii) Algorithmically,} previous works \cite{cite:seedless,cite:allerton} failed to provide any algorithm to demonstrate that the optimal solution of proposed cost functions can indeed be effectively obtained. \underline{(iii) Experimentally,} due to the destitution of real cross-domain datasets, state-of-the-art research \cite{cite:shouling1,cite:shouling2} simply evaluated the performance of proposed algorithms on synthetic datasets or real cross-domain networks formed by artificial sampling, falling short of reproducing the genuine social networks.

\textbf{The above limitations motivate us to shed light on de-anonymization problem by jointly incorporating analytical, algorithmic and experimental aspects under the common framework noted earlier.}
As far as we know, the only work that shares the closest correlation with us belongs to Fu et. al. \cite{Fu:arxiv,Fu:GC}, who investigated this problem on social networks with non-overlapping communities and derived their cost function from the Maximum A Posterior (MAP) manner. However, we remark that the assumption of disjoint communities fails to reflect the real situation where a user belongs to multiple communities, as observed in massive real situations. For example, in social networks of scientific collaborators \cite{palla2005uncovering}, actors and political blogospheres \cite{latouche2011overlapping}, users might belong to several research groups with different research topics, movies and political parties respectively. Furthermore, while MAP enables adversaries to find the correct mapping with the highest probability, it relies heavily on a prerequisite, i.e., a hypothetically true mapping between the given published and auxiliary networks. However, once the MAP estimation fails to exactly match this ``true'' mapping, then the mapping error becomes unpredictable, with the probability that the estimation deviates largely from the real ground-truth. For the first concern, by adopting the overlapping stochastic block model (OSBM), we allow the communities to overlap arbitrarily, which can well capture a majority of real social networks. For the second concern, we derive our cost function based on Minimum Mean Square Error (MMSE), which minimizes the expected number of mismatched users by incorporating all the possible true mappings between the given published and auxiliary networks. This incorporation, from an average perspective, keeps the estimation of MMSE from significant deviation from any possible hypothetic true mapping.

Hereinafter we unfold our main contributions in analytical, algorithmic and experimental aspects respectively as follows:

\underline{1. Analytically,} we are the first to derive cost function based on MMSE, which justifiably ensures the minimum expected mapping error between our estimation and the ground-truth mapping. Then we demonstrate the NP-hardness of solving MMSE, whose intractability stems mainly from the calculation of all $n!$ possible mappings ($n$ is the total number of users). To cope with the hardness, we simplify MMSE by transforming it into a weighted-edge matching problem (WEMP), with mapping error negatively related to weights.

\underline{2. Algorithmically,} in terms of solving WEMP, we theoretically reveal that WEMP alleviates the tension between optimality and complexity: Solving WEMP ensures optimality since its optimum, in large network size, negligibly deviates from the ground-truth mapping under mild conditions where on average a user belongs to asymptotically non-constant communities.
Meanwhile it reduces complexity since perfectly deriving its optimum only entails a convex-concave based de-anonymization algorithm (CBDA) with polynomial time. The proposed CBDA serves as one of the very few attempts to address the algorithmic characterization, that has long remained open, of de-anonymization without pre-identification.

\underline{3. Experimentally,}
we validate our theoretical findings that minimizing WEMP indeed incurs negligible mapping error in large social networks based on real datasets. Interestingly, we also observe significant benefits that community overlapping effect brings to the performance of CBDA: (i) in notable true cross-domain co-author networks with dense overlapping communities, CBDA can correctly re-identify $90\%$ nodes on average; (ii) the overlapping communities bring about an enhancement of around $70\%$ re-identification ratio compared with non-overlapping cases.

Unlike de-anonymization with pre-identified seed nodes, to which a family of work pays endeavor, no prior knowledge of such seeds complicates this problem, thus leaving many aspects largely unexplored. Meanwhile, theoretical results on such seedless cases in prior art is short of experimental verification. Our work is, as far as we are concerned, the initial devotion to theoretically dissecting seedless cases with overlapping communities, under real cross-domain networks with more than $3000$ nodes.
With novel exploitations of structural information, future design of more efficient mechanisms will be expected to further dilute the limitation of network size.

	\section{Related Works}\label{sec:relatedworks}
Social network de-anonymization problem has been in the dimelight in recent decades. Narayanan and Shimatikove \cite{cite:de-anonymization} formulated this problem initially. They presented its framework and
proposed a generic algorithm, which did not utilize any side information except the network structure
and worked based on some pre-identified nodes, called seed nodes.

Predicated on this seminal paper, a large amount of work emerges focusing on different facets of de-anonymization. One major division is whether the anonymized network is seeded or seedless, i.e., whether pre-identified nodes exist.
For seeded anonymized network, as the pioneering work \cite{cite:de-anonymization}, the common idea to solve the problem is to design algorithms based on \emph{percolation}, which means that the re-identification process starts from the seed nodes and identify their neighbor nodes iteratively until all the nodes are de-anonymized
\cite{cite:de-anonymization,cite:percolation-matching,cite:vldb1,cite:garetto1,cite:vldb}. Yartseva et al. \cite{cite:percolation-matching}, Kazemi et al. \cite{cite:vldb1} and Fabiana et al. \cite{cite:garetto1} studied seeded problem under Erdos-Renyi graph model, while Korula and Lattenzi \cite{cite:vldb} shed light on preferential attachment model.

However, in real situations it is often the case that adversaries are difficult to obtain seeded nodes before de-anonymizing \cite{Fu:arxiv,Fu:GC} due to the limited access to user profiles. For seedless networks, the major methodology is to propose cost functions and obtain an estimation of the correct mapping between two networks by optimizing these cost functions. Pedarsani and Grossglauer \cite{cite:seedless} are the precursors in de-anonymizing seedless networks. They studied this problem under Erdos-Renyi graph and their cost function was the number of mismatched edges. With the same cost function, Kazemi et al. \cite{cite:allerton} considered the situation where the nodes in two networks are overlapping partially, and Cullina and Kiyavash \cite{cite:improved-bound} further investigated the information-theoretic threshold for exact identification in \cite{cite:seedless}. However, the cost functions in \cite{cite:seedless,cite:allerton,cite:improved-bound} were not justified by rationale.
One cost function based on Maximum A Posterior (MAP) has been justified by \cite{cite:arxiv-community,Fu:arxiv,Fu:GC}. Onaran et al. \cite{cite:arxiv-community} theoretically proved the validity of MAP and Fu et al. \cite{Fu:arxiv,Fu:GC} provided two approximation algorithms to solve this problem.

Another facet for de-anonymization problem is the amount of side information adversaries have. A large amount of work \cite{cite:de-anonymization,cite:percolation-matching,cite:vldb1,cite:garetto1,cite:vldb,cite:seedless,cite:allerton,cite:improved-bound}, either in seeded or seedless situations, studied this problem without any side information except the topological structure of two networks, i.e., the edge sets in two networks. However, the clustering effect exists in real social networks, which has not been considered in work above. To incorporate clustering effect, Chiasserini et al. \cite{cite:garetto2} studied clustering under seeded problem and drew the conclusion that the impact of clustering is double-edged, which may dramatically reduce the required seed nodes but make the algorithm more fragile to errors. Onaran et al. \cite{cite:arxiv-community} and Fu et al. \cite{Fu:arxiv,Fu:GC} both studied clustering by modeling it as communities in two networks, and Fu et al. \cite{Fu:arxiv,Fu:GC} showed that the side information of communities makes for higher accuracy of the algorithms intended for seedless problem. However, as far as we know, no existing work has ever focused on overlapping communities, which is omnipresent in real situations, especially the large-scale social networks nowadays.

\section{Preliminaries}\label{Preliminaries}
In this section we introduce some basic definitions and lemmas which will be used in our later analysis.

\subsection{Definitions}
\begin{definition}(\textbf{Trace})
Given an $n \times n$ square matrix $\mathbf{Y}$, the trace of $\mathbf{Y}$ is $\mathbf{tr}\mathbf{Y}=\sum_{i=1}^n \mathbf{Y}_{ii}$, where $\mathbf{Y}_{ii}$ denotes the element at the $i_{th}$ row and $i_{th}$ column of $\mathbf{Y}$.
\end{definition}

\begin{definition}({\textbf{Expectation Over Matrix}})
Given a random matrix variable $\mathbf{A}$ and a function of $\mathbf{A}$, denoted as $f(\mathbf{A})$, then the expectation of $f(\mathbf{A})$ over matrix $\mathbf{A}$ is denoted as $\textbf{E}_{\mathbf{A}}(f(\mathbf{A}))$.
\end{definition}

\begin{definition}(\textbf{Frobenius Norm})
Given an $m \times n$ matrix $\mathbf{X}$, the Frobenius norm of $\mathbf{X}$ is
\begin{equation}
||\mathbf{X}||_F=\sqrt{\sum_{i=1}^m \sum_{j=1}^n(\mathbf{X}_{ij}^2)},
\nonumber
\end{equation}
where $\mathbf{X}_{ij}$ denotes the element at the $i_{th}$ row and $j_{th}$ column of $\mathbf{X}$.
\end{definition}

\begin{definition}(\textbf{Hadamard Product})
Given two $n \times n$ matrices $\mathbf{Y}$ and $\mathbf{Z}$, The Hadamard Product between $\mathbf{Y}$ and $\mathbf{Z}$ is defined as $\forall i,j \in \{1,2,...,n\}, (\mathbf{Y}\circ\mathbf{Z})_{ij}=\mathbf{Y}_{ij}\mathbf{Z}_{ij}$, where $\mathbf{Y}\circ\mathbf{Z}$ is still an $n \times n$ matrix.
\end{definition}

\begin{definition}(\textbf{Approximation Ratio})
Given a maximization problem $\mathcal{I}$ and its optimal value $OPT(\mathcal{I})$, if an algorithm $\mathcal{A}$ outputs a solution $\mathcal{S}$ such that $\mathcal{S}\geq \tau OPT(\mathcal{I})$, where $\alpha\in [0,1]$. Then the approximation ratio of this algorithm $\mathcal{A}$ for problem $\mathcal{I}$ is $\tau$.
\end{definition}

\subsection{Lemmas}
\begin{lemma}(\textbf{Sequence Inequality} \cite{hardy1988inequalities})\label{lemma1}
 For two nonnegative sequences $a_1\leq a_2 \leq a_3 \cdots \leq a_n$ and $b_1 \leq b_2 \leq b_3 \cdots \leq b_n$, let $\eta=\sum_{k=1}^n a_{i_k}b_{j_k}$ where $\{i_1,i_2,...,i_n\}$ and $\{j_1,j_2,...,j_n\}$ are both permutations of $\{1,2,...,n\}$. Then we can obtain the Sequence Inequality that yields to
\begin{equation}
\sum_{k=1}^n a_kb_k \geq \eta \geq \sum_{k=1}^n a_kb_{n+1-k}.\nonumber
\end{equation}
\end{lemma}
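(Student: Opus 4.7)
The plan is to reduce the claim to a statement purely about permutations $\sigma$ of $\{1,\ldots,n\}$. First I would observe that the sum $\eta=\sum_{k=1}^n a_{i_k}b_{j_k}$ depends only on the induced pairing between the indices of the $a$-sequence and the $b$-sequence; by relabeling the summands (i.e.\ composing with the inverse of the permutation $k\mapsto i_k$) one may assume without loss of generality that $i_k=k$ for all $k$. It then suffices to show that, writing $S_\sigma := \sum_{k=1}^n a_k b_{\sigma(k)}$ for a permutation $\sigma$ of $\{1,\ldots,n\}$, we have $S_{\mathrm{id}}\geq S_\sigma \geq S_{\mathrm{rev}}$, where $\mathrm{id}$ is the identity and $\mathrm{rev}(k)=n+1-k$.

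The core ingredient is a pairwise swap identity. For indices $p<q$, let $\sigma'$ be obtained from $\sigma$ by exchanging the values $\sigma(p)$ and $\sigma(q)$. A direct computation yields
\begin{equation}
S_\sigma - S_{\sigma'} = (a_p-a_q)\bigl(b_{\sigma(p)}-b_{\sigma(q)}\bigr),
\nonumber
\end{equation}
and since $a_p\leq a_q$ by monotonicity, the sign of $S_\sigma-S_{\sigma'}$ is opposite to the sign of $b_{\sigma(p)}-b_{\sigma(q)}$. In particular, if $(p,q)$ is an inversion of $\sigma$ (i.e.\ $\sigma(p)>\sigma(q)$), then $S_\sigma\leq S_{\sigma'}$; whereas if $\sigma(p)<\sigma(q)$, then $S_\sigma\geq S_{\sigma'}$.

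To establish the upper bound $S_\sigma\leq S_{\mathrm{id}}$, I would iteratively remove inversions of $\sigma$ via adjacent swaps, each of which does not decrease $S_\sigma$; because the number of inversions is a nonnegative integer that strictly drops with every swap, this process terminates in finitely many steps at the inversion-free permutation, namely $\mathrm{id}$. The symmetric argument---creating inversions by swapping any pair $p<q$ with $\sigma(p)<\sigma(q)$---never increases $S_\sigma$ and terminates at the maximally-inverted permutation $\mathrm{rev}$, yielding $S_\sigma\geq S_{\mathrm{rev}}$.

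No step here is genuinely hard; the only mild subtlety is verifying that the inversion-creating procedure actually reaches $\mathrm{rev}$, which follows since any $\sigma\neq\mathrm{rev}$ must contain at least one pair $p<q$ with $\sigma(p)<\sigma(q)$ (otherwise $\sigma$ would already be $\mathrm{rev}$), so the process only halts at $\mathrm{rev}$. I note that nonnegativity of the two sequences is never used in the argument; only the monotonicity hypothesis is needed.
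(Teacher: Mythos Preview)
Your argument is correct and is in fact the standard proof of the rearrangement inequality: reduce to a single permutation, use the pairwise swap identity $(a_p-a_q)(b_{\sigma(p)}-b_{\sigma(q)})$, and then bubble-sort your way to the identity (resp.\ the reversal). Your observation that nonnegativity is unnecessary is also right.

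There is nothing to compare against here, however: the paper does not supply its own proof of this lemma. It is stated as a preliminary result with a citation to Hardy--Littlewood--P\'olya and used later as a black box (specifically in Section~\ref{idea} to argue that pairing $k$ with $\sum_{\mathbf{\Pi_0}\in S_k(\tilde{\mathbf\Pi})}\|\mathbf{\Pi_0}\hat{\mathbf A}-\hat{\mathbf B}\mathbf{\Pi_0}\|_F^2$ in increasing order maximizes the MMSE objective). So your write-up stands on its own and is more than what the paper itself provides for this statement.
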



\begin{lemma}
\label{lemma2}
Let $A(n)$, $B(n)$, $C(n)$ and $D(n)$ denote four functions with variable $n$, such that $A(n)=o(B(n))$ and $C(n)=o(D(n))$, then when $n \rightarrow \infty$,
\begin{equation}
\frac{A(n)+B(n)}{C(n)+D(n)}=\frac{B(n)}{D(n)}.\nonumber
\end{equation}
\end{lemma}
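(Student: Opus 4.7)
The plan is to reduce the claim to the elementary fact that little-$o$ terms vanish when divided by their dominant counterparts, and then conclude via the algebra of limits. The statement is best read as asymptotic equivalence: the ratio of the left-hand side to the right-hand side tends to $1$ as $n\to\infty$. Under this reading the identity is a routine consequence of the hypotheses $A(n)=o(B(n))$ and $C(n)=o(D(n))$, which by definition mean $A(n)/B(n)\to 0$ and $C(n)/D(n)\to 0$.

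First I would note that, since $A(n)=o(B(n))$ and $C(n)=o(D(n))$, the functions $B(n)$ and $D(n)$ are nonzero for all sufficiently large $n$ (otherwise the ratios $A/B$ and $C/D$ would not be defined for those $n$). For such $n$ I would factor the dominant term out of the numerator and denominator, writing
\begin{equation}
\frac{A(n)+B(n)}{C(n)+D(n)}=\frac{B(n)}{D(n)}\cdot\frac{1+\frac{A(n)}{B(n)}}{1+\frac{C(n)}{D(n)}}.\nonumber
\end{equation}
This algebraic rearrangement is the only non-trivial step, and it serves to isolate the leading-order ratio $B(n)/D(n)$ from the correction factor.

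Next I would take $n\to\infty$ in the correction factor. By the definition of little-$o$, $A(n)/B(n)\to 0$ and $C(n)/D(n)\to 0$, so the numerator and denominator of the second fraction both tend to $1$. By continuity of division at $(1,1)$, the whole correction factor tends to $1$. Consequently the ratio of $(A(n)+B(n))/(C(n)+D(n))$ to $B(n)/D(n)$ converges to $1$, which is exactly the assertion of the lemma under the standard interpretation of the equality symbol for asymptotic equivalence.

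There is really no main obstacle here: the only subtlety is the convention by which the equality in the statement must be interpreted asymptotically (as a ratio tending to one), together with the implicit assumption that $B(n)$ and $D(n)$ are eventually nonzero so that all quantities in the derivation are well-defined. No deeper tool than the limit laws is required.
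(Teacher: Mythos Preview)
Your argument is correct and is precisely the standard derivation: factor out the dominant terms and use $A/B\to 0$, $C/D\to 0$ to show the correction factor tends to $1$. The paper itself does not supply a proof of this lemma (it is listed among the preliminary facts without justification), so there is nothing further to compare; your write-up would serve as the omitted proof.
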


\begin{lemma}{(\textbf{Stirling's Formula})}
Stirling's formula presents an approximation for the factorial, $n!$, when $n\rightarrow \infty$, as
\begin{equation}
n!\sim \sqrt{2\pi n}(\frac{n}{e})^n.\nonumber
\end{equation}
\end{lemma}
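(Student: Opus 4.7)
The plan is to establish Stirling's formula by analyzing $\ln(n!)=\sum_{k=1}^{n}\ln k$ via integral comparison and then pinning down the multiplicative constant with a separate identity. First I would apply the Euler--Maclaurin summation formula to $f(x)=\ln x$ on $[1,n]$, which gives
\[
\ln(n!)=\int_{1}^{n}\ln x\,dx+\tfrac{1}{2}(\ln n+\ln 1)+R_n=n\ln n-n+\tfrac{1}{2}\ln n+C+o(1),
\]
where $R_n=-\int_{1}^{n}\frac{B_1(\{x\})}{x}\,dx$ is bounded and Cauchy-convergent as $n\to\infty$, so $R_n\to C$ for some finite constant. Exponentiating, this already yields the shape $n!\sim c\sqrt{n}\,(n/e)^n$ for some absolute constant $c=e^{C}>0$.

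The remaining, and classically subtle, step is identifying $c=\sqrt{2\pi}$. For this I would invoke Wallis' product
\[
\frac{\pi}{2}=\lim_{n\to\infty}\prod_{k=1}^{n}\frac{(2k)^{2}}{(2k-1)(2k+1)},
\]
and rewrite the partial product in closed form via factorials as $\frac{2^{4n}(n!)^{4}}{((2n)!)^{2}(2n+1)}$. Substituting the shape $n!\sim c\sqrt{n}\,(n/e)^n$ into numerator and denominator, all factors of $e$ and all powers of $n$ cancel, and the surviving asymptotic identity forces $c^{2}/2=\pi$, hence $c=\sqrt{2\pi}$.

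The hard part is not the Euler--Maclaurin step itself (the remainder $R_n$ is clearly $O(1)$ from $|B_1(\{x\})|\le 1/2$ and the integral $\int_1^\infty 1/x^2\,dx$ controls the Cauchy tail) but the fact that any purely monotonic or single-variable integral-comparison argument can only deliver the growth rate $n\ln n-n$ and the correction $\tfrac12\ln n$; it cannot distinguish $\sqrt{2\pi}$ from any other positive constant without an external input. An alternative unified route would bypass Wallis by starting from the Gamma-integral representation $n!=\int_{0}^{\infty}x^{n}e^{-x}\,dx$, performing the saddle-point change of variables $x=n+\sqrt{n}\,t$, and applying Laplace's method, whereupon $\sqrt{2\pi}$ emerges directly from the Gaussian $\int_{-\infty}^{\infty}e^{-t^{2}/2}\,dt$. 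The main technical obstacle along this alternative is justifying passage of the limit inside the integral for the non-uniformly convergent factor $(1+t/\sqrt{n})^{n}e^{-t\sqrt{n}}$, which requires a dominated-convergence argument with a carefully chosen integrable majorant on both tails.
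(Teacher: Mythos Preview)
Your outline is correct and follows one of the standard textbook routes to Stirling's formula: Euler--Maclaurin on $\ln(n!)$ to obtain the shape $c\sqrt{n}(n/e)^n$, followed by Wallis' product to pin down $c=\sqrt{2\pi}$, with Laplace's method on the Gamma integral noted as an alternative. There is no gap in the plan.

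However, the paper does not actually prove this lemma. It is listed in the Preliminaries section alongside the Sequence Inequality and a couple of other elementary facts, all stated without proof as background tools to be invoked later (Stirling is used once, in the proof of Theorem~2, to bound $n!$). So there is nothing to compare against: you have supplied a genuine proof where the paper simply quotes the result as classical. If anything, your write-up is more than what is needed here; in the context of this paper a one-line citation to a standard analysis reference would suffice.
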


\begin{lemma}
Given an $n \times n$ matrix $\mathbf{R}$ and an $n \times n$ permutation matrix $\mathbf{\Pi}$, then $||\mathbf{\Pi R}||_F=||\mathbf{R\Pi}||_F=||\mathbf{R}||_F$, i.e., multiplying a permutation matrix keeps invariant of the Frobenius norm.
\end{lemma}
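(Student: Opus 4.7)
The plan is to exploit the fact that a permutation matrix has exactly one $1$ in each row and each column and zeros elsewhere, so left-multiplication by $\mathbf{\Pi}$ permutes the rows of $\mathbf{R}$ and right-multiplication by $\mathbf{\Pi}$ permutes its columns. Since the Frobenius norm, by definition, is the square root of the sum of squares of all entries, and a row or column permutation merely reorders the multiset of entries without changing it, the sum of squared entries is invariant. This will give $\|\mathbf{\Pi R}\|_F = \|\mathbf{R\Pi}\|_F = \|\mathbf{R}\|_F$ in one direct entrywise argument. Concretely, I would let $\pi$ denote the permutation on $\{1,\dots,n\}$ encoded by $\mathbf{\Pi}$, write $(\mathbf{\Pi R})_{ij} = \mathbf{R}_{\pi(i)j}$ and $(\mathbf{R\Pi})_{ij} = \mathbf{R}_{i\pi^{-1}(j)}$, and observe that summing the squares over $(i,j)$ is the same as summing over $(\pi(i),j)$ or $(i,\pi^{-1}(j))$ respectively.

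As a more algebraic alternative, I would use the identity $\|\mathbf{X}\|_F^2 = \mathbf{tr}(\mathbf{X}^T\mathbf{X})$ together with the orthogonality of permutation matrices, $\mathbf{\Pi}^T\mathbf{\Pi} = \mathbf{\Pi}\mathbf{\Pi}^T = \mathbf{I}$. Then $\|\mathbf{\Pi R}\|_F^2 = \mathbf{tr}(\mathbf{R}^T\mathbf{\Pi}^T\mathbf{\Pi R}) = \mathbf{tr}(\mathbf{R}^T\mathbf{R}) = \|\mathbf{R}\|_F^2$, and using the cyclic property of the trace, $\|\mathbf{R\Pi}\|_F^2 = \mathbf{tr}(\mathbf{\Pi}^T\mathbf{R}^T\mathbf{R\Pi}) = \mathbf{tr}(\mathbf{R}^T\mathbf{R}\mathbf{\Pi}\mathbf{\Pi}^T) = \mathbf{tr}(\mathbf{R}^T\mathbf{R}) = \|\mathbf{R}\|_F^2$. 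There is no real obstacle here; the statement is a standard invariance fact and the only small care needed is to distinguish the row-permutation case from the column-permutation case (one involves $\pi$, the other $\pi^{-1}$) so that the re-indexing is clearly a bijection on $\{1,\dots,n\}^2$.
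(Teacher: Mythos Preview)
Your proposal is correct; both the entrywise re-indexing argument and the trace/orthogonality argument are valid and standard. The paper itself states this lemma as a preliminary fact without proof, so there is no approach to compare against; either of your arguments would serve perfectly well.
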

	
\section{Models and Definitions}\label{sec:model}
In this section, we firstly introduce the social network models, then give the definition of the social network de-anonymization problem.

\subsection{Social Network Models}

The social network model considered in this paper is composed of three parts, i.e., the underlying network $G$, the published network $G_1$ and the auxiliary network $G_2$. $G_1$ and $G_2$ can be viewed as the incomplete observations of $G$. For instance, in reality $G$ may characterize the invisible relationship among a group of people, while $G_1$ might represent the online network in Facebook of this group of people and $G_2$ might represent the communication records in the cell phones of them, both of which are observable.

\subsubsection{Underlying Social Network}
Let $G=(V,E,\mathbf{U})$ be the underlying graph, where $V$ is the node set, $E$ is the edge set and $\mathbf{U}$\footnote{$\mathbf{U}(i,j)=1$ if $(i,j) \in E$ and $\mathbf{U}(i,j)=0$ if $(i,j) \notin E$} is the adjacency matrix of $G$. We regard $G$ as an undirected network and assume that the total number of nodes is $|V|=n$. One of the most common models to characterize the communities in networks is the Stochastic Block Model \cite{privacy1,privacy2,privacy3}. In our work, to reflect the property of overlapping communities, we suppose $G$ is generated based on the overlapping stochastic block model \cite{latouche2011overlapping}, the idea of which can be interpreted as follows:

Suppose there are $Q$ communities in $G$, where each community $q \in Q$ contains a subset of nodes. For a generic node $i$, we introduce a latent $Q$-dimensional column vector $\boldsymbol{C_i}$, in which all its $Q$ elements are independent boolean variables
$C_{iq}\in \{0,1\}$, with $C_{iq}$ being the $q_{th}$ row (element) in $\boldsymbol{C_i}$. $C_{iq}=1$ means that node $i$ is in community $q$ and $C_{iq}=0$ otherwise. Thus $\boldsymbol{C_i}$ can be seen as drawn from the Bernoulli distribution:

\begin{equation}
\label{e1}
\boldsymbol{C_i} \sim \prod_{q=1}^Q (p_q)^{C_{iq}}(1-p_q)^{1-C_{iq}},
\end{equation}
where $p_q$ is the probability of any node in $G$ falling into community $q$. Hence we have
\begin{equation}
\label{Bernoulli2}
Pr(\boldsymbol{C_i}=\{C_{i1},C_{i2},...,C_{iQ}\}^T)=\prod_{q=1}^Q (p_q)^{C_{iq}}(1-p_q)^{1-C_{iq}}.
\end{equation}
Intuitively, Eqn. (\ref{Bernoulli2}) shows the probability of node $i$ belonging to communities $q_1,q_2,...q_\ell$ which make the boolean variable $C_{iq_k}=1, k=1,2,...,\ell$ while not belonging to other communities. We call $\boldsymbol{C_i}$ as the \emph{community representation} of node $i$, since $\boldsymbol{C_i}$ explicitly represents to which communities node $i$ belongs and does not belong. For instance, if node $i$ belongs to communities $1$, $2$ and $3$, then the community representation of node $i$ is $\boldsymbol{C_i}=\{1,1,1,0,0,...,0\}^T$.

Unlike the stochastic block model in \cite{cite:blockmodel} which can only represent disjoint communities, the overlapping stochastic block model can measure the property of communities overlapping, which allows one node to belong to multiple communities. For ease of understanding, let us consider an example where node $i$ belongs to both communities $1$ and $2$. Then we have $Pr( \boldsymbol{C_i}=\{C_{i1},C_{i2},...,C_{iQ}\}^T)=p_1p_2\prod_{p=3}^Q(1-p_q)$.  For an edge $(i,j) \in E$, it is natural that the probability of the existence of this edge is determined by $\boldsymbol{C_i}$ and $\boldsymbol{C_j}$. Therefore we can set $Pr\{(i,j) \in E\}=Pr\{\mathbf{U}(i,j)=1\}=p_{\boldsymbol{C_i}\boldsymbol{C_j}}$, where $p_{\boldsymbol{C_i}\boldsymbol{C_j}}$ is a pre-defined parameter representing the probability of edge existence between two nodes belonging to any community representation. It has been demonstrated in \cite{latouche2011overlapping} that the overlapping stochastic block model turns out to be more reasonable in reality since overlapping property exists in social networks widely, and the parameters in this model can be estimated efficiently.

\subsubsection{Published Network and Auxiliary Network}
Now we proceed to define the published and auxiliary networks.
Specifically, we let $G_1(V_1,E_1,\mathbf{A})$ denote the published network, which can be interpreted as a graph that shares the same node labeling as the underlying graph, with its edges independently sampled from $G$ with some probability $s_1$. In contrast, an auxiliary network, denoted by $G_2(V_2,E_2,\mathbf{B})$, does not necessarily have the same node labeling as the underlying network and the edges are independently sampled from $G$ with some probability $s_2$. Again, here $\mathbf{A}$ and $\mathbf{B}$ respectively represent the adjacency matrix of published and auxiliary networks.

In correspondence to real situations, $G_1$ characterizes the publicly available anonymized network where users' identities are unavailable for privacy concern. On the contrary, $G_2$ characterizes an un-anonymized network where users' identities are all available. The adversary (attacker) can leverage the information of $G_2$, and tries to identify the users in $G_1$ based on the edge relationship between and community representation of both $G_1$ and $G_2$. In terms of edge relationship, the node of high degree in $G_1$ should be of higher possibility to correspond to a node which is also of high degree in $G_2$. Therefore while de-anonymizing any node in $G_2$, the adversary can harness this \emph{degree similarity} in matched node pairs to predict its corresponding node in $G_1$. In terms of community representation, the nodes in $G_1$ and $G_2$ with the same community representation should be matched with higher probability. Then the adversary can make use of this \emph{community representation similarity} while judging whether a node in $G_1$ is matched with the node in $G_2$ to be de-anonymized with high probability.

For the edge set $E_k$ $(k\in \{1,2\})$ of either network, we have
	\[Pr\{(i,j)\in E_{k}\}=\left\{\begin{array}{ll}
	s_k &\mbox{ if }(i,j)\in E,\\
	0& \mbox{ if }(i,j) \notin E.\\
	\end{array}\right.\]

For the node sets $V_1$ and $V_2$, we assume that the number of nodes in $G$, $G_1$ and $G_2$ are the same, i.e., $|V|=|V_1|=|V_2|=n$. By this assumption, there exists bijective mapping between $G_1$ and $G_2$, as will be defined in Section \ref{ModelB}. Note that it is easy to extend to the situation where $|V_1|\neq |V_2|$. Although the mapping between $G_1$ and $G_2$ in such case is no longer bijective, we only need to modify the permutation matrix (defined in Section \ref{ModelB}) between $G_1$ and $G_2$ from a square matrix into a non-square one, which will not influence our theoretical analysis.

Furthermore, we should clarify that in our model we render each node pair $(i,j)$ a weight $w_{ij}$, which, as will be defined in Section \ref{ModelB}, is dependent on the parameter set for the node pair $(i,j)$, i.e.,  $\mathbf{\theta}_{ij}=\{p_{\boldsymbol{C_i}\boldsymbol{C_j}},s_1,s_2\}$. Different pairs of nodes may have different weights. As we will state in Section \ref{ModelB}, $w_{ij}$ reflects the probability of edge existence between nodes $i$ and $j$, and the weights facilitates the reduction of the average de-anonymization error, which makes our estimation of permutation matrix more accurate.

\textbf{Remark:} According to the description above, it can be seen that $G$, $G_1$ and $G_2$ are all random variables. For the convenience of representation, we directly use $G$, $G_1$, $G_2$ as notations for the realizations of these random variables with no loss of clearance. Moreover, we set $\boldsymbol{\theta}=\{\{p_{\boldsymbol{C_iC_j}}\boldsymbol{|} 1\leq i,j \leq n\},s_1,s_2\}$ as the parameter set incorporating all pre-defined parameters in the model together.

\subsection{Social Network De-anonymization}\label{ModelB}
Predicated on the side information provided by the published network $G_1$ and the auxiliary network $G_2$, the goal of social network de-anonymization problem is to find a bijective node mapping $\pi: V_1 \mapsto V_2$, which is the true matching of nodes in $G_1$ and $G_2$. We can equivalently express this bijective mapping by forming a permutation matrix $\mathbf{\Pi} \in \{0,1\}^{n \times n}$, where $\mathbf{\Pi}(i,j)=1$ if $\pi(i)=j$ and $\mathbf{\Pi}(i,j)=0$ otherwise.
We denote $\mathbf{\Pi_0}$ as the true permutation matrix between $G_1$ and $G_2$, with $\pi_0$ representing the corresponding true bijective mapping. Note that we do not have any prior knowledge of $\mathbf{\Pi_0}$, and we do not have access to the underlying graph $G$ of $G_1$ and $G_2$.
Now we can formally define the social network de-anonymization problem as follows.

\begin{definition}(\textbf{Social Network De-anonymization Problem})\
Given the published network $G_1$, the auxiliary network $G_2$, parameter set $\boldsymbol{\theta}$, social network de-anonymization problem aims to construct the true bijective mapping $\pi_0$ between $V_1$ and $V_2$ (the true permutation matrix $\mathbf{\Pi_0}$ equivalently).
\end{definition}
\begin{figure}[htbp]
     	\centering
		\includegraphics[width=1\linewidth]{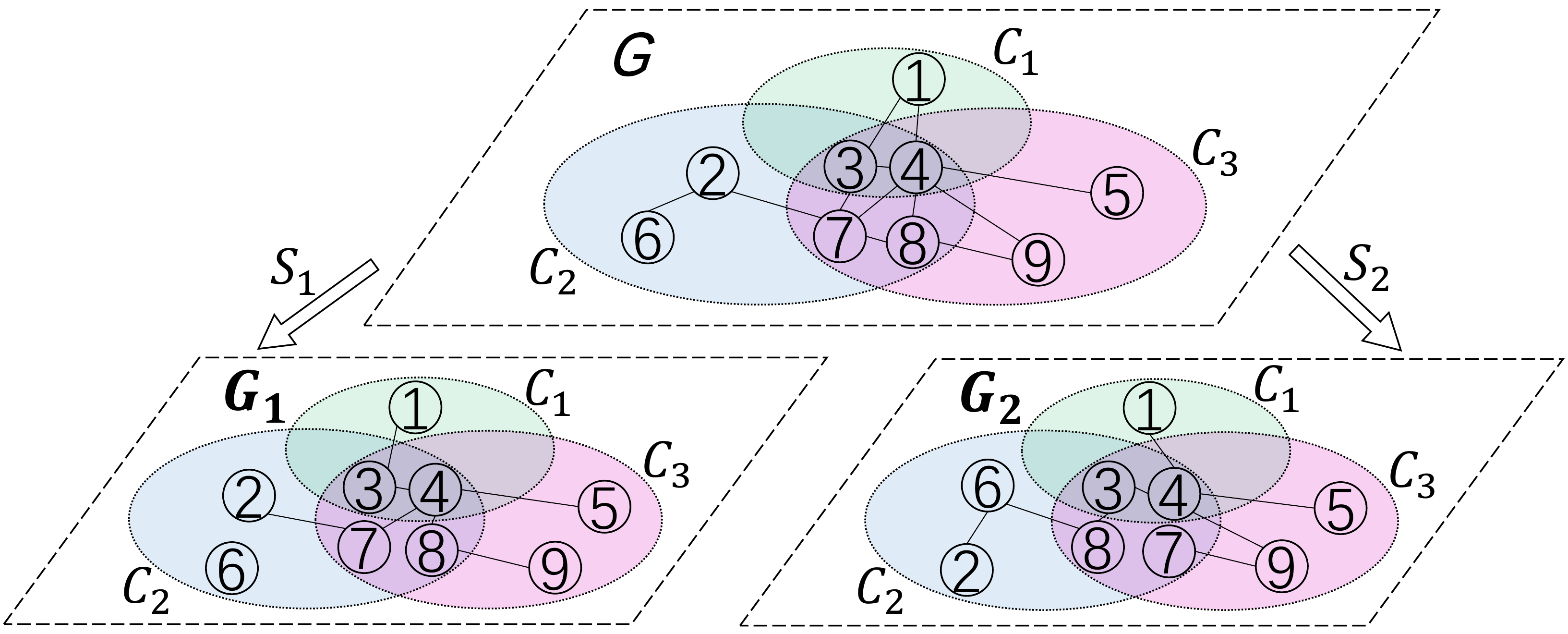}		
\caption{
\small
 An example of the underlying graph ($G$), published graph ($G_1$) and auxiliary graph ($G_2$). The edges of $G_{1(2)}$ are sampled independently from $G$ with probability $s_{1(2)}$. $C_1, C_2, C_3$ denote $3$ different communities in our overlapping stochastic block model. Nodes $7$ and $8$ belong to $2$ different communities. Nodes $3$ and $4$  belong to $3$ different communities. The true mapping $\pi_0=\{(1,1),(2,6),(3,3),(4,4),(5,5),(6,2),(7,8),(8,7),(9,9)\}$}
		\label{fig:1}
\end{figure}

Figure 1 illustrates an example of the proposed social de-anonymization problem that incorporates the feature of overlapping community. Here we note that our solution\footnote{Hereinafter our solution refers to the permutation matrix.} to the social network de-anonymization problem, denoted as $\hat{\mathbf{\Pi}}$, is not necessarily equal to the $\mathbf{\Pi_0}$. To quantify the difference between our solution and true permutation matrix, we introduce a metric called  ``node mapping error (NME)'', whose formal definition is provided as follows.

\begin{definition}{\textbf{(Node Mapping Error)}}
Given the estimated permutation matrix $\hat{\mathbf{\Pi}}$ and the true permutation matrix $\mathbf{\Pi_0}$, the node mapping error (NME) between $\hat{\mathbf{\Pi}}$ and $\mathbf{\Pi_0}$ is defined as
\begin{equation} d(\hat{\mathbf{\Pi}},\mathbf{\Pi_0})=\frac{1}{2}||\hat{\mathbf{\Pi}}-\mathbf{\Pi_0}||_F^2. \end{equation}
\end{definition}

Obviously $d(\hat{\mathbf{\Pi}},\mathbf{\Pi_0})$ equals to 0 if and only if two permutations are identical, and if there are $k$ nodes mapped mistakenly, then it equals to $k$. Therefore this metric reveals how much the estimated permutation of nodes deviates from the true one. Based on the definition of NME, the goal of the social network de-anonymization problem is thus to minimize NME.

As we have mentioned earlier, we have no prior knowledge of $\mathbf{\Pi_0}$, the true permutation matrix. 
Moreover, with the given $G_1$ and $G_2$, $\mathbf{\Pi_0}$ in fact can be viewed as a random variable whose probability distribution is conditioned on these two networks. Note that regarding $\mathbf{\Pi_0}$ as a random variable does not contradict the fact that there is only one determined true mapping between $G_1$ and $G_2$ in real situations, because this true mapping can be perceived as a realization of the random variable $\mathbf{\Pi_0}$. Therefore, we consider selecting $\mathbf{\hat{\Pi}}$, an estimation of the permutation matrix which minimizes the expected or mean value of the node mapping error (NME). We call this estimation as ``Minimum Mean Square Error (MMSE)'', since in the following Definition \ref{MMSE} we can discover that it is the minimizer of the node mapping error in the form of mean square. The formal definition of MMSE is as follows.

\begin{definition}(\textbf{The MMSE Estimator})
\label{MMSE}
Given the published network $G_1$, the auxiliary network $G_2$ and parameter set $\boldsymbol{\theta}$, the MMSE estimator is an estimation of permutation matrix which minimizes the number of mistakenly matched nodes in expectation, that is
\begin{equation}
\label{eh7}
\begin{aligned}
&\hat{\mathbf{\Pi}}=\arg \min_{\mathbf{\Pi} \in \mathbf{\Pi}^{n}} \mathbf{E}_{\mathbf{\Pi_0}}\{d(\mathbf{\Pi},\mathbf{\Pi_0})\}
\\&=\arg \min_{\mathbf{\Pi} \in \mathbf{\Pi}^{n}} \sum_{\mathbf{\Pi_0} \in \Pi^{n}}||\mathbf{\Pi}-\mathbf{\Pi_0}||_F^2Pr(\mathbf{\Pi_0}|G_1,G_2,\boldsymbol{\theta}),
\end{aligned}
\end{equation}
where $\textbf{E}_{\mathbf{\Pi_0}}$ means the expectation over all possible $\mathbf{\Pi_0}$.
The posterior probability $Pr(\mathbf{\Pi_0}|G_1,G_2,\boldsymbol{\theta})$ means the probability of a possible true permutation matrix $\mathbf{\Pi_0}$ given $G_1$, $G_2$ and $\boldsymbol{\theta}$.
\end{definition}


\textbf{Remark:} Recall that prior effort \cite{cite:arxiv-community} has leveraged Maximum A Posterior (MAP), which provides the solution with the highest probability being exactly identical to the true permutation.
MMSE and MAP characterize different aspects of minimizing NME. As far as we know, no previous work has learned de-anonymization under MMSE, which, however, is also of great significance as MAP in reducing NME.

The main notations used throughout the paper are summarized in Table 1.

				\begin{table}[!tb]
					\setlength{\extrarowheight}{3pt}
					
						\renewcommand\arraystretch{0.82}
						\caption{\bf Notions and Definitions}
						\centering
						\label{table:notation}
						\resizebox{1.0\columnwidth}{!}{
							
							\begin{tabular}{l|l}\hline\label{table:notation1}
								\textbf{Notation} & \textbf{Definition} \\\hline
								$G$ & Underlying social network \\ 
								$G_1,G_2$ & Published and auxiliary networks \\ 
								$V,V_1,V_2$ & Vertex sets of graphs $G$, $G_1$ and $G_2$ \\ 
								$E,E_{1},E_{2}$ & Edge sets of graphs $G$, $G_1$, $G_2$ \\  
								$s_1,s_2$ & Edge sampling probabilities of graphs $G_1$, $G_2$ \\  
								
                                $n$ & Total number of nodes \\
                                $Q$ & Total number of communities \\ 
								$q$ & One of the communities \\ 
                                $w_{ij}$ & The weight of node pair $(i,j)$ \\
                                								
                                $\boldsymbol{C_i}$ & Community representation of node $i$ \\ 
								$p_{\boldsymbol{C_i}\boldsymbol{C_j}}$ & Probability of edge existence between node $i$ and $j$ \\&with community representation $\boldsymbol{C_i}$ and $\boldsymbol{C_j}$ respectively\\ 
								$\boldsymbol{\theta}$ & Parameter set \\ 
                                $\mathbf{W}$ & The weight matrix \\
                                $\mathbf{U,A,B}$ & Adjacency matrices of $G$, $G_1$, $G_2$ \\  
                                $\mathbf{\Pi_0}(\pi_0)$ & True permutation matrix (True mapping) between $V_1$ and $V_2$ \\
                                $\mathbf{\Pi}(\pi)$ & A permutation matrix (A mapping) between $V_1$ and $V_2$ \\
                                $\mathbf{\hat{\Pi}}(\hat{\pi})$ & The MMSE estimator of de-anonymization problem\\& (the corresponding mapping)\\	
                                $\mathbf{\tilde{\Pi}}(\tilde{\pi})$ & The minimizer of weighted-edge matching problem\\& (the corresponding mapping)\\
                                $\mathbf{\Pi^n}$ & The set of $n \times n$ permutation matrices. \\
                                $g(\mathbf{\Pi})$ & The objective function of MMSE problem \\
                                \hline							
							\end{tabular}}
							
													
					\end{table}

\section{Analytical Aspect of De-anonymization Problem}\label{analytical}

In this section, we start to provide analysis of the social network de-anonymization problem that we have defined earlier. In doing so, we firstly prove that this problem is NP-hard. To facilitate the problem analysis, we then give an approximation to the original MMSE estimator and verify it under the expectation of different possible network structures. Furthermore, we validate this approximation by proving that the approximation ratio is not small for a single possible network structure.

\subsection{Transformation of MMSE Estimator}
As can be seen from the definition of MMSE (Eqn. (\ref{eh7}) in Section \ref{ModelB}), the posterior probability $Pr(\mathbf{\Pi_0}|G_1,G_2,\boldsymbol{\theta})$ still needs to be expressed more explicitly. Inspired by
the derivation in \cite{cite:arxiv-community}, we have the following theorem about the transformation of MMSE estimator.

\begin{theorem}
\label{th0}
Given the published graph $G_1$, the auxiliary graph $G_2$ and the parameter set $\boldsymbol{\theta}$, the MMSE estimator can be equivalently transformed into
\begin{equation}
\begin{aligned}
\label{eqn:th0}
\hat{\mathbf{\Pi}}&=\arg \max_{\mathbf{\Pi} \in \mathbf{\Pi^{n}}} \sum_{\mathbf{\Pi_0} \in \mathbf{\Pi^{n}}}||\mathbf{\Pi}-\mathbf{\Pi_0}||_F^2||\mathbf{W}\circ (\mathbf{\Pi_0} \mathbf{A}-\mathbf{B} \mathbf{\Pi_0})||_F^2
\\&=\arg \max_{\mathbf{\Pi} \in \mathbf{\Pi^{n}}} g(\mathbf{\Pi}),
\end{aligned}
\end{equation}
where $g(\mathbf{\Pi})=\sum_{\mathbf{\Pi_0} \in \mathbf{\Pi^{n}}}||\mathbf{\Pi}-\mathbf{\Pi_0}||_F^2||\mathbf{W}\circ (\mathbf{\Pi_0} \mathbf{A}-\mathbf{B} \mathbf{\Pi_0})||_F^2$ is the objective function of the MMSE problem, $\mathbf{W}$ is the weight matrix in which  $W(i,j)=\sqrt{w_{ij}}=W(j,i)$, $w_{ij}=\log \left(\frac{1-p_{\boldsymbol{C_i}\boldsymbol{C_j}}(s_1+s_2-s_1s_2)}{p_{\boldsymbol{C_i}\boldsymbol{C_j}}(1-s_1)(1-s_2)}\right)$ is weight between nodes $i$ and $j$,
and ``$\circ$'' denotes the Hadamard product.

\end{theorem}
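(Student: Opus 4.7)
The plan is to unwind the posterior $Pr(\mathbf{\Pi_0}|G_1,G_2,\boldsymbol{\theta})$ appearing in Eqn.~(\ref{eh7}) via Bayes, factorize the likelihood over independent node pairs under the overlapping SBM, and then relate the resulting quantity to $||\mathbf{W}\circ(\mathbf{\Pi_0A}-\mathbf{B\Pi_0})||_F^2$.

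First I would apply Bayes with a uniform prior on $\mathbf{\Pi^{n}}$, so that $Pr(\mathbf{\Pi_0}|G_1,G_2,\boldsymbol{\theta})\propto Pr(G_1,G_2|\mathbf{\Pi_0},\boldsymbol{\theta})$. By independence of edge samplings across pairs in the overlapping SBM, the likelihood factors as $\prod_{i<j}Pr(A_{ij},B_{\pi_0(i)\pi_0(j)}|\boldsymbol{\theta})$, where each factor takes one of four values that I would read directly off the model: e.g.\ $(0,0)\mapsto 1-p_{\boldsymbol{C_iC_j}}(s_1+s_2-s_1s_2)$, $(1,1)\mapsto p_{\boldsymbol{C_iC_j}}s_1s_2$, and similarly for the mixed cases.

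Next I would take logs and algebraically decompose each per-pair factor as a constant plus an $A$-linear, a $B$-linear, and an $AB$-bilinear term; matching coefficients across the four cases pins the bilinear coefficient down to exactly the claimed $w_{ij}=\log\bigl((1-p_{\boldsymbol{C_iC_j}}(s_1+s_2-s_1s_2))/(p_{\boldsymbol{C_iC_j}}(1-s_1)(1-s_2))\bigr)$. I would then argue that, when summed over $(i,j)$, the constant and both linear pieces drop out of the $\mathbf{\Pi_0}$-dependence: the $A$-linear and constant sums are manifestly invariant, while the $B$-linear sum, after the substitution $(k,\ell)=(\pi_0(i),\pi_0(j))$, reduces to a weighted sum over $G_2$ by appealing to the consistency of community representations under the true bijection. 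Finally, using the identity $(a-b)^2=a+b-2ab$ valid on $\{0,1\}$, together with Lemma 4 to absorb permutation multiplication inside the Frobenius norm, the surviving bilinear sum would be rewritten as $C_{1}-\tfrac{1}{4}||\mathbf{W}\circ(\mathbf{\Pi_0A}-\mathbf{B\Pi_0})||_F^2$.

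Plugging this back into Eqn.~(\ref{eh7}), I would use the $S_n$-symmetry $\sum_{\mathbf{\Pi_0}\in\mathbf{\Pi^{n}}}||\mathbf{\Pi}-\mathbf{\Pi_0}||_F^2=\text{const}$ (an orbit argument combined with Lemma 4) to eliminate the $\mathbf{\Pi}$-independent offset in the objective, and the remaining negative sign on $||\mathbf{W}\circ(\cdot)||_F^2$ would flip the $\arg\min$ into the stated $\arg\max$. The main obstacle I anticipate is precisely this last step: the Bayesian posterior is truly exponential rather than affine in $||\mathbf{W}\circ(\mathbf{\Pi_0A}-\mathbf{B\Pi_0})||_F^2$, so justifying the ``linear'' transformation requires borrowing the affine-proxy derivation of \cite{cite:arxiv-community} (or equivalently a first-order expansion around the concentrated posterior mode) and checking that the location of the argmin is preserved under the substitution.
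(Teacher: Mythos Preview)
Your route is genuinely different from the paper's. The paper retains the latent underlying graph $G$ explicitly: it writes the posterior as $\sum_{G\in\mathcal{G}_{\mathbf{\Pi}}}Pr(G)Pr(G_1|G)Pr(G_2|G,\mathbf{\Pi_0})$, introduces the minimal consistent graph $G^*_{\pi_0}=(V,E_1\cup\pi_0(E_1))$, and performs the marginalization over $G$ as a binomial sum over supergraphs of $G^*_{\pi_0}$; the weight $w_{ij}$ then emerges attached to the union indicator $E^{*ij}_{\pi_0}$, which is afterwards split via $E^{*ij}_{\pi_0}=\tfrac12(E_1^{ij}+E_2^{ij}+|\mathds{1}-\mathds{1}|)$. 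You instead marginalize $G$ out immediately by tabulating the four joint values of $(A_{ij},B_{\pi_0(i)\pi_0(j)})$, and then run a constant/linear/bilinear log-decomposition together with the $\{0,1\}$ identity $(a-b)^2=a+b-2ab$. Your path is shorter and makes the origin of $w_{ij}$ as the bilinear coefficient completely transparent; the paper's path is more explicit about the role of the latent $G$ and does not need to invoke community consistency to kill the $B$-linear term at this stage (that term is dispatched earlier, before weights appear, purely by bijectivity of $\pi_0$).

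On the obstacle you flag: you are right that the posterior is exponential, not affine, in $||\mathbf{W}\circ(\mathbf{\Pi_0A}-\mathbf{B\Pi_0})||_F^2$, so substituting the log-posterior into the MMSE weighted sum over $\mathbf{\Pi_0}$ is not a literal equivalence. The paper makes exactly this same move---in Eqn.~(9) it passes from the product to the log-sum under ``$\sim$'' with the justification that ``the log operator keeps the minimum $\mathbf{\Pi_0}$ invariant,'' and then inserts the resulting expression back into the MMSE sum---so your proposal is not missing anything the paper supplies; if anything, you are more candid about the step that is being taken.
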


\begin{proof}
Define $\mathcal{G_{\mathbf{\Pi}}}$ as the set of all realizations of the underlying network which is in consistency with the given $G_1$, $G_2$ and $\mathbf{\Pi}$. Then the MMSE estimator can be
written as
\begin{equation}
\label{def}
\hat{\mathbf{\Pi}}=\arg \min_{\mathbf{\Pi} \in \mathbf{\Pi^{n}}} \sum_{\mathbf{\Pi_0} \in \Pi^{n}}||\mathbf{\Pi}-\mathbf{\Pi_0}||_F^2\sum_{G \in \mathcal{G_{\mathbf{\Pi}}}} Pr(G,\mathbf{\Pi_0}|G_1,G_2,\boldsymbol{\theta}).\nonumber
\end{equation}

Let us focus on the conditional probability $Pr(G,\mathbf{\Pi_0}|G_1,G_2,\boldsymbol{\theta})$ in Eqn. (2). According to Bayesian's formula, along with the fact that $G_1$ and $G_2$ are sampled independently from each other, we obtain
\begin{equation}
\label{eq10}
\begin{aligned}
Pr(G,\mathbf{\Pi_0}|G_1,G_2,\boldsymbol{\theta})&=\frac{Pr(G,G_1,G_2,\mathbf{\Pi_0})}{Pr(G_1,G_2)} \\&\sim
Pr(G) Pr(G_1|G) Pr(G_2|G,\mathbf{\Pi_0}),
\end{aligned}
\end{equation}
where $a \sim b$ means that $a$ and $b$ are different only in parameters unrelated to $\mathbf{\Pi_0}$, which will not change the value of $\arg \max$ or $\arg \min$.\footnote{There is a notation abuse for $\sim$ between the one in Eqn. (\ref{e1}) and here.}
Note that the parameter set $\boldsymbol{\theta}$ remains invariant, so we need not add $\boldsymbol{C_i}$ and $\boldsymbol{\theta}$ into further consideration.

Set $E^{ij}$ as the indicator variable about whether an edge exists between nodes $i$ and $j$ in the edge set $E$. If an edge exists then $E^{ij}=1$, otherwise $E^{ij}=0$. The same rule also holds for indicators $E_1^{ij}$ and $E_2^{ij}$. Therefore Eqn. (\ref{eq10}) can be further written as
\begin{equation}
\label{e7}
\begin{aligned}
&\sum_{G \in \mathcal{G_{\mathbf{\Pi}}}}Pr(G) Pr(G_1|G) Pr(G_2|G,\Pi_0)
\\&=\sum_{G \in \mathcal{G_{\mathbf{\Pi}}}} \prod_{i<j}^n s_1^{E_1^{ij}} (1-s_1)^{E^{ij}-E_1^{ij}} s_2^{E_2^{\pi_0(i)\pi_0(j)}}
\\
&\quad \cdot (1-s_2)^{E^{ij}-E_2^{\pi_0(i)\pi_0(j)}} p_{\boldsymbol{C_i}\boldsymbol{C_j}}^{E^{ij}}(1-p_{\boldsymbol{C_i}\boldsymbol{C_j}})^{1-E^{ij}} \\
&= \prod_{i<j}\left(\frac{s_1}{1-s_1}\right)^{E_1^{ij}}\left(\frac{s_2}{1-s_2}\right)^{E_2^{\pi_0(i)\pi_0(j)}}
\\&\quad \cdot \sum_{G\in \mathcal{G_{\mathbf{\Pi}}}} \left((1-s_1)(1-s_2)
\frac{p_{\boldsymbol{C_i}\boldsymbol{C_j}}}{1-p_{\boldsymbol{C_i}\boldsymbol{C_j}}}\right)^{E^{ij}} \\
&\sim \sum_{G\in \mathcal{G_{\mathbf{\Pi}}}} \left((1-s_1)(1-s_2)\frac{p_{\boldsymbol{C_i}\boldsymbol{C_j}}}{1-p_{\boldsymbol{C_i}\boldsymbol{C_j}}}\right)^{E^{ij}}.
\end{aligned}
\end{equation}

Note that the last equivalence in Eqn. (\ref{e7}) holds since the term $\left(\frac{s_1}{1-s_1}\right)^{E_1^{ij}}$ does not depend on $\pi_0$ and the product $\prod_{i<j}\left(\frac{s_2}{1-s_2}\right)^{E_2^{\pi_0(i)\pi_0(j)}}$ is independent of $\pi_0$ due to the bijective property of $\pi_0$.

Then we define $G_{\pi_0}^*$ as the graph which has the smallest number of edges in $\mathcal{G_{\mathbf{\Pi}}}$. Equivalently $G_{\pi_0}^*=(V,E_1\cup \pi_0(E_1))$, where $\pi_0(E_1)=\{(\pi_0(i),\pi_0(j))\boldsymbol{|}(i,j)\in E_1\}$. Now we set $E_{\pi_0}^*$ as the edge set of $G_{\pi_0}^*$, and $E_{\pi_0}^{*ij}$ as the indicator variable between nodes $i$ and $j$, i.e., $E_{\pi_0}^{*ij}=1$ if $(i,j) \in E_{\pi_0}^*$ and $E_{\pi_0}^{*ij}=0$ otherwise. Then we sum up all the graphs in $\mathcal{G_{\mathbf{\Pi}}}$

\begin{equation}
\label{e8}
\begin{aligned}
&\sum_{G\in \mathcal{G_{\mathbf{\Pi}}}} \left((1-s_1)(1-s_2)\frac{p_{\boldsymbol{C_i}\boldsymbol{C_j}}}{1-p_{\boldsymbol{C_i}\boldsymbol{C_j}}}\right)^{E^{ij}}
\\&= \prod_{i<j}^n \left((1-s_1)(1-s_2)\frac{p_{\boldsymbol{C_i}\boldsymbol{C_j}}}{1-p_{\boldsymbol{C_i}\boldsymbol{C_j}}}\right)^{E_{\pi_0}^{*ij}} \\ &\quad \cdot \sum_{k=0}^{E_{ij}-E_{\pi_0}^{*ij}} C_{E_{ij}-E_{\pi_0}^{*ij}}^{k} \left((1-s_1)(1-s_2)\frac{p_{\boldsymbol{C_i}\boldsymbol{C_j}}}{1-p_{\boldsymbol{C_i}\boldsymbol{C_j}}}\right)^k. \end{aligned}
\end{equation}

Note that in Eqn. (\ref{e8}) last multiplicative factor ,
\begin{equation}
\sum_{k=0}^{E_{ij}-E_{\pi_0}^{*ij}} C_{E_{ij}-E_{\pi_0}^{*ij}}^{k} \left((1-s_1)(1-s_2)\frac{p_{\boldsymbol{C_i}\boldsymbol{C_j}}}{1-p_{\boldsymbol{C_i}\boldsymbol{C_j}}}\right)^k, \nonumber
\end{equation}
yields as a Bernoulli sum, therefore Eqn. (\ref{e8}) can be further written as

\begin{equation}
\label{e9}
\begin{aligned}
&\sum_{G\in \mathcal{G_{\mathbf{\Pi}}}} \left((1-s_1)(1-s_2)\frac{p_{\boldsymbol{C_i}\boldsymbol{C_j}}}{1-p_{\boldsymbol{C_i}\boldsymbol{C_j}}}\right)^{E^{ij}}
\\&=\prod_{i<j}^n \left((1-s_1)(1-s_2)\frac{p_{\boldsymbol{C_i}\boldsymbol{C_j}}}{1-p_{\boldsymbol{C_i}\boldsymbol{C_j}}}\right)^{E_{\pi_0}^{*ij}} \\ &\quad \cdot\left(1+(1-s_1)(1-s_2)\frac{p_{\boldsymbol{C_i}\boldsymbol{C_j}}}{1-p_{\boldsymbol{C_i}\boldsymbol{C_j}}}\right)^{1-E_{\pi_0}^{*ij}} \\
&\sim \prod_{i<j}^n \left(\frac{p_{\boldsymbol{C_i}\boldsymbol{C_j}}(1-s_1)(1-s_2)}{1-p_{\boldsymbol{C_i}\boldsymbol{C_j}}(s_1+s_2-s_1s_2)}\right)^{E_{\pi_0}^{*ij}}\\
&\sim \sum_{i<j}^n E_{\pi_0}^{*ij} \log\left(\frac{p_{\boldsymbol{C_i}\boldsymbol{C_j}}(1-s_1)(1-s_2)}{1-p_{\boldsymbol{C_i}\boldsymbol{C_j}}(s_1+s_2-s_1s_2)}\right).
\end{aligned}
\end{equation}

Here the last line in Eqn. (\ref{e9}) holds since the log operator keeps the minimum $\mathbf{\Pi_0}$ invariant.
Note that $G_{\pi_0}^*=(V,E_1\cup \pi_0(E_1))$. Then we can find that $E_{\mathbf{\Pi_0}}^{*ij}=0$ if and only if both $E_1^{ij}$ and $E_2^{ij}$ are equal to $0$, and $E_{\mathbf{\Pi_0}}^{*ij}=1$ occurs in the following three conditions:

\begin{itemize}
\item $(i,j) \in E_1$ but $(i,j) \notin E_2$. Note that this condition also ensures that $(\pi_0(i),\pi_0(j)) \in E_2$.
\item $(i,j) \in E_2$ but $(i,j) \notin E_1$. Note that this condition also ensures that
    $(\pi_0(i),\pi_0(j)) \notin E_2$.
\item $(i,j) \in E_1$ and $(i,j) \in E_2$. Note that this condition also ensures that
    $(\pi_0(i),\pi_0(j)) \in E_2$.
\end{itemize}

Synthesizing all the above conditions, we can express $E_{\pi_0}^{*ij}$ as
\begin{equation}
\label{e13}
E_{\pi_0}^{*ij}=\frac{1}{2}(E_1^{ij}+E_2^{ij}+|\mathds{1}\{(i,j)\in E_1\}-\mathds{1}\{(\pi_0(i),\pi_0(j))\in E_2\}|),
\end{equation}
where $\mathds{1}\{P\}=1$ if the random event $P$ happens and  $\mathds{1}\{P\}=0$ otherwise.
Substituting Eqn. (\ref{e13}) into the last line in Eqn. (\ref{e9}), we get
\begin{equation}
\label{e15}
\begin{aligned}
&\arg \min_{\mathbf{\Pi} \in \mathbf{\Pi^{n}}} \sum_{i<j}^n E_{\pi_0}^{*ij} \log\left(\frac{p_{\boldsymbol{C_i}\boldsymbol{C_j}(1-s_1)(1-s_2)}}{1-p_{\boldsymbol{C_i}\boldsymbol{C_j}}(s_1+s_2-s_1s_2)}\right) \\
&=\arg \max_{\mathbf{\Pi} \in \mathbf{\Pi^{n}}} \sum_{i<j}^n w_{ij} |\mathds{1}\{(i,j)\in E_1\}-\mathds{1}\{(\pi_0(i),\pi_0(j))\in E_2\}| \\
&=\arg \max_{\mathbf{\Pi} \in \mathbf{\Pi^{n}}} ||\mathbf{W}\circ (\mathbf{\Pi_0} \mathbf{A}-\mathbf{B} \mathbf{\Pi_0}) ||_{F}^2,
\end{aligned}
\end{equation}
where $w_{ij}=\log \left(\frac{1-p_{\boldsymbol{C_i}\boldsymbol{C_j}}(s_1+s_2-s_1s_2)}{p_{\boldsymbol{C_i}\boldsymbol{C_j}}(1-s_1)(1-s_2)}\right)$ is weight between nodes $i$ and $j$, $\mathbf{W}$ is the symmetric weight matrix where $\mathbf{W}(i,j)=\sqrt{w_{ij}}=\mathbf{W}(j,i)$, and ``$\circ$'' denotes the Hadamard product.

Substituting Eqn. (\ref{e15}) into Eqn. (\ref{def}), now we can formulate the MMSE estimator as
\begin{equation}
\label{eqn:19}
\hat{\mathbf{\Pi}}=\arg \max_{\mathbf{\Pi} \in \mathbf{\Pi^{n}}} \sum_{\mathbf{\Pi_0} \in \mathbf{\Pi^{n}}}||\mathbf{\Pi}-\mathbf{\Pi_0}||_F^2||\mathbf{W}\circ (\mathbf{\Pi_0} \mathbf{A}-\mathbf{B} \mathbf{\Pi_0})||_F^2.
\end{equation}
\end{proof}

\textbf{Remark:} Additionally, to simplify the form of $||\mathbf{W}\circ (\mathbf{\Pi_0} \mathbf{A}-\mathbf{B} \mathbf{\Pi_0}) ||_F^2$, we use $\mathbf{{\Pi_0\hat{A}}}$ to represent $
\mathbf{W}\circ \mathbf{\Pi_0} \mathbf{A}$, and $\mathbf{\hat{B}\Pi_0}$ to represent $\mathbf{W}\circ \mathbf{B} \mathbf{\Pi_0}$\footnote{We should clarify that we only provide a simpler form to represent $
\mathbf{W}\circ \mathbf{\Pi_0} \mathbf{A}$ and $\mathbf{W}\circ \mathbf{B} \mathbf{\Pi_0}$, and it does NOT imply that $\mathbf{W}\circ \mathbf{A}=\mathbf{\hat{A}}$ and $\mathbf{W}\circ \mathbf{B}=\mathbf{\hat{B}}$. But some operations under this new notation still hold, for example, multiplying a permutation matrix does not change the value of the Frobenius norm, i.e., $||\mathbf{\Pi_0} \mathbf{\hat{A}}-\mathbf{\hat{B}} \mathbf{\Pi_0}||_F^2=||\mathbf{W}\circ (\mathbf{\Pi_0} \mathbf{A}-\mathbf{B} \mathbf{\Pi_0})||_F^2=||\mathbf{W}\circ \mathbf{\Pi_0^T}(\mathbf{\Pi_0} \mathbf{A}-\mathbf{B} \mathbf{\Pi_0}))||_F^2=||\mathbf{W}\circ (\mathbf{A}-\mathbf{\Pi_0^T}\mathbf{B} \mathbf{\Pi_0})||_F^2 $ and $||\mathbf{\Pi_0} \mathbf{\hat{A}}-\mathbf{\hat{B}} \mathbf{\Pi_0}||_F^2=||\mathbf{\Pi_0}\mathbf{\hat{A}}\mathbf{\Pi_0^T}-\mathbf{\hat{B}} ||_F^2$.}.
Therefore we can rewrite the MMSE estimator in Eqn. (\ref{eqn:19}) as
\begin{equation}
\label{eqn:20}
\begin{aligned}
\hat{\mathbf{\Pi}}&=\arg \max_{\mathbf{\Pi} \in \mathbf{\Pi^{n}}} \sum_{\mathbf{\Pi_0} \in \mathbf{\Pi^{n}}}||\mathbf{\Pi}-\mathbf{\Pi_0}||_F^2||\mathbf{\Pi_0} \mathbf{\hat{A}}-\mathbf{\hat{B}} \mathbf{\Pi_0}||_F^2,
\end{aligned}
\end{equation}
and $g(\mathbf{\Pi})=\sum_{\mathbf{\Pi_0} \in \mathbf{\Pi^{n}}}||\mathbf{\Pi}-\mathbf{\Pi_0}||_F^2||\mathbf{\Pi_0} \mathbf{\hat{A}}-\mathbf{\hat{B}} \mathbf{\Pi_0}||_F^2$. In the following analysis, we use the form in Eqn. (\ref{eqn:20}). In Section \ref{algorithmicA}, we will discuss the condition under which $\mathbf{W}\circ\mathbf{A}=\mathbf{\hat{A}}$ and $\mathbf{W}\circ\mathbf{B}=\mathbf{\hat{B}}$.

\subsection{NP-hardness of Solving the MMSE Estimator}
Since we have derived a more explicit form of MMSE estimator, we are interested in whether there exists a polynomial-time algorithm that can solve the MMSE problem. However, as we will prove in the sequel, this problem is NP-hard, meaning that no polynomial time (pseudo-polynomial time) approximation algorithm exists for solving the MMSE estimator.

\begin{Proposition}
Solving the MMSE estimator is an NP-hard problem. There is no polynomial time or pseudo-polynomial time approximation algorithm for this problem with any multiplicative approximation guarantee unless P=NP.
\end{Proposition}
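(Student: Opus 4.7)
The plan is to establish NP-hardness by a polynomial-time reduction from a classical NP-hard problem whose combinatorial structure mirrors the inner factor $||\mathbf{\Pi_0}\hat{\mathbf{A}}-\hat{\mathbf{B}}\mathbf{\Pi_0}||_F^2$. Two natural candidates present themselves: the Quadratic Assignment Problem (QAP), since a single summand of $g(\mathbf{\Pi})$ is already QAP-shaped in $\mathbf{\Pi_0}$, and Subgraph Isomorphism, since de-anonymization is intrinsically about a vertex bijection that preserves edge relationships. I would try QAP first because the bilinear form is an exact match.

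Before writing any reduction I would simplify $g(\mathbf{\Pi})$. Using $||\mathbf{\Pi}-\mathbf{\Pi_0}||_F^2=2n-2\,\mathrm{tr}(\mathbf{\Pi}^T\mathbf{\Pi_0})$ for permutation matrices, the maximization of $g(\mathbf{\Pi})$ over $\mathbf{\Pi}\in\mathbf{\Pi^{n}}$ is equivalent to the minimization of $\mathrm{tr}(\mathbf{\Pi}^T\mathbf{M})$, where
\[\mathbf{M}=\sum_{\mathbf{\Pi_0}\in\mathbf{\Pi^{n}}}||\mathbf{\Pi_0}\hat{\mathbf{A}}-\hat{\mathbf{B}}\mathbf{\Pi_0}||_F^2\,\mathbf{\Pi_0}.\]
If $\mathbf{M}$ were explicitly known, the residual optimization is a linear assignment problem solvable in polynomial time, so the computational hardness must concentrate in evaluating the entries of $\mathbf{M}$, a weighted sum of $n!$ permutation-valued terms. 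I would then show that even approximating an individual entry of $\mathbf{M}$ within any multiplicative factor is \#P-hard, typically via reduction from the permanent of a $0/1$ matrix, which transfers hardness to the MMSE estimator.

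For the stronger claim that no multiplicative approximation exists, even in pseudo-polynomial time, I would engineer a $0$-versus-positive gap in the optimum. Concretely, I would choose $\mathbf{A},\mathbf{B}$ and the weight matrix $\mathbf{W}$ (equivalently, the community probabilities $p_{\mathbf{C_i}\mathbf{C_j}}$ and sampling rates $s_1,s_2$) so that a ``yes'' answer of the encoded NP-hard decision problem forces $\max_{\mathbf{\Pi}} g(\mathbf{\Pi})=0$, whereas a ``no'' answer forces it to be strictly positive. Any multiplicative $\tau$-approximation must output a value $\geq \tau\cdot\mathrm{OPT}$, so it is pinned to $0$ in the ``yes'' case and strictly positive in the ``no'' case; distinguishing these scenarios would solve the encoded NP-hard problem, ruling out such an approximation unless $\mathrm{P}=\mathrm{NP}$.

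The main obstacle I foresee is taming the exponential sum over $\mathbf{\Pi_0}\in\mathbf{\Pi^{n}}$. Different $\mathbf{\Pi_0}$'s may contribute to $\mathbf{M}$ in ways that either cancel or overwhelm the signal carrying the embedded NP-hard instance, so the reduction must either tune parameters so that only a polynomially structured family of $\mathbf{\Pi_0}$'s contributes non-trivially, or else exploit the automorphism structure of $\hat{\mathbf{A}},\hat{\mathbf{B}}$ to collapse the sum into a closed form amenable to combinatorial analysis. Bridging this gap between the $n!$ summands and the combinatorial signature of the encoded instance is, in my estimation, the hardest step of the proof.
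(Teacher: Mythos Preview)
Your reduction of $\arg\max g(\mathbf{\Pi})$ to the minimization of $\mathbf{tr}(\mathbf{\Pi}^T\mathbf{M})$ with $\mathbf{M}=\sum_{\mathbf{\Pi_0}}f(\mathbf{\Pi_0})\,\mathbf{\Pi_0}$ is correct, but the plan to show that the entries of $\mathbf{M}$ are $\#$P-hard via the permanent cannot go through. The inner cost $f(\mathbf{\Pi_0})=||\mathbf{W}\circ(\mathbf{\Pi_0}\mathbf{A}-\mathbf{B}\mathbf{\Pi_0})||_F^2$ is only a degree-two polynomial in the entries of $\mathbf{\Pi_0}$, so each $\mathbf{M}_{ab}=\sum_{\mathbf{\Pi_0}:(\mathbf{\Pi_0})_{ab}=1}f(\mathbf{\Pi_0})$ is a weighted combination of first and second moments of permutation-matrix entries over the coset $\{\sigma:\sigma(a)=b\}$. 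Those moments have elementary closed forms (for instance $\sum_{\sigma:\sigma(a)=b}(\mathbf{\Pi_\sigma})_{ik}(\mathbf{\Pi_\sigma})_{lj}$ equals one of $(n-1)!$, $(n-2)!$, $(n-3)!$, or $0$ depending solely on the coincidence pattern among $\{a,i,l\}$ and $\{b,k,j\}$), so $\mathbf{M}$ is computable in polynomial time and the residual linear assignment is then solved by the Hungarian algorithm. A permanent-style reduction needs degree-$n$ monomials in the entries of $\mathbf{\Pi_0}$, not degree two; your own simplification therefore does not establish hardness and, taken at face value, actually points toward tractability of the objective in Theorem~\ref{th0}. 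Your $0$-versus-positive gap plan has a separate obstruction: forcing $\max_{\mathbf{\Pi}} g(\mathbf{\Pi})=0$ requires $f(\mathbf{\Pi_0})=0$ for every $\mathbf{\Pi_0}\neq\mathbf{\Pi}$, i.e., all but one permutation must be an isomorphism of the weighted instance, which collapses the construction to a trivial graph incapable of encoding a hard decision problem.

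For comparison, the paper does not attempt a Karp-style reduction from an NP-hard problem at all. It instead recasts the MMSE objective as a $1$-median instance on a clique whose $n!$ vertices are the permutations themselves, with node weight $||\mathbf{\Pi_0}\hat{\mathbf{A}}-\hat{\mathbf{B}}\mathbf{\Pi_0}||_F^2$ and edge length $4n-||\mathbf{\Pi}-\mathbf{\Pi_0}||_F^2$, and then argues a query-complexity lower bound: any $1$-median routine must inspect $\Omega(n!)$ edges, which is super-polynomial in the input size $n^2$. This argument treats each permutation as an opaque oracle value rather than exploiting the quadratic structure of $f$; your linear-assignment observation leverages precisely that structure, which is why the two approaches are in tension.
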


\begin{proof}
We derive the proof in two steps: $1$. modeling this problem as a clique with weighted nodes and edges, and $2$. reducing the $1$-median problem to our MMSE problem\footnote{Note that $1$-median itself is not NP-hard, but we demonstrate that the lower bound of $1$-median is of $O(n)$ and when applied in our problem it becomes larger than polynomial.}. Here the 1-median problem \cite{Kariv1979ALGORITHM} refers to that: Given a connected undirected graph $G=(V,E)$ in which no isolated vertices exist and each node $v$ is endowed with a nonnegative weight $\omega(v)$, find the vertex $v^*$ which minimizes weighted sum.
\begin{equation}
H(v^*)=\sum_{v\in V}\omega(v) \cdot D(v,v^*),\nonumber
\end{equation}
where $D(v,v^*)$ means the shortest path length (also nonnegative) between nodes $v$ and $v^*$.

\textbf{Reduction from $1$-median problem:} Our construction of the clique works as follows: Suppose there are $n$ nodes in $G_1$ and $G_2$. Then for any permutation matrix $\mathbf{\Pi} \in \mathbf{\Pi^n}$, we have
\begin{equation}
\begin{aligned}
&\hat{\mathbf{\Pi}}=\arg \max_{\mathbf{\Pi} \in \Pi^{n}} \sum_{\mathbf{\Pi_0} \in \mathbf{\Pi^{n}}}||\mathbf{\Pi}-\mathbf{\Pi_0}||_F^2|| \mathbf{\Pi_0} \mathbf{\hat{A}}-\mathbf{\hat{B}} \mathbf{\Pi_0}||_F^2
\\&=\arg \min_{\mathbf{\Pi} \in \Pi^{n}} \sum_{\mathbf{\Pi_0} \in \mathbf{\Pi^{n}}}(4n-||\mathbf{\Pi}-\mathbf{\Pi_0}||_F^2)||\mathbf{\Pi_0} \mathbf{\hat{A}}-\mathbf{\hat{B}} \mathbf{\Pi_0}||_F^2,\nonumber
\end{aligned}
\end{equation}
in which all the multiplicative factors are nonnegative. Since the number of elements in $\mathbf{\Pi^n}$ is $n!$, then we construct a clique with $n!$ nodes, with every node representing an $n \times n$ permutation matrix. We set the distance between two nodes $i$ and $j$ as $D(i,j)=4n-||\mathbf{\Pi}(i)-\mathbf{\Pi}(j)||_F^2$. Note that this distance satisfies the triangular equality $D(i, k)+D(k, j) \geq D(i, j)$, which assures that the edge directly connecting nodes $i$ and $j$ has the minimum distance among all possible paths between them. So the shortest path length between nodes $i$ and $j$ is just the distance $D(i,j)$. We define the weight of node $i$ as $\omega(i)=||\mathbf{\Pi_0\hat{A}-\hat{B}\Pi_0}||_F^2$ (Note that each $\mathbf{\Pi_0}$ is a node in the graph with $n!$ nodes). For ease of understanding, Fig. \ref{fig:3} illustrates the constructed clique with $5$ nodes.

	\begin{figure}[htbp]
		\centering	\includegraphics[width=1\linewidth]{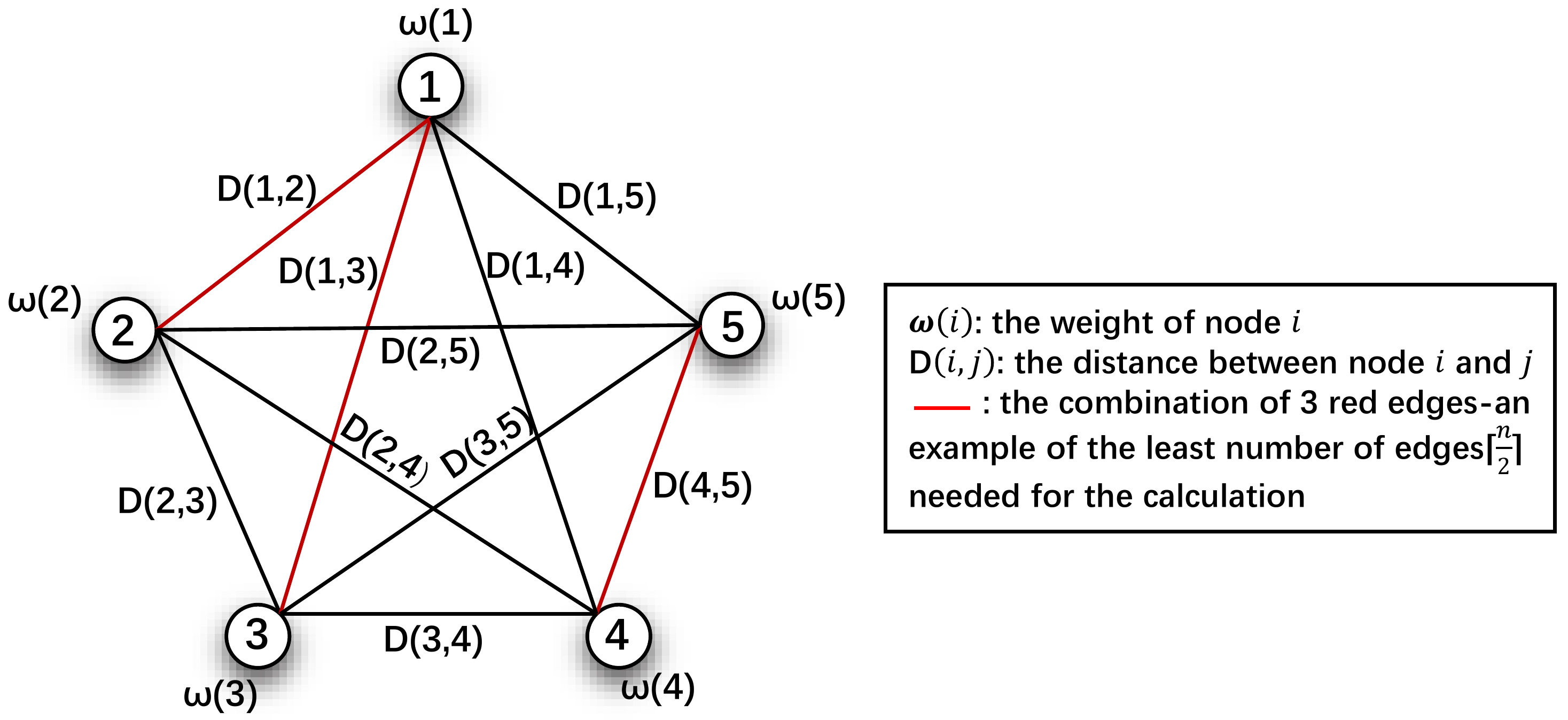}
		\caption{An Illustration of the Constructed Clique with $5$ Nodes}
		\label{fig:3}
	\end{figure}

\textbf{The Lower Bound for $1$-Median Problem:} Based on the above construction, we equivalently transform our problem into the form of a $1$-median problem:
\begin{equation}
i_0=\arg\min_{i^*\in V}\sum_{i\in V} \omega(i) \cdot D(i,i^*).\nonumber
\end{equation}

For a $1$-median problem with $n$ nodes, it is easy to discover that we need to calculate at least $\lceil n/2 \rceil$ times, since we need at least $\lceil n/2 \rceil$  edges to form an edge set such that the endpoints of all edges in this edge set cover all the vertices in the graph. Or else one node will not be calculated for any edge connecting it, thus no information about this node is revealed, and then we can not judge whether this node is the one we intend to find. The red lines in Fig. \ref{fig:3} illustrates an example that when there are $5$ nodes, the least number of edges needed to be calculated is $\lceil 5/2 \rceil=3$. For our MMSE estimator problem we have $n!$  nodes, thus the calculation times is at least $(n/2)!$, which means that we need to calculate $(n/2)!$ permutation matrices. Compared with the size of the problem, $n^2$, the complexity turns out to be $\Omega(((\sqrt{n})/2)!)=\Omega(\sqrt{n}!)$, which exceeds polynomial.

\end{proof}

The NP-hardness of MMSE estimator shows the impossibility to pursue an exact algorithm or any approximation algorithm with multiplicative guarantee. Thus we need to simplify this problem by conducting reasonable approximation to make it possible to solve this problem, with certain tolerance of mapping error. In the following we propose one way to approximate this problem, the analysis of which will indicate that the error arose by this approximation can be bounded.

\subsection{Approximation of the MMSE estimator}\label{estimateMMSE}
As we have just stated above, the NP-hardness of MMSE problem urges us to find proper approximation for the original problem. Recall that MMSE involves all the possible true mappings, the number of which is $n!$, thus leading to fairly prohibitive computational cost. To tackle the difficulty, we firstly transform the original MMSE problem into a weighted-edge matching problem (WEMP), which, as we will define and present more details later, simplifies the form of objective function of the original MMSE problem and makes it tractable. Then we demonstrate that this transformation is valid, meaning that the solution of WEMP will not deviate much from the solution of the original MMSE problem by proving its high approximation ratio. Definition \ref{def8} provides the formal statement of WEMP.

\begin{definition}{\textbf{(Weighted-Edge Matching Problem)}}
\label{def8}
Given the adjacent matrices of $G_1$ and $G_2$, denoted as $\mathbf{A}$ and $\mathbf{B}$ respectively, and the weight matrix $\mathbf{W}$, the weight-edge matching problem is to find
\begin{equation}
\begin{aligned}
\mathbf{\tilde{\Pi}}&=\arg \min_{\mathbf{\Pi}\in \mathbf{ \Pi^n}} ||\mathbf{\Pi \hat{A} - \hat{B} \Pi}||_F^2.\nonumber
\end{aligned}
\end{equation}
\end{definition}

Hereinafter we discuss the following two aspects of WEMP:
\begin{itemize}
\item How do we transform from the original MMSE problem into WEMP?
\item How is the validity of this transformation?
\end{itemize}

\subsubsection{\textbf{The Idea of Transformation}}
\label{idea}
We intend to transform the original problem of solving the MMSE estimator into WEMP. The idea of this transformation can be interpreted in the following sense: for any fixed $\mathbf{\Pi}$, define a set $S_k(\mathbf{\Pi}),0\leq k \leq n$, any element of which is an $n \times n$ permutation matrix $\mathbf{\Pi_0}$ such that $d(\mathbf{\Pi},\mathbf{\Pi_0})=2k$. It is obvious that $S_0(\mathbf{\Pi})=\{\mathbf{\Pi}\}$ and $S_1(\mathbf{\Pi})=\emptyset$. Then we can reform the original problem as
\begin{equation}
\label{e16}
\hat{\mathbf{\Pi}}=\arg \max_{\mathbf{\Pi} \in \mathbf{\Pi^{n}}} \sum_{k=0}^n k \left(\sum_{\mathbf{\Pi_0} \in S_k(\mathbf{\Pi})} ||\mathbf{\Pi_0 \hat{A} - \hat{B} \Pi_0}||_F^2\right).
\end{equation}

Zooming in on Eqn. (\ref{e16}), we propose our idea of transforming it into WEMP. To present our idea clearly, we divide our analysis into three parts; First we analyze a single term, $||\mathbf{\Pi_0 \hat{A} - \hat{B} \Pi_0}||_F^2$, where $\mathbf{\Pi_0}\in S_2(\mathbf{\Pi})$; Then we analyze $\mathbf{\Pi_0}\in S_k(\mathbf{\Pi})$ based on the analysis of $\mathbf{\Pi_0}\in S_2(\mathbf{\Pi})$; Finally we analyze the R.H.S of Eqn. (\ref{e16}) based on Lemma \ref{lemma1}. In the sequel we unfold the three parts in a more detailed manner:

\underline{1. Analysis of $||\mathbf{\Pi_0 \hat{A} - \hat{B} \Pi_0}||_F^2$ where $\mathbf{\Pi_0}\in S_2(\mathbf{\Pi})$}

Now we focus on the value of $||\mathbf{\Pi_0}\mathbf{\hat{A}}-\mathbf{\hat{B}}\mathbf{\Pi_0}||_F^2$, where $\mathbf{\Pi_0}\in S_2(\mathbf{\Pi})$. Note that any permutation in $S_2(\mathbf{\Pi})$ only causes matching error on one pair of nodes. Thus if we consider $\mathbf{\Pi}=\mathbf{\tilde{\Pi}}$ and set one specific $\mathbf{\Pi_0}\in S_2(\mathbf{\tilde{\Pi}})$, which differs from $\mathbf{\tilde{\Pi}}$ only in the $i_{th}$ and $j_{th}$ row, we can derive that
\begin{equation}
\label{eqn:28}
\begin{aligned}
&||\mathbf{\Pi_0}\mathbf{\hat{A}}-\mathbf{\hat{B}}\mathbf{\Pi_0}||_F^2-||\mathbf{\tilde{\Pi}}\mathbf{\hat{A}}-\mathbf{\hat{B}}\mathbf{\tilde{\Pi}}||_F^2 \\&=||\mathbf{W}\circ(\mathbf{\Pi_0}\mathbf{{A}}\mathbf{\Pi_0^T}-\mathbf{{B}})||_F^2-||\mathbf{W}\circ(\mathbf{\tilde{\Pi}}\mathbf{{A}}\mathbf{\tilde{\Pi}^T}-\mathbf{{B}})||_F^2
\\&=2\bigg(\sum_{k\neq i,j}^n [(\mathbf{W}\circ(\mathbf{\Pi_0{A}\Pi_0^T-B}))_{ik}^2-(\mathbf{W}\circ(\mathbf{\tilde{\Pi}{A}\tilde{\Pi}^T-B}))_{ik}^2]
\\&\quad+\sum_{k\neq i,j}^n [(\mathbf{W}\circ(\mathbf{\Pi_0{A}\Pi_0^T-B}))_{jk}^2-(\mathbf{W}\circ(\mathbf{\tilde{\Pi}{A}\tilde{\Pi}^T-B}))_{jk}^2]\bigg)
\\&=2\bigg(\sum_{k\neq i,j}^n w_{ik}[(\mathbf{\Pi_0A\Pi_0^T-B})_{ik}^2-(\mathbf{\tilde{\Pi}A\tilde{\Pi}^T-B})_{ik}^2]
\\&\quad+\sum_{k\neq i,j}^n w_{jk}[(\mathbf{\Pi_0A\Pi_0^T-B})_{jk}^2-(\mathbf{\tilde{\Pi}A\tilde{\Pi}^T-B})_{jk}^2]\bigg)
\\&=2\bigg(\sum_{k\neq i,j}^n w_{ik}[\mathbf{\Pi_0A\Pi_0^T}-\mathbf{\tilde{\Pi}A\tilde{\Pi}^T}]_{ik}\psi(\mathbf{B}_{ik})
\\&\quad+\sum_{k\neq i,j}^n w_{jk}[\mathbf{\Pi_0A\Pi_0^T}-\mathbf{\tilde{\Pi}A\tilde{\Pi}^T}]_{jk}\psi(\mathbf{B}_{jk})\bigg),
\end{aligned}
\end{equation}
where $\psi(x)=-1$ if $x=1$ and $\psi(x)=1$ if $x=0$. Fig. \ref{Fig:example} illustrates how Eqn. (\ref{eqn:28}) can be derived intuitively. Note that if $\mathbf{\Pi_0}$ and $\mathbf{\tilde{\Pi}}$ are different only in the $i_{th}$ and $j_{th}$ rows, then the difference between $||\mathbf{W}\circ(\mathbf{\Pi_0}\mathbf{{A}}\mathbf{\Pi_0^T}-\mathbf{{B}})||_F^2$ and $||\mathbf{W}\circ(\mathbf{\tilde{\Pi}}\mathbf{{A}}\mathbf{\tilde{\Pi}^T}-\mathbf{{B}})||_F^2 $ exists in the red circles in Fig. \ref{Fig:example}, which corresponds to the third line in Eqn. (\ref{eqn:28}). Note that the intersection part, i.e., the stars in Fig. \ref{Fig:example}, does not contribute to the $||\mathbf{W}\circ(\mathbf{\Pi_0}\mathbf{{A}}\mathbf{\Pi_0^T}-\mathbf{{B}})||_F^2$ and $||\mathbf{W}\circ(\mathbf{\tilde{\Pi}}\mathbf{{A}}\mathbf{\tilde{\Pi}^T}-\mathbf{{B}})||_F^2$.

\begin{figure}[htbp]

     	\centering
\includegraphics[width=0.48\textwidth]{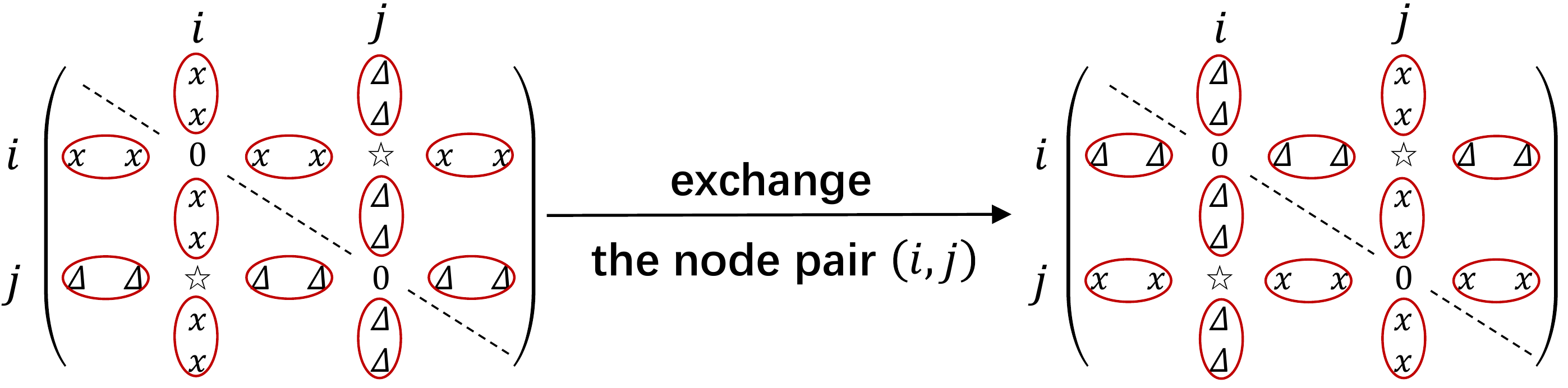}
		\caption{\small An example of the effect of $\mathbf{\Pi_0}$ which differs from $\mathbf{\tilde{\Pi_0}}$ only in the $i_{th}$ and $j_{th}$ row. The triangles denote the $j_{th}$ row and column the ``x''es denote the $i_{th}$ row and column of $\mathbf{W}\circ(\mathbf{\Pi_0}\mathbf{{A}}\mathbf{\Pi_0^T}-\mathbf{{B}})$.
And the triangles denote the $i_{th}$ row and column the ``x''es denote the $j_{th}$ row and column of $\mathbf{W}\circ(\mathbf{\tilde{\Pi}}\mathbf{{A}}\mathbf{\tilde{\Pi}^T}-\mathbf{{B}})$.
Note that the difference between $\mathbf{W}\circ(\mathbf{\Pi_0}\mathbf{{A}}\mathbf{\Pi_0^T}-\mathbf{{B}})$ and $\mathbf{W}\circ(\mathbf{\tilde{\Pi}}\mathbf{{A}}\mathbf{\tilde{\Pi}^T}-\mathbf{{B}})$
exists in the $i_{th}$ and $j_{th}$ row and column  except the intersections (those 0s and stars).
}
		\label{Fig:example}
\end{figure}

Note that since $\mathbf{\Pi_0}$ and $\mathbf{\tilde{\Pi}}$ are different in the $i_{th}$ and $j_{th}$ rows, then $(\mathbf{\tilde{\Pi}A\tilde{\Pi}^T})_{ik}=(\mathbf{{\Pi_0}A{\Pi_0^T}})_{jk}$. Therefore
\begin{equation}
\label{equation33}
\begin{aligned}
&||\mathbf{\Pi_0}\mathbf{\hat{A}}-\mathbf{\hat{B}}\mathbf{\Pi_0}||_F^2-||\mathbf{\tilde{\Pi}}\mathbf{\hat{A}}-\mathbf{\hat{B}}\mathbf{\tilde{\Pi}}||_F^2
\\&=2\bigg(\sum_{k\neq i,j}^n w_{ik}\psi(\mathbf{B}_{ik})([\mathbf{\Pi_0A\Pi_0^T}]_{ik}-[\mathbf{\Pi_0A\Pi_0^T}]_{jk})
\\&\quad+
\sum_{k\neq i,j}^n w_{jk}\psi(\mathbf{B}_{jk})([\mathbf{\Pi_0A\Pi_0^T}]_{jk}-[\mathbf{{\Pi_0}A{\Pi_0^T}}]_{ik})\bigg)
\\&=2\bigg(\sum_{k\neq i,j}^{n}(w_{ik}\psi(\mathbf{B}_{ik})-w_{jk}\psi(\mathbf{B}_{jk}))
\\& \quad \cdot [(\mathbf{\Pi_0} \mathbf{A}\mathbf{\Pi_0^T})_{ki}-(\mathbf{\Pi_0} \mathbf{A}\mathbf{\Pi_0^T})_{kj}]\bigg),
\end{aligned}
\end{equation}

\begin{figure}[htbp]
     	\centering
		\includegraphics[width=1\linewidth,height=0.3\linewidth]{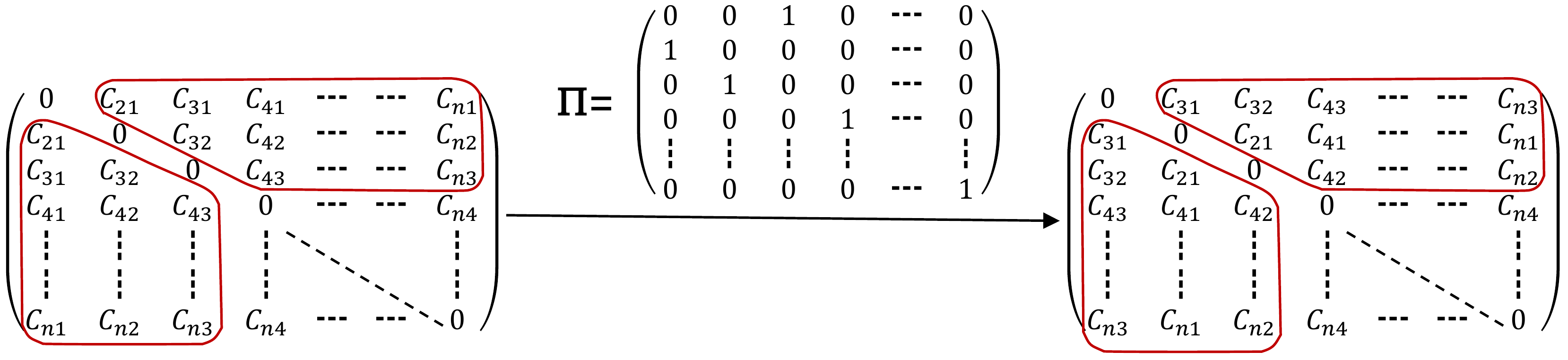}
		\caption{\small An example of the effect of $\mathbf{\Pi} \in S_3(\mathbf{\tilde{\Pi}})$, where we set $\mathbf{\tilde{\Pi}}=\mathbf{I}$. $\mathbf{I}$ is the identity matrix. Note that under the $\mathbf{\Pi}$ above the arrow, which differs from $\mathbf{I}$ only in the first three rows (columns). Thus the possible difference between two matrices only exists in the red circles, with $6n-6$ elements in the matrix involved. }
		\label{fig:loctwofig}
\end{figure}

Since $G_1$ and $G_2$ are independently sampled from $G$, then $\mathbf{A}$ and $\mathbf{B}$ are independent. Thus we can first take the expectation of $\mathbf{B}$ on both sides of Eqn. (\ref{equation33}). Note that the probability for the edge existence between nodes $i$ and $j$ in $\mathbf{B}$ is $p_{\boldsymbol{C_iC_j}}s_2$, therefore
$\textbf{E}[\psi(B_{ij})]=(-1)p_{\boldsymbol{C_iC_j}}s_2+(1-p_{\boldsymbol{C_iC_j}}s_2)
=1-2p_{\boldsymbol{C_iC_j}}s_2$. Hence, taking the expectation of $\mathbf{B}$ on both sides of Eqn. (\ref{equation33}) and we obtain

\begin{equation}
\begin{aligned}
&\textbf{E}_{\mathbf{B}}(||\mathbf{\Pi_0}\mathbf{\hat{A}}-\mathbf{\hat{B}}\mathbf{\Pi_0}||_F^2-||\mathbf{\tilde{\Pi}}\mathbf{\hat{A}}-\mathbf{\hat{B}}\mathbf{\tilde{\Pi}}||_F^2)
\\&=2\sum_{k\neq i,j}^{n}[w_{ik}[(1-2p_{\boldsymbol{C_iC_k}}s_2)-w_{jk}(1-2p_{\boldsymbol{C_jC_k}}s_2)) \\& \quad \cdot((\mathbf{\Pi_0} \mathbf{A}\mathbf{\Pi_0^T})_{ki}-(\mathbf{\Pi_0} \mathbf{A}\mathbf{\Pi_0^T})_{kj})].\nonumber
\end{aligned}
\end{equation}
Similarly, taking the expectation of $\mathbf{A}$ on both sides, we have
\begin{equation}
\label{e19}
\begin{aligned}
&\textbf{E}_{\mathbf{A},\mathbf{B}}(||\mathbf{\Pi_0}\mathbf{\hat{A}}-\mathbf{\hat{B}}\mathbf{\Pi_0}||_F^2-||\mathbf{\tilde{\Pi}}\mathbf{\hat{A}}-\mathbf{\hat{B}}\mathbf{\tilde{\Pi}}||_F^2)
\\&=2\sum_{k\neq i,j}^{n}(w_{ik}(1-2p_{\boldsymbol{C_iC_k}}s_2)-w_{jk}(1-2p_{\boldsymbol{C_jC_k}}s_2)) \\& \quad \cdot( p_{\boldsymbol{C_{\pi_0(i)}C_{{\pi_0}(k)}}}-p_{\boldsymbol{C_{{\pi_0}(j)}C_{{\pi_0}(k)}}})s_1
\\& =2\sum_{k\neq i,j}^n \Delta_{i,j,k,\pi_0},\nonumber
\end{aligned}
\end{equation}
where
\begin{equation}
\begin{aligned}
\Delta_{i,j,k,\pi_0}&=(w_{ik}(1-2p_{\boldsymbol{C_iC_k}}s_2)-w_{jk}(1-2p_{\boldsymbol{C_jC_k}}s_2)) \\& \quad \cdot( p_{\boldsymbol{C_{\pi_0(i)}C_{{\pi_0}(k)}}}-p_{\boldsymbol{C_{{\pi_0}(j)}C_{{\pi_0}(k)}}})s_1.
\nonumber
\end{aligned}
\end{equation}


$\Delta_{i,j,k,\pi_0}$ reflects a part of the difference $||\mathbf{\Pi_0}\mathbf{\hat{A}}-\mathbf{\hat{B}}\mathbf{\Pi_0}||_F^2-||\mathbf{\tilde{\Pi}}\mathbf{\hat{A}}-\mathbf{\hat{B}}\mathbf{\tilde{\Pi}}||_F^2$ caused by the difference of a single element in matrices $\mathbf{\Pi_0}\mathbf{\hat{A}}-\mathbf{\hat{B}}\mathbf{\Pi_0}$ and $\mathbf{\tilde{\Pi}}\mathbf{\hat{A}}-\mathbf{\hat{B}}\mathbf{\tilde{\Pi}}$.\footnote{For example, the difference of the corresponding element (with the same notation, e.g., (i,k) in the left matrix and (j,k) in the right matrix, both of which are triangles.)  in two matrices in Fig. \ref{Fig:example} inside one of the red circles}
Since we consider the average case of all possible $\mathbf{\Pi_0}$, we also consider the average value of $\Delta_{i,j,\pi_0}$,
which we set as $\hat{\Delta}=\textbf{E}_{i,j,\pi_0}(\Delta_{i,j,\pi_0})$.
Note that $\textbf{E}_{\mathbf{A},\mathbf{B}}(||\mathbf{\Pi_0}\mathbf{\hat{A}}-\mathbf{\hat{B}}\mathbf{\Pi_0}||_F^2-||\mathbf{\tilde{\Pi}}\mathbf{\hat{A}}-\mathbf{\hat{B}}\mathbf{\tilde{\Pi}}||_F^2)>0$
since $\mathbf{\tilde{\Pi}}$ is the minimizer of $||\mathbf{\Pi_0}\mathbf{\hat{A}}-\mathbf{\hat{B}}\mathbf{\Pi_0}||_F^2$. Therefore $\hat{\Delta}=\textbf{E}_{i,j,\pi_0}(\Delta_{i,j,\pi_0})>0$.

\underline{2. Analysis of $\sum_{\mathbf{\Pi_0} \in S_k(\mathbf{\Pi})} ||\mathbf{\Pi_0 \hat{A} - \hat{B} \Pi_0}||_F^2$}

Now we move to the second part involved in our idea.  We first focus on $S_k(\mathbf{\Pi_0})$, and count the number of elements in $S_k(\mathbf{\Pi_0})$, denoted as $|S_k|$. Note that if there are $k$ mismatched nodes in a graph with $n$ nodes, there are $C_n^k$ possible sets of mismatched nodes. We define $|T_k|$ as the number of elements in each possible set, and can get $|S_k|=C_n^k|T_k|$. For $|T_k|$, we can find that it satisfies
\begin{equation}
\label{eqn:25}
\begin{aligned}
|T_k|&=(k-1)(|T_{k-2}|+(k-2)(|T_{k-3}|+(k-3)(|T_{k-4}|+...)))
\\&=\sum_{t=1}^{k-1} (\prod_{i=1}^{t} (k-i))|T_{k-t-1}|.
\end{aligned}
\end{equation}

Consider $|T_{k}|$ and $|T_{k-1}|$ in Eqn. (\ref{eqn:25}), we can discover that
\begin{equation}
|T_k|=(k-1)(|T_{k-2}|+|T_{k-1}|)\geq (k-1)|T_{k-1}|, k\geq 2.\nonumber
\end{equation}

Therefore we obtain the relationship between $|S_{k}|$ and $|S_{k-1}|$ as
\begin{equation}
\label{eqn:27}
|S_k|=C_n^k|T_k|\geq (k-1) \frac{C_n^k}{C_n^{k-1}} |S_{k-1}|=(1-\frac{1}{k})(n-k+1)|S_{k-1}|,
\end{equation}
where $k \geq 2$. Eqn. (\ref{eqn:27}) shows that when $k$ is much smaller than $n$, then $\frac{|S_k|}{|S_{k-1}|}=(1-\frac{1}{k})(n-k+1)$ is large; when $k$ gets close to $n$, then $\frac{|S_k|}{|S_{k-1}|}$ approaches $1$, which means that $|S_k|$ and $|S_{k-1}|$ are almost the same.


Now we consider $\mathbf{\Pi_0}\in S_k(\mathbf{\Pi})$. Note that for any $\mathbf{\Pi_0}\in S_k(\mathbf{\Pi})$, there are $k$ rows and columns that may cause the difference between $||\mathbf{W}\circ(\mathbf{\Pi_0}\mathbf{{A}}\mathbf{\Pi_0^T}-\mathbf{{B}})||_F^2$
and $||\mathbf{W}\circ(\mathbf{\tilde{\Pi}}\mathbf{{A}}\mathbf{\tilde{\Pi}^T}-\mathbf{{B}})||_F^2$.
Fig. \ref{fig:loctwofig} illustrates an example of $\mathbf{\Pi_0} \in S_3(\mathbf{\Pi})$. Therefore we can discover for any $\mathbf{\Pi_0}\in S_k(\mathbf{\Pi})$, the number of node pairs $(i,j)$ which may influence the difference between $||\mathbf{W}\circ(\mathbf{\Pi_0}\mathbf{{A}}\mathbf{\Pi_0^T}-\mathbf{{B}})||_F^2$
and $||\mathbf{W}\circ(\mathbf{\tilde{\Pi}}\mathbf{{A}}\mathbf{\tilde{\Pi}^T}-\mathbf{{B}})||_F^2$ is approximately $\sum_{i=1}(n-i)=\frac{(2n-k-1)k}{2}$.\footnote{For example, in Fig. \ref{fig:loctwofig} when $k=3$ the number is $6n-6$. Although there may be some elements which do not cause error, such as the two stars in Fig. \ref{Fig:example}, the number of this kinds of node pairs can be neglected when $n$ is large enough.} Thus, denoting $N_k$ as this number of node pair, we can obtain
\begin{equation}
\begin{aligned}
N_k&=\frac{(2n-k-1)k}{2}|S_k|
\\&\geq \frac{(2n-k-1)k}{2}
(1-\frac{1}{k})(n-k+1)|S_{k-1}|
\\&=(1-\frac{1}{k})(n-k+1)\frac{(2n-k-1)k}{(2n-k)(k-1)}N_{k-1}
\\&=(n-k+1)\frac{2n-k-1}{2n-k}N_{k-1}.\nonumber
\end{aligned}
\end{equation}

Therefore in average, we have
\begin{equation}
\begin{aligned}
\label{i21}
&\sum_{\mathbf{\Pi_0} \in S_k} ||\mathbf{\Pi_0 \hat{A} - \hat{B} \Pi_0}||_F^2
= N_k \hat{\Delta}
\\& \geq (n-k+1)\frac{2n-k-1}{2n-k}N_{k-1} \hat{\Delta}
\\&\geq (n-k+1)\frac{2n-k-1}{2n-k}\sum_{\mathbf{\Pi_0} \in S_{k-1}} ||\mathbf{\Pi_0 \hat{A} - \hat{B} \Pi_0}||_F^2
\\&\approx (n-k+1)\sum_{\mathbf{\Pi_0} \in S_{k-1}} ||\mathbf{\Pi_0 \hat{A} - \hat{B} \Pi_0}||_F^2,
\end{aligned}
\end{equation}
where the last approximation holds since $k\leq n$ and when $n\rightarrow \infty$, $\frac{2n-k-1}{2n-k}\rightarrow 1$.

Therefore, we can claim that in average, if $k_1>k_2$, then
\begin{equation}
\sum_{\mathbf{\Pi_0} \in S_{k_1}} ||\mathbf{\Pi_0 \hat{A} - \hat{B} \Pi_0}||_F^2>
\sum_{\mathbf{\Pi_0} \in S_{k_2}} ||\mathbf{\Pi_0 \hat{A} - \hat{B} \Pi_0}||_F^2.
\end{equation}

\underline{3. Maximum Value Under Sequence Inequality}

Based on the analysis above, if we set $\mathbf{\Pi}=\mathbf{\tilde{\Pi}}$, then we find that when $k=0$, the minimum value in the set $\{0,2,3,\cdots,n\}$. Thus $\sum_{\mathbf{\Pi_0} \in S_0(\mathbf{\tilde{\Pi}})} ||\mathbf{\Pi_0 \hat{A} - \hat{B} \Pi_0}||_F^2=||\mathbf{\tilde{\Pi} \hat{A} - \hat{B} \tilde{\Pi}}||_F^2$ is also the minimum value in the set
\begin{equation}
\footnotesize
\left\{
\begin{aligned}
 &\sum_{\mathbf{\Pi_0} \in S_{0}(\mathbf{\tilde{\Pi}})} ||(\mathbf{\Pi_0 \hat{A} - \hat{B} \Pi_0})||_F^2, \sum_{\mathbf{\Pi_0} \in S_{2}(\mathbf{\tilde{\Pi}})} ||(\mathbf{\Pi_0 \hat{A} - \hat{B} \Pi_0})||_F^2,
 \\&\sum_{\mathbf{\Pi_0} \in S_{3}(\mathbf{\tilde{\Pi}})} ||(\mathbf{\Pi_0 \hat{A} - \hat{B} \Pi_0})||_F^2,...,\sum_{\mathbf{\Pi_0} \in S_{n}(\mathbf{\tilde{\Pi}})} ||(\mathbf{\Pi_0 \hat{A} - \hat{B} \Pi_0})||_F^2\nonumber
 \end{aligned}
 \right\}.
 \end{equation}

Thus according to Lemma \ref{lemma1}, we know that in average case, by setting $\mathbf{\Pi}$ in the original MMSE objective function
\begin{equation}
\sum_{\mathbf{\Pi_0} \in \Pi^{n}}||\mathbf{\Pi}-\mathbf{\Pi_0}||_F^2||\mathbf{W}\circ (\mathbf{\Pi_0} \mathbf{A}-\mathbf{B} \mathbf{\Pi_0})||_F^2 \nonumber
\end{equation}
equal to $\mathbf{\tilde{\Pi}}$, the minimizer of WEMP, then this original MMSE objective function reaches its largest value under Sequence Inequality.

Moreover, note that if we do not set $\mathbf{\tilde{\Pi}}=\mathbf{\hat{\Pi}}$, for example set $\mathbf{\tilde{\Pi}}=\mathbf{\Pi} \in S_k(\mathbf{\Pi})$, we can verify that $\mathbf{\Pi}$ does not make the objective function in Eqn. (\ref{e16}) larger than $\mathbf{\Pi_0}$ since
\begin{equation}
\begin{aligned}
&0\mathbf{||\mathbf{\Pi}\mathbf{\hat{A}}\mathbf{\Pi^T}-\mathbf{\hat{B}}||_F^2}+2k\mathbf{||\mathbf{\Pi}\mathbf{\hat{A}}\mathbf{\Pi^T}-\mathbf{\hat{B}}||_F^2}
\\&\geq 2k\mathbf{||\mathbf{\hat{\Pi}}\mathbf{\hat{A}}\mathbf{\hat{\Pi}^T}-\mathbf{\hat{B}}||_F^2}+0\mathbf{||\mathbf{\Pi}\mathbf{\hat{A}}\mathbf{\Pi^T}-\mathbf{\hat{B}}||_F^2}
\nonumber,
\end{aligned}
\end{equation}
which means that the Sequency Inequality preserves that when $||\mathbf{\Pi_0} \mathbf{\hat{A}} \mathbf{\Pi_0^T}-\mathbf{\hat{B}}||_F^2$ achieves its minimum, then $||\Pi-\Pi_0||_F^2$ also achieves its minimum. Therefore by setting $\mathbf{\tilde{\Pi}}=\mathbf{\hat{\Pi}}$ we can achieve the largest value of the original MMSE problem under this sequence inequality.

However, as we noted earlier, we can only transform the original MMSE problem into WEMP in an average case of network structures. This implies that the transformation is not necessarily the best approximation of a single network structure. In the following we further analyze the validity of this transformation in a possible network structure by showing the approximation ratio of our transformation is large (at least larger than $0.5$).

\subsubsection{The Validity of Transformation}

As we have stated above, $\mathbf{\tilde{\Pi}}$ does not necessarily achieve the maximum of the original MMSE problem for a specific network structure. That is to say there may exist error in $g(\mathbf{\tilde{\Pi}})$ and $g(\mathbf{\hat{\Pi}})$, where $g(\mathbf{\hat{\Pi}})$ is the maximum value of the original MMSE objective function and $g(\mathbf{\tilde{\Pi}})$ is the value of MMSE objective function when $\mathbf{\Pi}$ equals to the minimizer of WEMP.
If we demonstrate that this error can be bounded within a small range, then we can say that this approximation is \emph{valid}. Theorem \ref{th1} shows that under the mild condition indicated by Inequality (\ref{i21}), we can get approximation ratio $g(\mathbf{\tilde{\Pi}})/g(\mathbf{\hat{\Pi}})$ larger than $0.5$, which, to some extent, makes our estimation reasonable.

\begin{theorem}
\label{th1}
Given the published graph $G_1$, the auxiliary graph $G_2$, the parameter set $\boldsymbol{\theta}$ and the weight matrix $\mathbf{W}$, in average case
we have the approximation ratio $g(\mathbf{\tilde{\Pi}})/g(\mathbf{\hat{\Pi}})$ larger than $0.5$.
\end{theorem}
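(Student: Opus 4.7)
The plan is to sandwich both $g(\mathbf{\tilde{\Pi}})$ and $g(\mathbf{\hat{\Pi}})$ in terms of the same permutation-independent total mass $\Sigma := \sum_{\mathbf{\Pi_0}\in\mathbf{\Pi^n}}||\mathbf{\Pi_0\hat{A}-\hat{B}\Pi_0}||_F^2$, and then show the two sandwich bounds differ by at most a factor of $2$. The first step is a shell decomposition of $g$. Since $\{S_k(\mathbf{\Pi})\}_{k=0}^n$ partitions $\mathbf{\Pi^n}$ for every $\mathbf{\Pi}$ and $||\mathbf{\Pi}-\mathbf{\Pi_0}||_F^2=2k$ on $S_k(\mathbf{\Pi})$, one immediately obtains
\begin{equation*}
g(\mathbf{\Pi})=2\sum_{k=0}^{n}k\,M_k(\mathbf{\Pi}),\quad M_k(\mathbf{\Pi}):=\!\!\sum_{\mathbf{\Pi_0}\in S_k(\mathbf{\Pi})}\!\!||\mathbf{\Pi_0\hat{A}-\hat{B}\Pi_0}||_F^2,\quad \sum_{k=0}^nM_k(\mathbf{\Pi})=\Sigma,
\end{equation*}
where $\Sigma$ is independent of $\mathbf{\Pi}$. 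The upper bound $g(\mathbf{\hat{\Pi}})\le 2n\Sigma$ then follows trivially from $k\le n$.

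Next I would establish a matching lower bound $g(\mathbf{\tilde{\Pi}})\ge n\Sigma$ by invoking the recursive estimate $M_k(\mathbf{\tilde{\Pi}}) \gtrsim (n-k+1)\,M_{k-1}(\mathbf{\tilde{\Pi}})$ from Eqn.~(\ref{i21}). Iterating this shows $M_k(\mathbf{\tilde{\Pi}})$ grows roughly like $n!/(n-k)!$ times $M_0(\mathbf{\tilde{\Pi}})$, so the weighted mass of $\Sigma$ concentrates sharply near $k=n$. A direct factorial-series computation (reindexing $j=n-k$ and using $\sum_{j\ge 0}1/j!=\sum_{j\ge 0}j/j!=e$ as in Stirling's formula) then gives $\sum_{k=0}^n k\,M_k(\mathbf{\tilde{\Pi}})\gtrsim (n-1)\Sigma$, hence $g(\mathbf{\tilde{\Pi}})\gtrsim 2(n-1)\Sigma$. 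Dividing by the earlier upper bound yields $g(\mathbf{\tilde{\Pi}})/g(\mathbf{\hat{\Pi}})\gtrsim (n-1)/n \ge 1/2$ for $n\ge 2$, which proves the claim with considerable slack (and in fact approaches $1$ as $n\to\infty$). Lemma~\ref{lemma1} (Sequence Inequality) enters implicitly here: it is the combinatorial reason the monotone alignment between shell index $k$ and shell mass $M_k(\mathbf{\tilde{\Pi}})$ makes $\mathbf{\tilde{\Pi}}$ nearly optimal, and it can alternatively be invoked to derive the same lower bound by pairing the sorted sequences $\{2k\}$ and $\{M_k(\mathbf{\tilde{\Pi}})\}$.

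The main obstacle I anticipate is propagating the recursion Eqn.~(\ref{i21}) — which was derived in expectation over the edge-sampling randomness in $\mathbf{A},\mathbf{B}$ — through the concentration argument so that the final inequality truly holds in the \emph{average case} sense stated in the theorem. I would address this by taking $\mathbf{E}_{\mathbf{A},\mathbf{B}}$ throughout: the expectation commutes with the shell decomposition because $|S_k(\mathbf{\Pi})|$ is purely combinatorial and deterministic; the quantity $\mathbf{E}[\Sigma]$ factors consistently out of the bounds on both $\mathbf{E}[g(\mathbf{\tilde{\Pi}})]$ and $\mathbf{E}[g(\mathbf{\hat{\Pi}})]$; and what remains reduces to the per-shell inequality $\mathbf{E}[\sum_k k\,M_k(\mathbf{\tilde{\Pi}})]\ge (n/2)\,\mathbf{E}[\Sigma]$, which is exactly where Eqn.~(\ref{i21}) is deployed. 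The only subtlety is ensuring that ``average case'' refers to averaging over $(\mathbf{A},\mathbf{B})$ rather than over $\mathbf{\Pi_0}$, which aligns with the usage in Section~\ref{idea} where the recursion is derived.
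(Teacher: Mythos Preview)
Your proposal is correct but follows a genuinely different route from the paper. The paper bounds the \emph{difference} $g(\mathbf{\hat{\Pi}})-g(\mathbf{\tilde{\Pi}})$: it applies the triangle inequality $||\mathbf{\hat{\Pi}}-\mathbf{\Pi_0}||_F^2-||\mathbf{\tilde{\Pi}}-\mathbf{\Pi_0}||_F^2\le ||\mathbf{\hat{\Pi}}-\mathbf{\tilde{\Pi}}||_F^2=2\beta n$ to obtain $g(\mathbf{\hat{\Pi}})-g(\mathbf{\tilde{\Pi}})\le 2\beta n\,\Sigma$, and then lower-bounds $g(\mathbf{\tilde{\Pi}})$ by a $D_1/D_2$ split of $\Sigma$ (showing the tail $k>\rho n$ dominates via Stirling) to conclude $(g(\mathbf{\hat{\Pi}})-g(\mathbf{\tilde{\Pi}}))/g(\mathbf{\tilde{\Pi}})\le \beta/\rho$, hence ratio $\ge 1/(1+\beta)\ge 1/2$. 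You instead sandwich the two values directly against the same total mass $\Sigma$: the trivial $g(\mathbf{\hat{\Pi}})\le 2n\Sigma$ and the concentration bound $g(\mathbf{\tilde{\Pi}})\gtrsim 2(n-O(1))\Sigma$ obtained by iterating Eqn.~(\ref{i21}) backwards to get $M_{n-j}(\mathbf{\tilde{\Pi}})\lesssim M_n(\mathbf{\tilde{\Pi}})/j!$.

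Your argument is more elementary (no triangle inequality, no auxiliary parameter $\beta$) and in fact yields the sharper conclusion that the ratio tends to $1$, not merely exceeds $1/2$. The paper's route has the compensating advantage that it keeps $\beta=||\mathbf{\tilde{\Pi}}-\mathbf{\hat{\Pi}}||_F^2/(2n)$ explicit, which is what drives their subsequent Corollary refining the bound to $1/(1+\beta\chi)$. One minor slip: with only the one-sided recursion $M_k\ge (n-k+1)M_{k-1}$ from Eqn.~(\ref{i21}), the clean identity $\sum_k kM_k=(n-1)\Sigma$ needs equality; the inequality version gives $\sum_k kM_k\ge n\Sigma-\sum_j jM_{n-j}\ge n\Sigma-eM_n\ge (n-e)\Sigma$, which is equally sufficient. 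Your final paragraph on what ``average case'' means is more careful than the paper itself, which uses Eqn.~(\ref{i21}) inside an otherwise pointwise argument without making the averaging explicit.
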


\begin{proof}
First we have
\begin{equation}
\begin{aligned}
\label{e28}
&g(\mathbf{\hat{\Pi}})-g(\mathbf{\tilde{\Pi}})
\\&=\sum_{\mathbf{\Pi_0} \in \mathbf{\Pi^n}}(||\mathbf{\hat{\Pi}}-\mathbf{\Pi_0}||_F^2-||\mathbf{\tilde{\Pi}}-\mathbf{\Pi_0}||_F^2)||\mathbf{\Pi_0}\mathbf{\hat{A}}-\mathbf{\hat{B}}\mathbf{\Pi_0}||_F^2.
\end{aligned}
\end{equation}

Then we divide the set $\mathbf{\Pi^n}$ into two subsets:
\begin{equation}
\begin{aligned}
&\mathbf{\Pi_1^n}=\{\mathbf{\Pi}\in \mathbf{\Pi^n}\boldsymbol{|}||\mathbf{\hat{\Pi}}-\mathbf{\Pi_0}||_F^2> ||\mathbf{\tilde{\Pi}}-\mathbf{\Pi_0}||_F^2 \};
\\ &\mathbf{\Pi_2^n}=\{\mathbf{\Pi}\in \mathbf{\Pi^n}\boldsymbol{|}||\mathbf{\hat{\Pi}}-\mathbf{\Pi_0}||_F^2< ||\mathbf{\tilde{\Pi}}-\mathbf{\Pi_0}||_F^2 \}.\nonumber
\end{aligned}
\end{equation}

Following that we divide the Eqn. (\ref{e28})
into two sets, $\mathbf{\Pi_1^n}$ and $\mathbf{\Pi_2^n}$:
\begin{equation}
\label{e30}
\begin{aligned}
&g(\mathbf{\hat{\Pi}})-g(\mathbf{\tilde{\Pi}})\\&=\sum_{\mathbf{\Pi_0}\in \mathbf{\Pi_1^n}}(||\mathbf{\hat{\Pi}}-\mathbf{\Pi_0}||_F^2-||\mathbf{\tilde{\Pi}}-\mathbf{\Pi_0}||_F^2)||\mathbf{\Pi_0}\mathbf{\hat{A}}-\mathbf{\hat{B}}\mathbf{\Pi_0}||_F^2
\\&\quad -\sum_{\mathbf{\Pi_0}\in \mathbf{\Pi_2^n}}(||\mathbf{\tilde{\Pi}}-\mathbf{\Pi_0}||_F^2-||\mathbf{\hat{\Pi}}-\mathbf{\Pi_0}||_F^2)||\mathbf{\Pi_0}\mathbf{\hat{A}}-\mathbf{\hat{B}}\mathbf{\Pi_0}||_F^2
\\&\leq ||\mathbf{\tilde{\Pi}}-\mathbf{\hat{\Pi}}||_F^2\sum_{\mathbf{\Pi_0}\in \mathbf{\Pi_1^n}}||\mathbf{\Pi_0}\mathbf{\hat{A}}-\mathbf{\hat{B}}\mathbf{\Pi_0}||_F^2.
\end{aligned}
\end{equation}
where the last inequality holds due to the triangular inequality $||\mathbf{\hat{\Pi}}-\mathbf{\Pi_0}||_F^2-||\mathbf{\tilde{\Pi}}-\mathbf{\Pi_0}||_F^2\leq ||\mathbf{\tilde{\Pi}}-\mathbf{\hat{\Pi}}||_F^2$ and the term $\sum_{\mathbf{\Pi_0}\in \mathbf{\Pi_2^n}}(||\mathbf{\tilde{\Pi}}-\mathbf{\Pi_0}||_F^2-||\mathbf{\hat{\Pi}}-\mathbf{\Pi_0}||_F^2)||\mathbf{\Pi_0}\mathbf{\hat{A}}-\mathbf{\hat{B}}\mathbf{\Pi_0}||_F^2$
is positive. Then we have
\begin{equation}
\label{i31}
\begin{aligned}
\frac{g(\mathbf{\hat{\Pi}})-g(\mathbf{\tilde{\Pi}})}{g(\mathbf{\tilde{\Pi}})}&=\frac{(||\mathbf{\tilde{\Pi}}-\mathbf{\hat{\Pi}}||_F^2)\sum_{\mathbf{\Pi_0}\in \mathbf{\Pi_1^n}}||\mathbf{\Pi_0}\mathbf{\hat{A}}-\mathbf{\hat{B}}\mathbf{\Pi_0}||_F^2}{\sum_{\mathbf{\Pi_0} \in \mathbf{\Pi^{n}}}||\mathbf{\tilde{\Pi}}-\mathbf{\Pi_0}||_F^2||\mathbf{\Pi_0} \mathbf{\hat{A}}-\mathbf{\hat{B}} \mathbf{\Pi_0}||_F^2}
\\&\leq \frac{2\beta n\sum_{\mathbf{\Pi_0}\in \mathbf{\Pi^n}}||\mathbf{\Pi_0}\mathbf{\tilde{A}}-\mathbf{\tilde{B}}\mathbf{\Pi_0}||_F^2}{\sum_{\mathbf{\Pi_0} \in \mathbf{\Pi^{n}}}||\mathbf{\tilde{\Pi}}-\mathbf{\Pi_0}||_F^2||\mathbf{\Pi_0} \tilde{\mathbf{A}}-\tilde{\mathbf{B}} \mathbf{\Pi_0}||_F^2}.
\end{aligned}
\end{equation}
where $||\mathbf{\tilde{\Pi}}-\mathbf{\hat{\Pi}}||_F^2=2\beta n$ and $\beta \in [0,1]$ is the ratio between the number of mistakenly matched nodes and that of all the nodes. The last inequality in ($\ref{i31}$) holds because $\mathbf{\Pi_1^n} \subset \mathbf{\Pi^n}$.


Now we divide the sum $\sum_{\mathbf{\Pi_0}\in \mathbf{\Pi^n}}||\mathbf{\Pi_0}\mathbf{\hat{A}}-\mathbf{\hat{B}}\mathbf{\Pi_0}||_F^2$
into two parts:
\begin{equation}
D_1=\sum_{k\leq\rho n} \sum_{\mathbf{\Pi_0}\in \mathbf{\Pi^n}}||\mathbf{\Pi_0}\mathbf{\hat{A}}-\mathbf{\hat{B}}\mathbf{\Pi_0}||_F^2;
\nonumber
\end{equation}
\begin{equation}
D_2=\sum_{\rho n< k\leq n}\sum_{\mathbf{\Pi_0}\in \mathbf{\Pi^n}}||\mathbf{\Pi_0}\mathbf{\hat{A}}-\mathbf{\hat{B}}\mathbf{\Pi_0}||_F^2.
\nonumber
\end{equation}
where $\rho$ is any real number in $[0,1]$ and we assume that $\rho n$ is an integer\footnote{If it is not an integer, we can easily modify it by rounding.}.

For $D_1$, in average case we can obtain
\begin{equation}
\small
\begin{aligned}
D_1&\leq \sum_{i=1}^{\rho n}
\sum_{\mathbf{\Pi_0}\in \mathbf{\Pi^n}}||\mathbf{\Pi_0}\mathbf{\hat{A}}-\mathbf{\hat{B}}\mathbf{\Pi_0}||_F^2
\leq \sum_{i=1}^{\rho n} \prod_{j=1}^{i} 2(n-j+1)
\\&\leq \sum_{i=1}^{\rho n} (2n)^{i}
=2n\frac{(2n)^{\rho n}-1}{2n-1}
\approx (2n)^{\rho n}.\nonumber
\end{aligned}
\end{equation}

\vspace{-3mm}
For $D_2$, according to Inequality (\ref{i21}), in average case we can get
\begin{equation}
\small
\begin{aligned}
D_2&\geq\sum_{k=\rho n+1}^{n} \prod_{j=1}^k (n-j+1)=
\sum_{k=\rho n+1}^{n} \frac{n!}{(n-k)!}
\\&\geq \sum_{k=\rho n+1}^{n} \frac{n!}{((1-\rho)n)!}
=(1-\rho)n \frac{n!}{((1-\rho)n)!}.\nonumber
\end{aligned}
\end{equation}
Note that if we set $\rho=\Omega(1)=c_0$, where $c_0\rightarrow 1$, then
$\rho\rightarrow 1$  and
\begin{equation}
D_2\geq c_0\frac{n!}{c_0!}
=cn!
\sim c\sqrt{2\pi n}(\frac{n}{e})^n,\nonumber
\end{equation}
where $c$ is a constant and the last step holds due to the Stirling's formula. Therefore we can upper bound $\frac{D_2}{D_1}$ as
\begin{equation}
\frac{D_2}{D_1}\geq c\frac{\sqrt{2\pi n}(\frac{n}{e})^n}{(2n)^{\rho n}}
=c\sqrt{2\pi n}\bigg(\frac{n^{1-\rho}}{2^{\rho}e}\bigg)^n.\nonumber
\end{equation}
Then if $\rho$ is a constant which approaches $1$ but does not equal to $1$, then we find that when $n\rightarrow \infty$, $D_2$ is of higher order of $n$ than $D_1$. Therefore we can easily verify that in the denominator of the last term in Inequality (\ref{i31}), $\sum_{\rho n <k\leq n}\sum_{\mathbf{\Pi_0} \in \mathbf{\Pi^{n}}}||\mathbf{\tilde{\Pi}}-\mathbf{\Pi_0}||_F^2||\mathbf{\Pi_0} \mathbf{\hat{A}}-\mathbf{\hat{B}} \mathbf{\Pi_0}||_F^2$
is of higher order of $n$ than
$\sum_{k\leq \rho n}\sum_{\mathbf{\Pi_0} \in \mathbf{\Pi^{n}}}||\mathbf{\tilde{\Pi}}-\mathbf{\Pi_0}||_F^2||\mathbf{\Pi_0} {\mathbf{\hat{A}}}-{\mathbf{\hat{B}}} \mathbf{\Pi_0}||_F^2$, since for $k_1>\rho n$ and $k_2<\rho n$, $\mathbf{\Pi_1^{\prime}} \in S_{k_1}(\mathbf{\tilde{\Pi}})$ and $\mathbf{\Pi_2^{\prime}} \in S_{k_2}(\mathbf{\tilde{\Pi}})$, we have $||\mathbf{\Pi_1^{\prime}}-\mathbf{\tilde{\Pi}}||_F^2\geq
||\mathbf{\Pi_2^{\prime}}-\mathbf{\tilde{\Pi}}||_F^2$. Therefore according to Lemma \ref{lemma2}, we can leave the term with highest order of $n$ in the denominator and numerator in the last term in Inequality (\ref{i31}) when $n\rightarrow \infty$ and thus we can obtain
\begin{equation}
\begin{aligned}
&\frac{2\beta n\sum_{\mathbf{\Pi_0}\in \mathbf{\Pi^n}}||\mathbf{\Pi_0}\mathbf{\hat{A}}-\mathbf{\hat{B}}\mathbf{\Pi_0}||_F^2}{\sum_{\mathbf{\Pi_0} \in \mathbf{\Pi^{n}}}||\mathbf{\tilde{\Pi}}-\mathbf{\Pi_0}||_F^2||\mathbf{\Pi_0} {\mathbf{\hat{A}}}-{\mathbf{\hat{B}}} \mathbf{\Pi_0}||_F^2}
\\&\approx \frac{2\beta n\sum_{\rho n< k\leq n}\sum_{\mathbf{\Pi_0}\in S_k\mathbf{(\mathbf{\tilde{\Pi}})}}||\mathbf{\Pi_0}\mathbf{\hat{A}}-\mathbf{\hat{B}}\mathbf{\Pi_0}||_F^2}
{\sum_{\rho n< k\leq n}\sum_{\mathbf{\Pi_0}\in S_k\mathbf{(\mathbf{\tilde{\Pi}})}}||\mathbf{\Pi_0-\tilde{\Pi}}||_F^2||\mathbf{\Pi_0}\mathbf{\hat{A}}-\mathbf{\hat{B}}\mathbf{\Pi_0}||_F^2}
\\&\leq \frac{2\beta n}{2\rho n}\frac{\sum_{\rho n< k\leq n}\sum_{\mathbf{\Pi_0}\in S_k\mathbf{(\mathbf{\tilde{\Pi}})}}||\mathbf{\Pi_0}\mathbf{\hat{A}}-\mathbf{\hat{B}}\mathbf{\Pi_0}||_F^2}{\sum_{\rho n< k\leq n}\sum_{\mathbf{\Pi_0}\in S_k\mathbf{(\mathbf{\tilde{\Pi}})}}||\mathbf{\Pi_0}\mathbf{\hat{A}}-\mathbf{\hat{B}}\mathbf{\Pi_0}||_F^2}
=\frac{\beta}{\rho}.\nonumber
\end{aligned}
\end{equation}


%

Thus we have the approximation ratio
\begin{equation}
\frac{g(\mathbf{\tilde{\Pi}})}{g(\mathbf{\hat{\Pi}})}
\geq \frac{1}{1+\frac{\beta}{\rho}}
\approx \frac{1}{1+\beta}
\geq \frac{1}{2}.\nonumber
\end{equation}
\end{proof}

Note that in the proof of Theorem \ref{th1}, we use several times of inequality scaling method to derive the lower bound of approximation ratio, which is $0.5$. These inequality scaling may cause this lower bound to be smaller than the real approximation ratio. That is to say, the approximation ratio $0.5$ may be even worse than the approximation ratio in the worst case in real situations. For example, in Inequality (\ref{i31}) we directly use
\begin{equation}
\sum_{\mathbf{\Pi_0}\in \mathbf{\Pi_1^n}}||\mathbf{\Pi_0}\mathbf{\hat{A}}-\mathbf{\hat{B}}\mathbf{\Pi_0}||_F^2\leq \sum_{\mathbf{\Pi_0}\in \mathbf{\Pi^n}}||\mathbf{\Pi_0}\mathbf{\hat{A}}-\mathbf{\hat{B}}\mathbf{\Pi_0}||_F^2 ,\nonumber
\end{equation}
which may cause a big gap. Therefore, for a more general situation we have the following corollary.

\begin{corollary}
Given the published graph $G_1$, the auxiliary graph $G_2$, the parameter set $\boldsymbol{\theta}$ and the weight matrix $\mathbf{W}$, and we let
\begin{equation}
\chi=\bigg(\sum_{\mathbf{\Pi_0}\in \mathbf{\Pi_1^n}}||\mathbf{\Pi_0}\mathbf{\hat{A}}-\mathbf{\hat{B}}\mathbf{\Pi_0}||_F^2\bigg)\bigg/\bigg(\sum_{\mathbf{\Pi_0}\in \mathbf{\Pi^n}}||\mathbf{\Pi_0}\mathbf{\hat{A}}-\mathbf{\hat{B}}\mathbf{\Pi_0}||_F^2\bigg),
\nonumber
\end{equation}
then in average case, the approximation $g(\mathbf{\tilde{\Pi}})/g(\mathbf{\hat{\Pi}})$ ratio is larger than $\frac{1}{1+\beta\chi}$.
\end{corollary}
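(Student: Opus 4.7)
The plan is to mimic the proof of Theorem~\ref{th1}, but retain the quantity $\sum_{\mathbf{\Pi_0}\in \mathbf{\Pi_1^n}}\|\mathbf{\Pi_0}\mathbf{\hat{A}}-\mathbf{\hat{B}}\mathbf{\Pi_0}\|_F^2$ exactly rather than upper-bounding it by $\sum_{\mathbf{\Pi_0}\in \mathbf{\Pi^n}}\|\mathbf{\Pi_0}\mathbf{\hat{A}}-\mathbf{\hat{B}}\mathbf{\Pi_0}\|_F^2$. This is precisely the scaling step flagged after Theorem~\ref{th1} as being the main source of looseness, so replacing it with the identity $\sum_{\mathbf{\Pi_0}\in \mathbf{\Pi_1^n}}\|\mathbf{\Pi_0}\mathbf{\hat{A}}-\mathbf{\hat{B}}\mathbf{\Pi_0}\|_F^2 = \chi \sum_{\mathbf{\Pi_0}\in \mathbf{\Pi^n}}\|\mathbf{\Pi_0}\mathbf{\hat{A}}-\mathbf{\hat{B}}\mathbf{\Pi_0}\|_F^2$ should immediately tighten the final bound from $\tfrac{1}{1+\beta}$ to $\tfrac{1}{1+\beta\chi}$.

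Concretely, I would first reuse Eqn.~(\ref{e30}) verbatim, which gives the triangular-inequality bound
\begin{equation}
g(\mathbf{\hat{\Pi}})-g(\mathbf{\tilde{\Pi}}) \leq \|\mathbf{\tilde{\Pi}}-\mathbf{\hat{\Pi}}\|_F^2 \sum_{\mathbf{\Pi_0}\in \mathbf{\Pi_1^n}}\|\mathbf{\Pi_0}\mathbf{\hat{A}}-\mathbf{\hat{B}}\mathbf{\Pi_0}\|_F^2 . \nonumber
\end{equation}
Dividing by $g(\mathbf{\tilde{\Pi}})$ and substituting $\|\mathbf{\tilde{\Pi}}-\mathbf{\hat{\Pi}}\|_F^2=2\beta n$ together with the definition of $\chi$ yields
\begin{equation}
\frac{g(\mathbf{\hat{\Pi}})-g(\mathbf{\tilde{\Pi}})}{g(\mathbf{\tilde{\Pi}})} \leq \frac{2\beta n\,\chi\sum_{\mathbf{\Pi_0}\in \mathbf{\Pi^n}}\|\mathbf{\Pi_0}\mathbf{\hat{A}}-\mathbf{\hat{B}}\mathbf{\Pi_0}\|_F^2}{\sum_{\mathbf{\Pi_0} \in \mathbf{\Pi^{n}}}\|\mathbf{\tilde{\Pi}}-\mathbf{\Pi_0}\|_F^2\,\|\mathbf{\Pi_0} \mathbf{\hat{A}}-\mathbf{\hat{B}} \mathbf{\Pi_0}\|_F^2} . \nonumber
\end{equation}

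Next I would invoke exactly the same $D_1/D_2$ decomposition used in Theorem~\ref{th1}: split the sums by whether $\mathbf{\Pi_0}\in S_k(\mathbf{\tilde{\Pi}})$ with $k\leq \rho n$ or $k>\rho n$, bound $D_1\lesssim (2n)^{\rho n}$ and $D_2\gtrsim n!/((1-\rho)n)!$, and then choose $\rho\to 1^{-}$ so that $D_2$ dominates $D_1$ by Stirling's formula. Applying Lemma~\ref{lemma2} keeps only the $k>\rho n$ slice in both numerator and denominator, and bounding $\|\mathbf{\tilde{\Pi}}-\mathbf{\Pi_0}\|_F^2 \geq 2\rho n$ for those $\mathbf{\Pi_0}$ collapses the remaining ratio of sums of squared weighted Frobenius norms to $1$. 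What remains is $\frac{2\beta n}{2\rho n}\chi = \frac{\beta}{\rho}\chi \to \beta\chi$.

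Putting the pieces together gives $g(\mathbf{\tilde{\Pi}})/g(\mathbf{\hat{\Pi}})\geq 1/(1+\beta\chi)$, which is the desired inequality. I do not anticipate any genuinely new obstacle, because all heavy asymptotic machinery (Stirling, Lemma~\ref{lemma2}, sequence inequality, and the $D_1\ll D_2$ separation) is inherited intact from Theorem~\ref{th1}; the only novelty is carrying $\chi$ symbolically through the chain of inequalities rather than discarding it. The mildly delicate point will be to make sure that the asymptotic cancellation $\sum_{\rho n<k\leq n}\sum_{\mathbf{\Pi_0}\in S_k(\mathbf{\tilde{\Pi}})}\|\mathbf{\Pi_0}\mathbf{\hat{A}}-\mathbf{\hat{B}}\mathbf{\Pi_0}\|_F^2$ in numerator and denominator is not corrupted by the factor $\chi$, but since $\chi$ depends only on $G_1$, $G_2$, $\boldsymbol{\theta}$ and $\mathbf{W}$ and is constant in the dominant-order argument, it simply factors out and passes cleanly into the final bound.
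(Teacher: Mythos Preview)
Your proposal is correct and follows exactly the approach the paper itself indicates: the paper's own proof simply states that the corollary ``can be easily proved by slightly changing the form of Eqn.~(\ref{i31}),'' which is precisely the step where you retain $\chi$ instead of bounding the $\mathbf{\Pi_1^n}$-sum by the full $\mathbf{\Pi^n}$-sum. Your write-up is in fact more detailed than the paper's one-line justification, but the underlying argument is identical.
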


This corollary can be easily proved by slightly changing the form of Eqn. (\ref{i31}). To take an example to illustrate the gap of approximation ratio caused by $\chi$ more intuitively, we assume that $\sum_{\mathbf{\Pi}\in\mathbf{\Pi_1^n}}||\mathbf{\Pi_0}\mathbf{\hat{A}}-\mathbf{\hat{B}}\mathbf{\Pi_0}||_F^2
=\sum_{\mathbf{\Pi}\in\mathbf{\Pi_2^n}}||\mathbf{\Pi_0}\mathbf{\hat{A}}-\mathbf{\hat{B}}\mathbf{\Pi_0}||_F^2$
\footnote{This is only a very special situation, which we use it to make an intuitive example to explain how $\chi$ causes the gap of approximation ratio. It is not necessarily the same as real situations}.
Then $\chi=\frac{1}{2}$ and $(\frac{1}{1+\beta\chi})>\frac{2}{3}$, which causes the gap of the lower bound of approximation ratio to be $\frac{2}{3}-\frac{1}{2}=\frac{1}{6}$.

Note that we still claim that the approximation ratio is \emph{larger} than $(\frac{1}{1+\beta\chi})$. This is because we eliminate the sum $\sum_{\mathbf{\Pi_0}\in \mathbf{\Pi_2^n}}(||\mathbf{\hat{\Pi}}-\mathbf{\Pi_0}||_F^2-||\mathbf{\tilde{\Pi}}-\mathbf{\Pi_0}||_F^2)||\mathbf{\Pi}\mathbf{\hat{A}}-\mathbf{\hat{B}}\mathbf{\Pi}||_F^2$ in Eqn. (\ref{e30}), which also generates a gap between the lower bound $\frac{1}{1+\beta\chi}$ and the real approximation ratio. We leave it a future direction to find a proper estimation of this gap. However, the current gap still ensures the real approximation ratio strictly larger than $\frac{1}{1+\beta\chi}$, which further strengthens our claim at the beginning of Section \ref{estimateMMSE} that the transformation of the original MMSE problem is valid.

\section{Algorithmic Aspect of De-anonymization Problem}\label{Algorithmic}

In this section, we show that WEMP is of significant advantages in seedless de-anonymization since it resolves the tension between \emph{optimality} and \emph{complexity}. For optimality, We prove the good performance of solving WEMP that the result makes the node mapping error (NME) negligible in large social networks under mild conditions, facilitated by higher overlapping strength; For complexity, the optimal mapping of WEMP, $\mathbf{\tilde{\Pi}}$, can be perfectly sought algorithmically by our convex-concave based de-anonymization algorithm (CBDA).

\subsection{The Influence of Transformation to WEMP on NME}
\label{algorithmicA}
Recall that our aim is to minimize NME in expectation, thus a natural question arises: \emph{how much NME $\mathbf{\tilde{\Pi}}$ may cause for any probable real permutation matrix $\mathbf{\Pi_0}$?} The answer reflects the ability of solving WEMP in enhancing mapping accuracy. To answer it, we demonstrate that under mild conditions, the \emph{relative NME}, defined as $\frac{||\mathbf{\tilde{\Pi}}-\mathbf{\Pi_0}||_F^2}{||\mathbf{\Pi_0}||_F^2}$, vanishes to $0$ as $n \rightarrow \infty$. This implies that under large network size, NME caus
ed by $\mathbf{\tilde{\Pi}}$ is negligible compared with $|V|=n$. Furthermore, we surprisingly find that the conditions are facilitated under higher overlapping strength, explicitly delineating benefits brought by overlapping communities in NME reduction.
Theorem \ref{th3} formally presents our result mentioned above. Before that, we give Lemma \ref{lemma5}, a prerequisite in proving Theorem \ref{th3}.

\begin{lemma}
\label{lemma5}
Suppose the permutation matrix $\mathbf{\Pi}$ keeps invariant of the community representation of all the nodes, i.e., $\forall \mathbf{\Pi}$ such that $\mathbf{\Pi}(i,j)=1$, $\boldsymbol{C_i}=\boldsymbol{C_j}$, then $\mathbf{\hat{A}}=\mathbf{W}\circ\mathbf{A}$, $\mathbf{\hat{B}}=\mathbf{W}\circ\mathbf{B}$ and
\begin{equation}
\label{aaa}
||\mathbf{\Pi\hat{A}-\hat{B}\Pi}||_F=||\mathbf{W}\circ(\mathbf{\Pi{A}\Pi^T-{B}})||_F
=||\mathbf{\Pi\hat{A}\Pi^T-\hat{B}}||_F.
\end{equation}
\end{lemma}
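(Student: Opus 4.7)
The plan is to reduce the lemma to a single invariance property of the weight matrix $\mathbf{W}$ and then chain together three routine identities. The key observation is that $w_{ij}$ (and hence $W_{ij}=\sqrt{w_{ij}}$) depends on the pair $(i,j)$ only through the community representations $\boldsymbol{C_i},\boldsymbol{C_j}$, because $s_1,s_2$ are fixed. Since the hypothesis $\mathbf{\Pi}(i,j)=1\Rightarrow\boldsymbol{C_i}=\boldsymbol{C_j}$ forces $\boldsymbol{C_i}=\boldsymbol{C_{\pi(i)}}$ for every $i$, I would first establish the entrywise identities
\begin{equation*}
W_{\pi(i),k}=W_{i,k}\qquad\text{and}\qquad W_{k,\pi(j)}=W_{k,j}\qquad\forall\,i,j,k,
\end{equation*}
i.e., permuting the rows or columns of $\mathbf{W}$ by $\mathbf{\Pi}$ leaves it unchanged.

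With this in hand, a direct entrywise check yields the Hadamard--permutation commutation rule
\begin{equation*}
\mathbf{\Pi}(\mathbf{W}\circ\mathbf{M})=\mathbf{W}\circ(\mathbf{\Pi M}),\qquad (\mathbf{W}\circ\mathbf{M})\mathbf{\Pi}^T=\mathbf{W}\circ(\mathbf{M}\mathbf{\Pi}^T),
\end{equation*}
valid for every $n\times n$ matrix $\mathbf{M}$, since both sides of each identity reduce to $W_{ij}M_{\pi(i),j}$ or $W_{ij}M_{i,\pi(j)}$ after invoking the row/column invariance above. Applied to the shorthand $\mathbf{\Pi}\mathbf{\hat{A}}:=\mathbf{W}\circ(\mathbf{\Pi A})$, this gives $\mathbf{\Pi}\mathbf{\hat{A}}=\mathbf{\Pi}(\mathbf{W}\circ\mathbf{A})$, and cancelling the orthogonal factor $\mathbf{\Pi}$ from the left identifies $\mathbf{\hat{A}}=\mathbf{W}\circ\mathbf{A}$; symmetrically $\mathbf{\hat{B}}=\mathbf{W}\circ\mathbf{B}$. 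The three-way norm identity (\ref{aaa}) then unfolds in three steps: (i) by linearity of the Hadamard product, $\mathbf{\Pi}\mathbf{\hat{A}}-\mathbf{\hat{B}}\mathbf{\Pi}=\mathbf{W}\circ(\mathbf{\Pi A}-\mathbf{B}\mathbf{\Pi})$; (ii) right-multiplying by $\mathbf{\Pi}^T$ inside the Frobenius norm is free by the last lemma of Section~\ref{Preliminaries}, and the commutation rule lets $\mathbf{\Pi}^T$ slip inside the Hadamard product, yielding $\mathbf{W}\circ(\mathbf{\Pi A}\mathbf{\Pi}^T-\mathbf{B})$; (iii) a double application of the commutation rule rewrites $\mathbf{W}\circ(\mathbf{\Pi A}\mathbf{\Pi}^T)$ as $\mathbf{\Pi}(\mathbf{W}\circ\mathbf{A})\mathbf{\Pi}^T=\mathbf{\Pi}\mathbf{\hat{A}}\mathbf{\Pi}^T$, while $\mathbf{W}\circ\mathbf{B}=\mathbf{\hat{B}}$, closing the loop.

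I expect the main subtlety to be notational rather than mathematical. The symbol $\mathbf{\hat{A}}$ was introduced in the remark after Theorem~\ref{th0} only through the compound $\mathbf{\Pi_0}\mathbf{\hat{A}}$, with the explicit caveat that in general $\mathbf{W}\circ\mathbf{A}\neq\mathbf{\hat{A}}$; the content of the lemma is precisely that the community-preserving hypothesis on $\mathbf{\Pi}$ collapses this distinction for the permutations of interest. Care must therefore be taken to unpack each occurrence of $\mathbf{\hat{A}}$ or $\mathbf{\hat{B}}$ into its primary definition before applying algebraic manipulations, and only after establishing $\mathbf{\hat{A}}=\mathbf{W}\circ\mathbf{A}$ may these be treated as ordinary matrices in the remaining Frobenius-norm chain.
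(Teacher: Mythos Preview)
Your proposal is correct and follows essentially the same route as the paper. Both arguments hinge on the observation that $W_{ij}$ depends on $(i,j)$ only through $(\boldsymbol{C_i},\boldsymbol{C_j})$, so that a community-preserving permutation leaves the rows/columns of $\mathbf{W}$ unchanged and therefore commutes with the Hadamard product; the paper compresses your steps (ii)--(iii) into the single remark that equation~(\ref{aaa}) then ``holds naturally,'' whereas you spell out the Frobenius-norm chain explicitly.
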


\begin{proof}
We know $||\mathbf{\Pi\hat{A}-\hat{B}\Pi}||_F=||\mathbf{W}\circ(\mathbf{\Pi{A}-{B}{\Pi}})||_F$, thus we only need to prove that $\mathbf{W}\circ\mathbf{\Pi{A}}=\mathbf{\Pi W}\circ \mathbf{A}$. Note that $w_{ij}$ only depends on $p_{\boldsymbol{C_iC_j}}$, $s_1$ and $s_2$, therefore
for some nodes $i,j,s,t$, if $\boldsymbol{C_i}=\boldsymbol{C_s}$ and $\boldsymbol{C_j}=
\boldsymbol{C_t}$, then $\mathbf{W}(i,j)=\mathbf{W}(s,t)$. This fact tells that the weight is invariant within communities. Therefore, since $\mathbf{\Pi}$ keeps invariant of the community representation of all the nodes, it is easy to verify that $\mathbf{W}\circ\mathbf{\Pi{A}}=\mathbf{\Pi W}\circ \mathbf{A}$. Thus we have $\mathbf{\hat{A}}=\mathbf{W}\circ\mathbf{A}$ and similarly, $\mathbf{\hat{B}}=\mathbf{W}\circ\mathbf{B}$.
Then Eqn. (\ref{aaa}) holds naturally.
\end{proof}

\textbf{Remark:} According to Lemma \ref{lemma5}, we can similarly show that $||\mathbf{W}\circ(\mathbf{{A}-\Pi{B}\Pi^T})||_F
=||\mathbf{\hat{A}-\Pi\hat{B}\Pi^T}||_F$, and there are no differences in form between $||\mathbf{\Pi_1\hat{A}\Pi_1^T-\hat{B}}||_F$ and $||\mathbf{\hat{A}-\Pi_2\hat{B}\Pi_2^T}||_F$ since the mappings are bijections and we can simply set $\mathbf{\Pi_2}=\mathbf{\Pi_1^T}$. Therefore, in the following we do not distinguish the forms $||\mathbf{\Pi\hat{A}\Pi^T-\hat{B}}||_F$ and
$||\mathbf{\hat{A}-\Pi\hat{B}\Pi^T}||_F$.


\begin{theorem}
\label{th3}
Given the published network $G_1$, the auxiliary network $G_2$, the parameter set $\boldsymbol{\theta}$, the weight matrix $\mathbf{W}$. Set $\mathbf{A}$ as the adjacent matrix of $G_1$, and $\mathbf{B}$ as the adjacent matrix of $G_2$. Set
$\tilde{p}_{\boldsymbol{C_i C_j}}=w_{ij}p_{\boldsymbol{C_i C_j}}$ and
\begin{equation}
\small
K=\min_{s,t,j}\{ (\tilde{p}_{\boldsymbol{C_s C_j}}+\tilde{p}_{\boldsymbol{C_t C_j}})\min\{s_1, s_2\}\},\nonumber
\end{equation}
\begin{equation}
\small
L=\max_{s,t,j} \{[(\tilde{p}_{\boldsymbol{C_s C_j}}+\tilde{p}_{\boldsymbol{C_t C_j}})\max\{s_1, s_2\} ]^2\}.\nonumber
\end{equation}
If the following four conditions:
\begin{itemize}
\item $\frac{L}{K}=o(1)$;
\item the minimizer of WEMP, $\mathbf{\tilde{\Pi}}$, satisfies that ${||{\mathbf{\hat{A}}}-\mathbf{\Pi_0}{\mathbf{\hat{B}}}\mathbf{\Pi_0^T}||_F^2}/{||{\mathbf{\hat{A}}}-\mathbf{\tilde{\Pi}}{\mathbf{\hat{B}}}\mathbf{\tilde{\Pi}^T}||_F^2}=
\Omega(1)$;
\item ${||{\mathbf{\hat{A}}}-\mathbf{\Pi_0}{\mathbf{\hat{B}}}\mathbf{\Pi_0^T}||_F^2}=o(Kn^2)$;
\item $\mathbf{\Pi_0}$ and $\mathbf{\tilde{\Pi}}$ keep invariant of the community representation of all the nodes,
\end{itemize}
hold, then the relative NME, $\frac{||\mathbf{\tilde{\Pi}}-\mathbf{\Pi_0}||_F^2}{||\mathbf{\Pi_0}||_F^2}$,  can be upper bounded by the minimum value of WEMP, i.e., $||{\mathbf{\hat{A}}}-\mathbf{\tilde{\Pi}}{\mathbf{\hat{B}}}\mathbf{\tilde{\Pi}^T}||_F^2$, and as $n\rightarrow \infty$, $\frac{||\mathbf{\tilde{\Pi}}-\mathbf{\Pi_0}||_F^2}{||\mathbf{\Pi_0}||_F^2} \rightarrow 0$.

\end{theorem}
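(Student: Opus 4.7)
The plan is to combine the optimality of $\mathbf{\tilde{\Pi}}$ as the WEMP minimizer with a per-mismatch expected cost of order $K$, and then squeeze $||\mathbf{\tilde{\Pi}}-\mathbf{\Pi_0}||_F^2$ between the two. First I would apply Lemma \ref{lemma5}: condition 4 gives $||\mathbf{\Pi}\mathbf{\hat{A}}-\mathbf{\hat{B}}\mathbf{\Pi}||_F=||\mathbf{\hat{A}}-\mathbf{\Pi}\mathbf{\hat{B}}\mathbf{\Pi^T}||_F$ for both $\mathbf{\Pi}\in\{\mathbf{\tilde{\Pi}},\mathbf{\Pi_0}\}$, so the optimality of $\mathbf{\tilde{\Pi}}$ reads
\begin{equation*}
||\mathbf{\hat{A}}-\mathbf{\tilde{\Pi}}\mathbf{\hat{B}}\mathbf{\tilde{\Pi}^T}||_F^2 \;\leq\; ||\mathbf{\hat{A}}-\mathbf{\Pi_0}\mathbf{\hat{B}}\mathbf{\Pi_0^T}||_F^2.
\end{equation*}
By the triangle inequality for the Frobenius norm,
\begin{equation*}
||\mathbf{\tilde{\Pi}}\mathbf{\hat{B}}\mathbf{\tilde{\Pi}^T}-\mathbf{\Pi_0}\mathbf{\hat{B}}\mathbf{\Pi_0^T}||_F \;\leq\; ||\mathbf{\hat{A}}-\mathbf{\tilde{\Pi}}\mathbf{\hat{B}}\mathbf{\tilde{\Pi}^T}||_F + ||\mathbf{\hat{A}}-\mathbf{\Pi_0}\mathbf{\hat{B}}\mathbf{\Pi_0^T}||_F,
\end{equation*}
which combined with the previous display and condition 2 is $O(||\mathbf{\hat{A}}-\mathbf{\Pi_0}\mathbf{\hat{B}}\mathbf{\Pi_0^T}||_F) = o(\sqrt{K}\,n)$ after invoking condition 3.

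Next I would build a matching expected lower bound. Introducing $\mathbf{Z}=\mathbf{\tilde{\Pi}}\mathbf{\Pi_0^T}$, a permutation whose number of non-fixed points equals $k:=\tfrac{1}{2}||\mathbf{\tilde{\Pi}}-\mathbf{\Pi_0}||_F^2$, I would rewrite $\mathbf{\tilde{\Pi}}\mathbf{\hat{B}}\mathbf{\tilde{\Pi}^T}-\mathbf{\Pi_0}\mathbf{\hat{B}}\mathbf{\Pi_0^T}=\mathbf{Z}(\mathbf{\Pi_0}\mathbf{\hat{B}}\mathbf{\Pi_0^T})\mathbf{Z^T}-\mathbf{\Pi_0}\mathbf{\hat{B}}\mathbf{\Pi_0^T}$. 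Expanding row by row, roughly $k(n-k)+\binom{k}{2}$ of its entries are differences of two statistically independent weighted Bernoulli entries of $\mathbf{\hat{B}}$ (each with success probability $\tilde{p}_{\boldsymbol{C_iC_j}}s_2$ scaled by $w_{ij}$). The quantity $K$ is designed so that the expected squared gap of any such pair is at least of order $K$, yielding
\begin{equation*}
\mathbf{E}_{\mathbf{B}}\bigl[\,||\mathbf{\tilde{\Pi}}\mathbf{\hat{B}}\mathbf{\tilde{\Pi}^T}-\mathbf{\Pi_0}\mathbf{\hat{B}}\mathbf{\Pi_0^T}||_F^2\,\bigr] \;\geq\; c\, K\, n\, ||\mathbf{\tilde{\Pi}}-\mathbf{\Pi_0}||_F^2
\end{equation*}
for some absolute constant $c>0$. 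The condition $L/K=o(1)$ supplies the variance control (via a Chebyshev-type argument) needed to promote this expectation-level lower bound to one that holds with high probability as $n\to\infty$.

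Combining the upper and lower bounds gives $K\,n\,||\mathbf{\tilde{\Pi}}-\mathbf{\Pi_0}||_F^2=o(Kn^2)$, i.e.\ $||\mathbf{\tilde{\Pi}}-\mathbf{\Pi_0}||_F^2=o(n)$. Since $||\mathbf{\Pi_0}||_F^2=n$, the relative NME is $o(1)$ and therefore tends to $0$; the claimed upper bound of the relative NME in terms of the minimum value of WEMP drops out by tracing the same inequality chain backwards (dividing both sides by $Kn^2$). The hardest step, I anticipate, is the lower-bound argument: counting the entries of $\mathbf{\tilde{\Pi}}\mathbf{\hat{B}}\mathbf{\tilde{\Pi}^T}-\mathbf{\Pi_0}\mathbf{\hat{B}}\mathbf{\Pi_0^T}$ that genuinely differ (so that coincidental cancellations between nodes in the same community are not overcounted), reducing each contribution to a closed-form expected gap, and then lifting the bound from expectation to high probability using $L/K=o(1)$. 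The bookkeeping is made delicate by the community-dependence of $w_{ij}$ and $p_{\boldsymbol{C_iC_j}}$, since the uniform $K$-lower bound must be verified for every pair of community representations arising from the mismatched nodes.
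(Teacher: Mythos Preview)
Your approach is close in spirit to the paper's but pivots on a different intermediate quantity. The paper sandwiches the \emph{one-sided} norm $\|(\mathbf{\Pi_0}-\mathbf{\tilde{\Pi}})\mathbf{\hat{B}}\|_F^2$: the lower bound is immediate (each nonzero row is a difference of two rows of $\mathbf{\hat{B}}$, giving $\|(\mathbf{\Pi_0}-\mathbf{\tilde{\Pi}})\mathbf{\hat{B}}\|_F^2 \geq \tfrac{nK}{2}\|\mathbf{\Pi_0}-\mathbf{\tilde{\Pi}}\|_F^2$ in expectation), while the upper bound passes through the mixed term $\|\mathbf{\tilde{\Pi}}\mathbf{\hat{B}}\mathbf{\Pi_0^T}-\mathbf{\hat{A}}\|_F^2$ and then a trace cross-term $\mathbf{tr}\bigl((\mathbf{\tilde{\Pi}}-\mathbf{\Pi_0})\mathbf{\hat{B}}(\mathbf{\tilde{\Pi}}-\mathbf{\Pi_0})^T\mathbf{\hat{A}}\bigr)$, which is precisely where $L$ enters (bounded by $n^2L$). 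You instead sandwich the \emph{two-sided} norm $\|\mathbf{\tilde{\Pi}}\mathbf{\hat{B}}\mathbf{\tilde{\Pi}^T}-\mathbf{\Pi_0}\mathbf{\hat{B}}\mathbf{\Pi_0^T}\|_F^2$: your upper bound is cleaner (pure triangle inequality plus optimality, no cross-term at all), while your lower bound is correct in expectation though combinatorially slightly more involved. Both routes land on the same $o(1)$ conclusion at the expectation level, and both tacitly treat $\mathbf{\tilde{\Pi}}$ as fixed when averaging over $\mathbf{B}$.

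One point to flag: your proposed use of condition~1 ($L/K=o(1)$) for a Chebyshev-type concentration step is not how the paper deploys it and is not obviously justified by the definition of $L$. The quantity $L$ is the \emph{square of a maximal mean} $(\tilde{p}_{\boldsymbol{C_sC_j}}+\tilde{p}_{\boldsymbol{C_tC_j}})\max\{s_1,s_2\}$, not a variance or fourth-moment bound, so it does not straightforwardly control the fluctuations of the sum you need. In fact your decomposition does not require condition~1 at all at the expectation level, since you have eliminated the trace cross-term---a genuine simplification over the paper. If you do want to lift the bound to high probability, you would need a separate concentration argument for the weighted-Bernoulli sum, and the relevant control would come from fourth moments of $\mathbf{\hat{B}}_{sj}-\mathbf{\hat{B}}_{tj}$ rather than from $L$ as stated.
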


\begin{proof}
We divide our proof into four main parts. Firstly, we start from $||\mathbf{\tilde{\Pi}}-\mathbf{\Pi_0}||_F$ and upper bound it using $||(\mathbf{\tilde{\Pi}}-\mathbf{\Pi_0})\mathbf{\hat{B}}||_F$ (or equivalently $||(\mathbf{\tilde{\Pi}}-\mathbf{\Pi_0})\mathbf{\hat{A}}||_F$). Secondly, we find the relationship between $||(\mathbf{\tilde{\Pi}}-\mathbf{\Pi_0})\mathbf{\hat{B}}||_F$
and $\textbf{tr}((\mathbf{\tilde{\Pi}-\Pi_0})\mathbf{\hat{B}}(\mathbf{(\tilde{\Pi}-\Pi_0)^T})\mathbf{\hat{A}})$.
Thirdly we upper bound the $\textbf{tr}((\mathbf{\tilde{\Pi}-\Pi_0})\mathbf{\hat{B}}(\mathbf{(\tilde{\Pi}-\Pi_0)^T})\mathbf{\hat{A}})$
and finally we upper bound $\frac{||(\mathbf{\Pi_0}-\mathbf{\tilde{\Pi}})||_F^2}{||\mathbf{\Pi_0}||_F^2}$, the relative NME, based on the first three steps.

\underline{1. Relationship Between $||\mathbf{\Pi_0}-\mathbf{\tilde{\Pi}}||_F$
and
$||(\mathbf{\Pi_0}-\mathbf{\tilde{\Pi}})\mathbf{\hat{B}}||_F$}

We start with the first part and focus on $||(\mathbf{\Pi_0}-\mathbf{\tilde{\Pi}})\mathbf{\hat{B}}||_F$.
For the $i_{th}$ row of $(\mathbf{\Pi_0}-\mathbf{\tilde{\Pi}})$, there are two possibilities: (i)
If $\mathbf{\Pi_0}$ and $\mathbf{\tilde{\Pi}}$ map node $i$ in $G_2$ to the same node in $G_1$, then the $i_{th}$ row of $(\mathbf{\Pi_0}-\mathbf{\tilde{\Pi}})\mathbf{\hat{B}}$ is a zero row vector; (ii) If $\mathbf{\Pi_0}$ and $\mathbf{\tilde{\Pi}}$ map node $i$ to node $s$ and $t$ respectively ($s\neq t$), then the $i_{th}$ row of $(\mathbf{\Pi_0}-\mathbf{\tilde{\Pi}})\mathbf{\hat{B}}$ is
$(\mathbf{\hat{B}}_{s1}-\mathbf{\hat{B}}_{t1}, \mathbf{\hat{B}}_{s2}-\mathbf{\hat{B}}_{t2}, \cdots, \mathbf{\hat{B}}_{sn}-\mathbf{\hat{B}}_{tn})$.
For an element, $([(\mathbf{\Pi_0}-\mathbf{\tilde{\Pi}})\mathbf{\hat{B}}]_{ij})^2=( \sqrt{w_{sj}}\mathbf{B}_{sj}-\sqrt{w_{tj}}\mathbf{B}_{tj})^2$.
Taking the expectation on both sides, we can derive that
\begin{equation}
\begin{aligned}
&\textbf{E}[(\mathbf{\Pi_0}-\mathbf{\tilde{\Pi}})\mathbf{\hat{B}}]_{ij}^2=\textbf{E}( \mathbf{\hat{B}}_{sj}-\mathbf{\hat{B}}_{tj})^2
\\&=(\tilde{p}_{\boldsymbol{C_s C_j}}+\tilde{p}_{\boldsymbol{C_t C_j}}-2\sqrt{w_{sj}w_{tj}}p_{\boldsymbol{C_s C_j}}p_{\boldsymbol{C_t C_j}}s_2)s_2
\\&\sim (\tilde{p}_{\boldsymbol{C_s C_j}}+\tilde{p}_{\boldsymbol{C_t C_j}})s_2.\nonumber
\end{aligned}
\end{equation}

So by summing up all the columns, we have
\begin{equation}
\begin{aligned}
\textbf{E}\sum_{j=1}^n[(\mathbf{\Pi_0}-\mathbf{\tilde{\Pi}})\mathbf{\hat{B}}]_{ij}^2
=\sum_{j=1}^n (\tilde{p}_{\boldsymbol{C_s C_j}}+\tilde{p}_{\boldsymbol{C_t C_j}})s_2.\nonumber
\end{aligned}
\end{equation}

Then summing up all the rows, we can obtain
\begin{equation}
\begin{aligned}
&||(\mathbf{\Pi_0}-\mathbf{\tilde{\Pi}})\mathbf{\hat{B}}||_F^2\\&=\textbf{E}\sum_{i=1}^n \sum_{j=1}^n[(\mathbf{\Pi_0}-\mathbf{\tilde{\Pi}})\mathbf{\hat{B}}]_{ij}^2
\\&=\sum_{i=1}^n  \mathbf{1}\{\pi_0(i)\neq \tilde{\pi}(i) \} \sum_{j=1}^n (\tilde{p}_{\boldsymbol{C_s C_j}}+\tilde{p}_{\boldsymbol{C_t C_j}})s_2
\\&\geq \sum_{i=1}^n  n \mathbf{1}\{\pi_0(i)\neq \tilde{\pi}(i) \} \min_{j} (\tilde{p}_{\boldsymbol{C_s C_j}}+\tilde{p}_{\boldsymbol{C_t C_j}})s_2,\nonumber
\end{aligned}
\end{equation}
where $\mathbf{1}\{\pi_0(i)\neq \tilde{\pi}(i)\}=1$ if $\pi_0$ and $\tilde{\pi}$ map node $i$ in $G_2$ to the same node in $G_1$ and $\mathbf{1}\{\pi_0(i)\neq \tilde{\pi}(i)\}=0$ otherwise. Thus it eliminates rows with all zero elements.

Note that $||(\mathbf{\Pi_0}-\mathbf{\tilde{\Pi}})||_F^2=2\sum_{i=1}^n \mathbf{1}\{\pi_0(i)\neq \tilde{\pi}(i)\}$. Setting
$\hspace{-0.5mm}K=\min_{s,t,j} (\tilde{p}_{\boldsymbol{C_s C_j}}+\tilde{p}_{\boldsymbol{C_t C_j}})s_2 $, we have
\begin{equation}
\label{Inequality:60}
||\mathbf{\Pi_0}-\mathbf{\tilde{\Pi}}||_F^2 \leq \frac{2}{nK} ||(\mathbf{\Pi_0}-\mathbf{\tilde{\Pi}})\mathbf{\hat{B}}||_F^2.
\vspace{-2mm}
\end{equation}
Similarly we can replace $\mathbf{\hat{B}}$ by $\mathbf{\hat{A}}$ and change $s_2$ to $s_1$ in $K$.


\underline{2. Relationship Between $||(\mathbf{\Pi_0}-\mathbf{\tilde{\Pi}})\mathbf{\hat{B}}||_F$
and} \underline{$\textbf{tr}((\mathbf{\tilde{\Pi}-\Pi_0})\mathbf{\hat{B}}(\mathbf{(\tilde{\Pi}-\Pi_0)^T})\mathbf{\hat{A}})$}

In the second part, note that
\begin{equation}
\label{in1}
\begin{aligned}
&||(\mathbf{\Pi_0}-\mathbf{\tilde{\Pi}})\mathbf{\hat{B}}||_F
=||\mathbf{\hat{B}}\mathbf{(\Pi_0-\tilde{\Pi})^T}||_F
=||\mathbf{\tilde{\Pi}}\mathbf{\hat{B}}\mathbf{(\Pi_0-\tilde{\Pi})^T}||_F
\\&\quad \leq ||(\mathbf{\tilde{\Pi}}\mathbf{\hat{B}}\mathbf{\Pi_0}-\mathbf{\hat{A}})-(\mathbf{\tilde{\Pi}}\mathbf{\hat{B}}\mathbf{\tilde{\Pi}}-\mathbf{\hat{A}})||_F
\\&\quad\leq ||\mathbf{\tilde{\Pi}} \mathbf{\hat{B}} \mathbf{\tilde{\Pi}^T}- \mathbf{\hat{A}}||_F+||\mathbf{\tilde{\Pi}} \mathbf{\hat{B}} \mathbf{\Pi_0^T}-\mathbf{\hat{A}}||_F,\nonumber
\end{aligned}
\end{equation}
where the second equation holds since the permutation matrix $\mathbf{\tilde{\Pi}}$ keeps invariant of Frobenius norm, and the second inequality holds due to the triangular inequality of Frobenius norm. Then we obtain
\begin{equation}
||(\mathbf{\Pi_0}-\mathbf{\tilde{\Pi}})\mathbf{\hat{B}}||_F^2\leq 2(||\mathbf{\tilde{\Pi}} \mathbf{\hat{B}} \mathbf{\tilde{\Pi}^T}- \mathbf{\hat{A}}||_F^2+||\mathbf{\tilde{\Pi}} \mathbf{\hat{B}} \mathbf{\Pi_0^T}-\mathbf{\hat{A}}||_F^2).\nonumber
\end{equation}

For the term $||\mathbf{\tilde{\Pi}} \mathbf{\hat{B}} \mathbf{\Pi_0^T}-\mathbf{\hat{A}}||_F^2$,
\begin{equation}
\label{Inequality:65}
\begin{aligned}
&||\mathbf{\tilde{\Pi}} \mathbf{\hat{B}} \mathbf{\Pi_0^T}-\mathbf{\hat{A}}||_F^2=
\textbf{tr}(\mathbf{(\mathbf{\tilde{\Pi}} \mathbf{\hat{B}} \mathbf{\Pi_0^T}-\mathbf{\hat{A}})^T}(\mathbf{\tilde{\Pi}} \mathbf{\hat{B}} \mathbf{\Pi_0^T}-\mathbf{\hat{A}}))
\\&=\textbf{tr}(\mathbf{\hat{A}^T}\mathbf{\hat{A}})+\textbf{tr}(\mathbf{\hat{B}^T}\mathbf{\hat{B}})
-2\textbf{tr}(\mathbf{\Pi_0}\mathbf{\hat{B}}\mathbf{\tilde{\Pi}^T}\mathbf{\hat{A}})
\\&=
||\mathbf{\hat{A}}||_F^2+||\mathbf{\hat{B}}||_F^2
-2\textbf{tr}(\mathbf{\Pi_0}\mathbf{\hat{B}}\mathbf{\tilde{\Pi}^T}\mathbf{\hat{A}})
\\&=\frac{1}{2}(||\mathbf{\tilde{\Pi}} \mathbf{\hat{B}} \mathbf{\tilde{\Pi}^T}-\mathbf{\hat{A}}||_F^2+||\mathbf{\Pi_0} \mathbf{\hat{B}} \mathbf{\Pi_0^T}-\mathbf{\hat{A}}||_F^2)
\\&\quad +
\textbf{tr}(\mathbf{\Pi_0}\mathbf{\hat{B}}\mathbf{\Pi_0^T}\mathbf{\hat{A}})+
\textbf{tr}(\mathbf{\tilde{\Pi}}\mathbf{\hat{B}}\mathbf{\tilde{\Pi}^T}\mathbf{\hat{A}})-
2\textbf{tr}(\mathbf{\Pi_0}\mathbf{\hat{B}}\mathbf{\tilde{\Pi}^T}\mathbf{\hat{A}})
\\&\leq ||\mathbf{\Pi_0} \mathbf{\hat{B}} \mathbf{\Pi_0}^T-\mathbf{\hat{A}}||_F^2+
\textbf{tr}((\mathbf{\tilde{\Pi}-\Pi_0})\mathbf{\hat{B}}(\mathbf{(\tilde{\Pi}-\Pi_0)^T})\mathbf{\hat{A}}),
\end{aligned}
\end{equation}
where the last equation can be verified by the first three equations, and the last inequality holds since $||\mathbf{\tilde{\Pi}} \mathbf{\hat{B}} \mathbf{\tilde{\Pi}^T}-\mathbf{\hat{A}}||_F^2\leq ||\mathbf{\Pi_0} \mathbf{\hat{B}} \mathbf{\Pi_0^T}-\mathbf{\hat{A}}||_F^2$.


\underline{3. Upper Bound of $\textbf{tr}((\mathbf{\tilde{\Pi}-\Pi_0})\mathbf{\hat{B}}\mathbf{(\tilde{\Pi}-\Pi_0)^T}\mathbf{\hat{A}})$}

Set $\mathbf{Z}=(\mathbf{\tilde{\Pi}-\Pi_0})\mathbf{\hat{B}}\mathbf{(\tilde{\Pi}-\Pi_0)^T}\mathbf{\hat{A}}$. Now we focus on $\textbf{tr}(\mathbf{Z})$. Note that the $i_{th}$ row of $\mathbf{\tilde{\Pi}-\Pi_0}$ is composed
of either zeros if $\mathbf{\tilde{\Pi}}$ and $\mathbf{\Pi_0}$ map node $i$ in $G_2$ to the same node in $G_1$, or zeros expect one $1$ and one $-1$ if $\mathbf{\tilde{\Pi}}$ and $\mathbf{\Pi_0}$ map node $i$ in $G_2$ to different nodes in $G_1$.
 It is easy to verify that
for any node $i$, when $\mathbf{\tilde{\Pi}}$ and $\mathbf{\Pi_0}$ map it to the same node, then $\mathbf{Z}_{ii}=0$. If not, for node $i$ we assume that
$\mathbf{\tilde{\Pi}}$ maps it to $s$ and $\mathbf{\Pi_0}$ maps it to $t$, where $s \neq t$.
For simplicity, we define $\mathbf{Y}=(\mathbf{\tilde{\Pi}-\Pi_0})\mathbf{\hat{B}}$
and $\mathbf{X}=(\mathbf{(\tilde{\Pi}-\Pi_0)^T})\mathbf{\hat{A}}$, thus $\mathbf{Z}=\mathbf{YX}$.
Then we can obtain the $i_{th}$ row of $\mathbf{Y}$
as
\begin{equation}
\mathbf{Y_{i\cdot}}=(\mathbf{\hat{B}}_{s1}-\mathbf{\hat{B}}_{t1}, \mathbf{\hat{B}}_{s2}-\mathbf{\hat{B}}_{t2}, \cdots, \mathbf{\hat{B}}_{sn}-\mathbf{\hat{B}}_{tn}).\nonumber
\end{equation}

 Similarly, we can obtain the $i_{th}$ column of $\mathbf{X}$ as
\begin{equation}
\mathbf{X_{\cdot i}}=(\mathbf{\hat{A}}_{p_1 1}-\mathbf{\hat{A}}_{q_1 1}, \mathbf{\hat{A}}_{p_2 2}-\mathbf{\hat{A}}_{q_2 2}, \cdots, \mathbf{\hat{A}}_{p_n n}-\mathbf{\hat{A}}_{q_n n} )^\mathbf{T},\nonumber
\end{equation}
where $p_i(q_i)$ means the row number of the $1(-1)$ in the $i_{th}$ column of $\mathbf{\tilde{\Pi}-\Pi_0}$, when $\mathbf{\tilde{\Pi}}$ and $\mathbf{\Pi_0}$ map node $i$ in $G_2$ into different nodes in $G_1$.
If they map node $j$ in $G_2$ to the same node in $G_1$, then we set $\mathbf{X}_{ji}=0$. Therefore for a single value on the diagonal of $\mathbf{Z}$,  i.e., $\mathbf{Z}_{ii}$, we can bound its absolute value as
\begin{equation}
\label{i42}
\begin{aligned}
|\mathbf{Z}_{ii}|&=| \langle \mathbf{Y_{i\cdot}} \mathbf{X_{\cdot i}} \rangle | \leq ||\mathbf{Y_{i\cdot}}||_F||\mathbf{X_{\cdot i}}||_F
\\& \leq n \max_{k} |\mathbf{\hat{B}}_{sk}-\mathbf{\hat{B}}_{tk}| \max_{\ell} |\mathbf{\hat{A}}_{p_{\ell} \ell}-\mathbf{\hat{A}}_{q_{\ell} \ell}|.
\end{aligned}
\end{equation}
Taking the expectation of $\mathbf{A}$ and $\mathbf{B}$ on both sides of Inequality (\ref{i42}), we can obtain that

\begin{equation}
\begin{aligned}
\textbf{E}_{\mathbf{A,B}}(|\mathbf{Z_{ii}}|)&=
\textbf{E}(\max_{s,t,k} |\mathbf{\hat{B}}_{sk}-\mathbf{\hat{B}}_{tk}| \max_{p,q,\ell} |\mathbf{\hat{A}}_{p_{\ell} \ell}-\mathbf{\hat{A}}_{q_{\ell} \ell}|)
\\&\leq \max_{s,t,j} \{[(p_{\boldsymbol{C_s C_j}}+p_{\boldsymbol{C_t C_j}})\max\{s_1, s_2\} ]^2\}
= L,\nonumber
\end{aligned}
\end{equation}
based on the Jensen's Inequality. Hence
\begin{equation}
\label{Inequality:70}
|\textbf{tr}((\mathbf{\tilde{\Pi}-\Pi_0})\mathbf{\hat{B}}(\mathbf{(\tilde{\Pi}-\Pi_0)^T})\mathbf{\hat{A}})| \leq n \max_{i} | \langle \mathbf{Y_{i\cdot}} \mathbf{X_{\cdot i}} \rangle | \leq n^2L.
\end{equation}

\underline{4. Upper Bound of $\frac{||(\mathbf{\Pi_0}-\mathbf{\tilde{\Pi}})||_F^2}{||\mathbf{\Pi_0}||_F^2}$}

Upon completion of the former three parts, now we can move to the final part. Specifically, from Inequalities (\ref{Inequality:60}), (\ref{Inequality:65}) and (\ref{Inequality:70}), we can obtain
\begin{equation}
\begin{aligned}
&||(\mathbf{\Pi_0}-\mathbf{\tilde{\Pi}})||_F^2 \leq
\frac{2}{nK} ||(\mathbf{\Pi_0}-\mathbf{\tilde{\Pi}})\mathbf{\hat{B}}||_F^2
\\&\leq \frac{8}{nK} ||\mathbf{\Pi_0} \mathbf{\hat{B}} \mathbf{\Pi_0^T}-\mathbf{\hat{A}}||_F^2+
2\textbf{tr}((\mathbf{\tilde{\Pi}-\Pi_0})\mathbf{\hat{B}}(\mathbf{(\tilde{\Pi}-\Pi_0)^T})\mathbf{\hat{A}})
 \\&\leq \frac{8}{nK} ||\mathbf{\Pi_0} \mathbf{\hat{B}} \mathbf{\Pi_0^T}-\mathbf{\hat{A}}||_F^2 +\frac{4nL}{K}.\nonumber
\end{aligned}
\end{equation}

Since $\mathbf{\tilde{\Pi}}$ is the minimizer of $||\mathbf{\hat{A}}-\mathbf{\Pi} \mathbf{\hat{B}} \mathbf{\Pi^T}||_F^2$ and the second condition, ${||{\mathbf{\hat{A}}}-\mathbf{\Pi_0}{\mathbf{\hat{B}}}\mathbf{\Pi_0^T}||_F^2}/{||{\mathbf{\hat{A}}}-\mathbf{\tilde{\Pi}}{\mathbf{\hat{B}}}\mathbf{\tilde{\Pi}^T}||_F^2}=
\Omega(1)$ holds, there exists a constant $\tilde{c}\geq 1$ such that $||\mathbf{\hat{A}}-\mathbf{{\Pi_0}} \mathbf{\hat{B}} \mathbf{{\Pi_0}^T}||_F \leq \tilde{c}||\mathbf{\hat{A}}-\mathbf{\tilde{\Pi}} \mathbf{\hat{B}} \mathbf{\tilde{\Pi}^T}||_F$.
Therefore since $||\mathbf{\Pi_0}||_F^2=2n$ and the first and third condition, we can bound the relative NME when $n \rightarrow \infty$ as:
\begin{equation}
\begin{aligned}
\frac{||(\mathbf{\Pi_0}-\mathbf{\tilde{\Pi}})||_F^2}{||\mathbf{\Pi_0}||_F^2}
&\leq \frac{4}{n^2K} ||\mathbf{\Pi_0} \mathbf{\hat{B}} \mathbf{\Pi_0^T}-\mathbf{\hat{A}}||_F^2+\frac{2L}{K}
\\&=\frac{4\tilde{c}}{n^2K} ||\mathbf{\tilde{\Pi}} \mathbf{\hat{B}} \mathbf{\tilde{\Pi}^T}-\mathbf{\hat{A}}||_F^2+\frac{2L}{K}\rightarrow0.
\nonumber
\end{aligned}
\end{equation}


This completes our proof.
\end{proof}

Theorem \ref{th3} demonstrates that under certain conditions, the relative NME goes to $0$ when the size of network tends to be infinity. Although this result does not show that the NME, expressed as $||\mathbf{\tilde{\Pi}}-\mathbf{\Pi_0}||_F^2$, vanishes under the conditions, it shows that compared with the number of nodes in the network, the NME can be neglected when the size of network is very large. This phenomenon makes sense in de-anonymization since it demonstrates that by minimizing the weighted-edge matching problem (WEMP), we can neglect the NME in large social networks and map most of the nodes correctly.


\textbf{The Positive Impact of Overlapping Communities on Theorem \ref{th3}:} Now we demonstrate that the overlapping communities exert a positive impact on diminishing the relative NME through making the conditions in Theorem \ref{th3} more prone to be satisfied. Specifically, when the overlapping strength in the networks becomes stronger, then the condition $3$ is easier to be met. We claim that condition $3$ is a decisive prerequisite for the vanish of relative NME, since conditions $2$ and $4$ are easy to meet by the common assumption that true mapping keeps invariant of the community representations and the additive constraint about communities, which we will discuss in Section \ref{algorithmdesign}. Therefore the overlapping strength holds a balance in vanishing the relative NME.

For convenience, we assume that $s=s_1=s_2$ in the following setting. Note that when the correct mapping $\pi_0$ keeps invariant of community representations, then on average condition $3$ can be written as
\begin{equation}
2\sum_{1\leq i<j\leq n} \log \left( \frac{1-p_{\boldsymbol{C_iC_j}}(2s-s^2)}{p_{\boldsymbol{C_iC_j}}(1-s)^2} \right) p_{\boldsymbol{C_iC_j}}s=o(Kn^2).
\end{equation}
To characterize the global situation in the networks, we define an average probability $\hat{p}$ such that
\begin{equation}
\begin{aligned}
&\sum_{1\leq i<j\leq n} \log \left( \frac{1-p_{\boldsymbol{C_iC_j}}(2s-s^2)}{p_{\boldsymbol{C_iC_j}}(1-s)^2} \right) p_{\boldsymbol{C_iC_j}}s\\&=\frac{n(n-1)}{2}\log \left( \frac{1-\hat{p}(2s-s^2)}{\hat{p}(1-s)^2} \right) \hat{p}s,
\end{aligned}
\end{equation}
where $\hat{p}$ is positively correlated to the overlapping strength of the whole networks. Taking the derivative of $\hat{p}$ over $\log \left( \frac{1-\hat{p}(2s-s^2)}{\hat{p}(1-s)^2} \right) \hat{p}s$, we find that
\begin{equation}
\frac{d( \log \left( \frac{1-\hat{p}(2s-s^2)}{\hat{p}(1-s)^2} \right) \hat{p}s)}{d\hat{p}}=\log \left( \frac{1-\hat{p}(2s-s^2)}{\hat{p}(1-s)^2} \right) s-\frac{1}{1-\hat{p}(2s-s^2)},
\end{equation}
and it is easy to verify that $\frac{d\left( \log \left( \frac{1-\hat{p}(2s-s^2)}{\hat{p}(1-s)^2} \right) \hat{p}s\right)}{d\hat{p}}$ is a decreasing function in terms of $\hat{p}$. Now focus on $\frac{d\left( \log \left( \frac{1-\hat{p}(2s-s^2)}{\hat{p}(1-s)^2} \right) \hat{p}s\right)}{d\hat{p}}$. If we consider dense communities such that $\hat{p}=1-o(1)$, which means that $\hat{p}$ asymptotically approaches $1$ (shown to be right under the Overlapping Stochastic Block Model(OSBM) below), then we can derive
\begin{equation}
\begin{aligned}
\log \left( \frac{1-\hat{p}(2s-s^2)}{\hat{p}(1-s)^2} \right) \hat{p}s&=\log \left( 1+\frac{1-\hat{p}}{\hat{p}(1-s)^2} \right) \hat{p}s
\\&\sim \frac{1-\hat{p}}{(1-s)^2}s = o(1),
\end{aligned}
\end{equation}
where $s=\Omega(1)$. Therefore if $\hat{p}$ is asymptotically close to $1$ as the overlapping strength enhances, then the order of ${||{\mathbf{\hat{A}}}-\mathbf{\Pi_0}{\mathbf{\hat{B}}}\mathbf{\Pi_0^T}||_F^2}$ turns smaller, which is more prone to satisfy ${||{\mathbf{\hat{A}}}-\mathbf{\Pi_0}{\mathbf{\hat{B}}}\mathbf{\Pi_0^T}||_F^2}=o(Kn^2)$.

Taking a vivid example of the overlapping stochastic block model (OSBM) in which
\begin{equation}
\label{exp39}
p_{\boldsymbol{C_iC_j}}=\frac{1}{1+ae^{-x}},
\end{equation}
where $a$ is an adjustable parameter and $x$ is the number of overlapping communities. We find that $\min_{i,j} p_{\boldsymbol{C_iC_j}}=\frac{1}{1+a}$ is a constant if $a=\Omega(1)$, and can be arbitrarily close to $1$ when $x$ is large enough. So if $s=o(1)$ and $\hat{p}=1-o(1)$, which means that the overlapping strength is very large, then
\begin{equation}
\begin{aligned}
\log \frac{1-\hat{p}(2s-s^2)}{p(1-s)^2}p&=\log (1+\frac{1-\hat{p}}{\hat{p}(1-s)^2})p
\\&\approx \frac{1-\hat{p}}{(1-s)^2}=o(\min_{i,j} p_{\boldsymbol{C_iC_j}})=o(1),
\end{aligned}
\end{equation}
thus condition $3$ holds. Meanwhile note that $s=o(1)$ makes condition $1$ hold as well. Therefore all the four conditions in Theorem \ref{th3} hold, thus the relative NME vanishes to $0$.

\subsection{Algorithm Design and Convergence Analysis}
\label{algorithmdesign}

In Sections \ref{estimateMMSE} and \ref{algorithmicA} we have verified the validity of the transformation from MMSE estimator to the weighted-edge matching problem (WEMP). In this section, we will propose an algorithm to solve WEMP and analyze its convergence.

\subsubsection{Formulation of WEMP in Constrained Optimization Form}\label{AlgorithmB1}

Before designing the algorithm, we first restate WEMP in the form of the following constrained optimization problem:

\begin{align}
\text{minimize}  \|(\mathbf{\hat{A}}-&\mathbf{\Pi}\mathbf{\hat{B}}\mathbf{\Pi^T})\|_\mathrm{F}^2 \nonumber\\[-2pt]
\text{\textbf{s.t. }}  \forall i\in V_1,\ \textstyle\sum_{i}\mathbf{\Pi}_{ij}&=1\label{P3:constraint1}\\[-2pt]
\forall j\in V_2,\ \textstyle\sum_{j}\mathbf{\Pi}_{ij}&=1 \label{P3:constraint2}\\[-2pt]
\forall i,j,\ \mathbf{\Pi}_{ij}\in&\{0,1\},\label{P3:constraint3}
\end{align}

Additionally, note that in previous sections we have assume that the true mapping between $G_1$ and $G_2$ should keep invariant of the community representation of every node before and after mapping. That is to say, the same user in $G_1$ and $G_2$ belongs to the same subset of communities, which is in line with real situations where there is no difference in communities in $G_1$ and $G_2$. To elaborate, let us recall Fig. \ref{fig:1}, where the communities in $G_1$ and $G_2$ are with no differences since the number of communities are the same and the corresponding communities in two networks contain the same subset of users. Here we point out that we keep this assumption in our algorithm design. Therefore, in order to obtain the correct mapping $\pi_0$, another constraint about community representation should be added, which is
\begin{equation}
\label{con1}
\forall i \in V_1, \boldsymbol{C_{i}}=\boldsymbol{C_{\pi(i)}}.
\end{equation}

Eqn. \ref{con1} means that our estimated mapping $\pi$ should keep the community representation of all the nodes in $V_1$ unchanged before and after mapping. Note that it is hard to implement this constraint directly in the optimization problem since it is not in the form of permutation matrix. However, we can easily convert it into a suitable one by defining a new matrix to characterize the community representation of all the nodes, which we call as ``Community Representation Matrix'', denoted as $\mathbf{M}$. Its formal definition is as follows.

 \begin{definition}{(\textbf{Community Representation Matrix})}
 \label{def11}
 Given a graph $G$ with $n$ nodes and $m$ communities, the community representation matrix of $G$ is an $n \times m$ matrix $\mathbf{M}$ which is composed of $0$s and $1$s, and $\forall i \in \{1,2,\cdots,n\}$, the $i_{th}$ row of $\mathbf{M}$ is the community representation of node $i$ in $G$.
 \end{definition}

 Take Fig. \ref{fig:1} as an instance again. The community representation matrix of $G$, denoted as $\mathbf{M_G}$, satisfies
 \begin{equation}
 \mathbf{M_G^T}=
 \begin{bmatrix}
 1 & 0 & 1 & 1 & 0 & 0 & 0 & 0 & 0 \\
 0 & 1 & 1 & 1 & 0 & 1 & 1 & 1 & 0 \\
 0 & 0 & 1 & 1 & 1 & 0 & 1 & 1 & 1
 \end{bmatrix}
 .\nonumber
 \end{equation}

 Note that the community representation matrices for $G$, $G_1$ and $G_2$ are identical. So we set all of them to be $\mathbf{M}$. Hence the constraint (\ref{con1}) can be rewritten as $||\mathbf{\Pi}\mathbf{M}-\mathbf{M}||_F^2=0$. According to optimization theory, we can form this constraint into the objective function by regarding it as the penalty term and obtain a new objective function
\begin{equation}
F_0(\mathbf{\Pi})=||\mathbf{\hat{A}}-\mathbf{\Pi}\mathbf{\hat{B}}\mathbf{\Pi^T}||_\mathrm{F}^2+\mu ||\mathbf{\Pi}\mathbf{M}-\mathbf{M}||_F^2,\nonumber
\end{equation}
where $\mu$ is an adjustable penalty parameter, which is large enough such that when the objective function reaches its minimum value, $||\mathbf{\Pi}\mathbf{M}-\mathbf{M}||_F^2$ is exactly or very close to $0$. Note that this transformation of objective function does not affect the previous analytical results of WEMP since we have the assumption that the true mapping keeps invariant of the community representation of every single node before and after mapping. Then with the aim of finding the true permutation matrix $\mathbf{\Pi_0}$, we must have $||\mathbf{\Pi_0}\mathbf{M}-\mathbf{M}||_F^2=0$, thus the objective function is the same as that of WEMP.

\subsubsection{Problem Relaxation and Idea of Algorithm Design}
Hereinafter, we focus on how we design our algorithm targeting the WEMP.

\textbf{Problem Relaxation:} WEMP is an integer program problem which cannot be solved efficiently. We relax the original feasible region of WEMP $\Omega_0$ into $\Omega$, which are respectively
\vspace{-1mm}
\begin{equation}
\begin{aligned}
\centering
&\Omega_0=\{\mathbf{\Pi}_{ij}\in\{0,1\}\boldsymbol{|}\forall i,j , \textstyle\sum_{i}\mathbf{\Pi}_{ij}=1\
, \textstyle\sum_{j}\mathbf{\Pi}_{ij}=1 \};
\\& \Omega=\{\mathbf{\Pi}_{ij}\in[0,1] \boldsymbol{|}\forall i,j , \textstyle\sum_{i}\mathbf{\Pi}_{ij}=1\
, \textstyle\sum_{j}\mathbf{\Pi}_{ij}=1 \}.\nonumber
\end{aligned}
\vspace{-2mm}
\end{equation}

After this relaxation the problem becomes tractable. However, a natural question arises: \emph{How to obtain the solution of the original unrelaxed problem from that of the relaxed problem}?

\textbf{Idea of Convex-Concave Optimization Method:}
Note that the minimizer of a concave function must be at the boundary of the feasible region,  coinciding that $\Omega_0$, the original feasible set, is just the boundary of $\Omega$. Therefore, a natural idea emerges: \emph{We can modify the convex relaxed problem into a concave problem gradually.} Thus we apply the convex-concave optimization method (CCOM), whose concept is pioneeringly proposed in \cite{zaslavskiy2009path} to solve graph matching problems: For $F_0(\mathbf{\Pi})$, 
we find its convex and concave relaxed version respectively $F_1(\mathbf{\Pi})$ and $F_2(\mathbf{\Pi})$.
Then we obtain a new objective function as
$F(\mathbf{\Pi})=(1-\alpha) F_1(\mathbf{\Pi})+\alpha F_2(\mathbf{\Pi})$. We modify $\alpha$ gradually from $0$ to $1$ with interval $\Delta\alpha$, each time solving the new $F(\mathbf{\Pi})$ initialized by the optimizer last time. $F(\mathbf{\Pi})$ becomes more concave, with its optimum closer to $\Omega_0$ where $\mathbf{\tilde{\Pi}}$ lies.

\subsubsection{\textbf{Implementation of CCOM and Algorithm Design}}

Although \cite{zaslavskiy2009path} has proposed the general framework of CCOM, the way it presents to obtain $F_1(\mathbf{\Pi})$ and $F_2(\mathbf{\Pi})$ is rather complex, as it involves Kronecker product and the Laplacian matrix of graphs. Here we provide a simple way, as defined in Lemma \ref{lemma3}, to get the convex relaxation and concave relaxation, for simplifying the objective function compared with that in \cite{zaslavskiy2009path}.

\begin{lemma}
\label{lemma3}
A proper way to get the convex relaxation and concave relaxation is
\begin{equation}
F_1(\mathbf{\Pi})=F_0(\mathbf{\Pi})+\frac{\lambda_{min}}{2} (n-||\mathbf{\Pi}||_F^2);\nonumber
\end{equation}
\begin{equation}
F_2(\mathbf{\Pi})=F_0(\mathbf{\Pi})+\frac{\lambda_{max}}{2} (n-||\mathbf{\Pi}||_F^2).\nonumber
\end{equation}
Therefore we form our new objective function in CCOM as
\begin{equation}
F(\mathbf{\Pi})=(1-\alpha)F_1(\mathbf{\Pi})+\alpha F_2(\mathbf{\Pi}) = F_0(\mathbf{\Pi})+2\xi(n-||\mathbf{\Pi}||_F^2),\nonumber
\end{equation}
where $\xi=(1-\alpha)\lambda_{min}+\alpha\lambda_{max}$,  $\xi \in [\lambda_{min},\lambda_{max}]$.

\end{lemma}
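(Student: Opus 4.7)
My plan is to prove three facts that together establish the lemma: (i) $F_1(\mathbf{\Pi})=F_2(\mathbf{\Pi})=F_0(\mathbf{\Pi})$ whenever $\mathbf{\Pi}$ is a permutation matrix, so that the CCOM deformation path preserves the true WEMP objective on the discrete feasible set $\Omega_0$; (ii) $F_1$ is convex on the relaxed domain $\Omega$; (iii) $F_2$ is concave on $\Omega$. The interpolation identity $F(\mathbf{\Pi})=F_0(\mathbf{\Pi})+\tfrac{1}{2}\xi(n-\|\mathbf{\Pi}\|_F^2)$ with $\xi=(1-\alpha)\lambda_{min}+\alpha\lambda_{max}$ then drops out by expanding $(1-\alpha)F_1+\alpha F_2$ and grouping the common factor $n-\|\mathbf{\Pi}\|_F^2$ (the $2\xi$ that appears in the statement appears to be a typographical artefact; the substance of the claim is unaffected).

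Fact (i) is immediate: every $n\times n$ permutation matrix has exactly $n$ ones and $n^2-n$ zeros, so $\|\mathbf{\Pi}\|_F^2=n$ and the additive perturbation $\tfrac{\lambda}{2}(n-\|\mathbf{\Pi}\|_F^2)$ vanishes for any $\lambda$. Consequently $F_1$, $F_2$ and $F_0$ share the same minimisers over $\Omega_0$, which is what legitimises the deformation scheme.

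The substantive work lies in (ii)--(iii). First I would reduce $F_0$ to a genuine quadratic form in $\operatorname{vec}(\mathbf{\Pi})$: by the remark following Lemma \ref{lemma5}, on the community-preserving subset (which the penalty $\mu\|\mathbf{\Pi M}-\mathbf{M}\|_F^2$ forces asymptotically as $\mu\to\infty$) one has $\|\mathbf{\hat A}-\mathbf{\Pi\hat B\Pi^T}\|_F=\|\mathbf{\Pi\hat A}-\mathbf{\hat B\Pi}\|_F$, and the Kronecker identity $\operatorname{vec}(\mathbf{\Pi\hat A}-\mathbf{\hat B\Pi})=(\mathbf{\hat A^T}\otimes\mathbf{I}-\mathbf{I}\otimes\mathbf{\hat B})\operatorname{vec}(\mathbf{\Pi})$ together with the analogous vectorisation of $\|\mathbf{\Pi M}-\mathbf{M}\|_F^2$ exhibits $F_0$ as $\operatorname{vec}(\mathbf{\Pi})^{\!\top}\mathbf{H}\operatorname{vec}(\mathbf{\Pi})+\langle\text{linear}\rangle+\text{const}$ for an explicit symmetric $\mathbf{H}\in\mathbb{R}^{n^2\times n^2}$. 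Set $\lambda_{min}:=\lambda_{\min}(\mathbf{H})$ and $\lambda_{max}:=\lambda_{\max}(\mathbf{H})$. Since the vectorised Hessian of $\|\mathbf{\Pi}\|_F^2$ is $2\mathbf{I}_{n^2}$, the perturbation $\tfrac{\lambda}{2}(n-\|\mathbf{\Pi}\|_F^2)$ contributes $-\lambda\mathbf{I}_{n^2}$ to the Hessian, so $\nabla^2 F_1=\mathbf{H}-\lambda_{min}\mathbf{I}\succeq 0$ (convex) and $\nabla^2 F_2=\mathbf{H}-\lambda_{max}\mathbf{I}\preceq 0$ (concave), completing (ii) and (iii).

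The main obstacle I anticipate is the quadratic reduction step: away from the community-preserving subset the original expression $\|\mathbf{\hat A}-\mathbf{\Pi\hat B\Pi^T}\|_F^2$ is quartic in $\mathbf{\Pi}$ and its Hessian is not constant, so one must argue either that the penalty term restricts the effective search domain to the subset on which the quadratic surrogate coincides with $F_0$, or alternatively derive uniform spectral bounds of the quartic's Hessian over all of $\Omega$. Once this bookkeeping is handled, the identification of $\lambda_{min}$ and $\lambda_{max}$ with the extreme eigenvalues of the resulting symmetric Hessian yields the convexity/concavity assertions directly from the standard second-order characterisation.
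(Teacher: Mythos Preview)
Your core idea --- shifting the Hessian of $F_0$ by its extreme eigenvalues to force positive (resp.\ negative) semidefiniteness --- is exactly what the paper does. The paper's proof is very terse: it writes $\nabla^2 F_1(\mathbf{\Pi})=\nabla^2 F_0(\mathbf{\Pi})-\lambda_{\min}\mathbf{I}$, observes that subtracting the smallest eigenvalue leaves all eigenvalues nonnegative, and concludes; the concave case is declared analogous.

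You go further in two respects. First, you explicitly verify that the added term $\tfrac{\lambda}{2}(n-\|\mathbf{\Pi}\|_F^2)$ vanishes on $\Omega_0$ (your fact (i)), which legitimises the whole deformation scheme; the paper leaves this implicit. Second, you take seriously whether $\nabla^2 F_0$ is a fixed symmetric matrix at all, supplying the Kronecker-vectorisation reduction $\|\mathbf{\Pi\hat A}-\mathbf{\hat B\Pi}\|_F^2=\operatorname{vec}(\mathbf{\Pi})^{\!\top}\mathbf{H}\operatorname{vec}(\mathbf{\Pi})$ and flagging that the original $\|\mathbf{\hat A}-\mathbf{\Pi\hat B\Pi^T}\|_F^2$ is quartic off the community-preserving set. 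The paper simply treats $\nabla^2 F_0(\mathbf{\Pi})$ as if it had well-defined constant eigenvalues and never addresses this point; your version is the more honest one, though the bookkeeping you anticipate (uniform spectral bounds over $\Omega$, or restriction via the penalty) remains to be filled in either way. You are also correct that the displayed coefficient $2\xi$ is inconsistent with the given definitions of $F_1$, $F_2$ and $\xi$; the paper does not notice or repair this.
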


\begin{proof}
First we verify that $F_1(\mathbf{\Pi})$ is a convex function. One of the sufficient and necessary condition for a function whose variable is matrix  is convex is that the Hessian matrix of this function is positive semi-definite. The Hessian matrix of $F(\mathbf{\Pi})$ can be obtained by taking the second derivative over $\mathbf{\Pi}$ on $F(\mathbf{\Pi})$, we denote it as $\nabla^2F(\mathbf{\Pi})$. Therefore we can obtain the Hessian matrix of $F_1(\mathbf{\Pi})$ by
\begin{equation}
\nabla^2F_1(\mathbf{\Pi})=\nabla^2F_0(\mathbf{\Pi})-\lambda_{min}\mathbf{I}
.\nonumber
\end{equation}
where $\mathbf{I}$ is the identity matrix\footnote{The identity matrix $I$ means all the elements on the diagonal of $I$ are all $1$s while others are all $0$s. Note that here $I$ is an $n^2 \times n^2$ matrix since the first order derivative of a function whose variable is a matrix is a $n\times n$ matrix, thus the second derivative of $F_0$ ($F_1$) is $n^2 \times n^2$ matrix.}. Note that $\lambda_{min}$ is the minimum eigenvalue of $\nabla^2F_0(\mathbf{\Pi})$, therefore all the eigenvalues of $\nabla^2F_0(\mathbf{\Pi})-\lambda_{min}\mathbf{I}$ are equal to or larger than $0$. Hence $\nabla^2F_1(\mathbf{\Pi})$ is a nonnegative definite matrix and $F_1(\mathbf{\Pi})$ is a convex function.

Meanwhile, one of the sufficient and necessary conditions for a function whose variable is matrix  is concave is that the Hessian matrix of this function is negative semi-definite. Similar to the analysis of $F_1(\mathbf{\Pi})$, we can verify that $F_2(\mathbf{\Pi})$ is a concave function. Thus we complete the proof.
\end{proof}

Lemma \ref{lemma3} presents a simple way to implement CCOM algorithmically, since $F_0(\mathbf{\Pi})$ is just our objective function in Section \ref{AlgorithmB1} and $||\mathbf{\Pi}||_F^2$ can be computed easily. We can modify $F(\mathbf{\Pi})$ step by step from a convex function to a concave function by modifying the value of $\xi$ or $\alpha$. In the following analysis, we set $F_{\xi}(\mathbf{\Pi})$ equivalent to $F(\mathbf{\Pi})$ since $\xi$ is an adjustable parameter in $F(\mathbf{\Pi})$.

A vivid example of the CCOM under the formulation of $F_{\xi}(\mathbf{\Pi})$ by Lemma \ref{lemma3} is illustrated in Fig. \ref{Fig:convexconcave}. As can be seen in the figure, when $\xi$ starts at $\lambda_{min}$, $F_{\xi}(\mathbf{\Pi})$ is a convex function, thus we can obtain the minimizer of this objective function. After we find the minimizer, we modify $\alpha$ to be $0.2$, thus $\xi=0.8\lambda_{min}+0.2\lambda_{max}$, which makes the objective function become less convex. To obtain the minimizer of this new objective function, we have the prior knowledge of the previous minimizer, and since we only slightly modify the objective function, the optimal solution of new objective function should not deviate much from the previous one intuitively. Therefore we can start from the previous minimizer to find the new minimizer. Gradually, as $\alpha$ becomes increasingly larger, the objective function tends to be concave while the minimizer of it tends to get close to the boundary,  on which the optimal solution of the original WEMP exists. The trail for the minimizer can be referred to the red line with arrows in Fig. \ref{Fig:convexconcave}.

\begin{figure}[htbp]
     	\centering
		\includegraphics[width=0.46\textwidth]{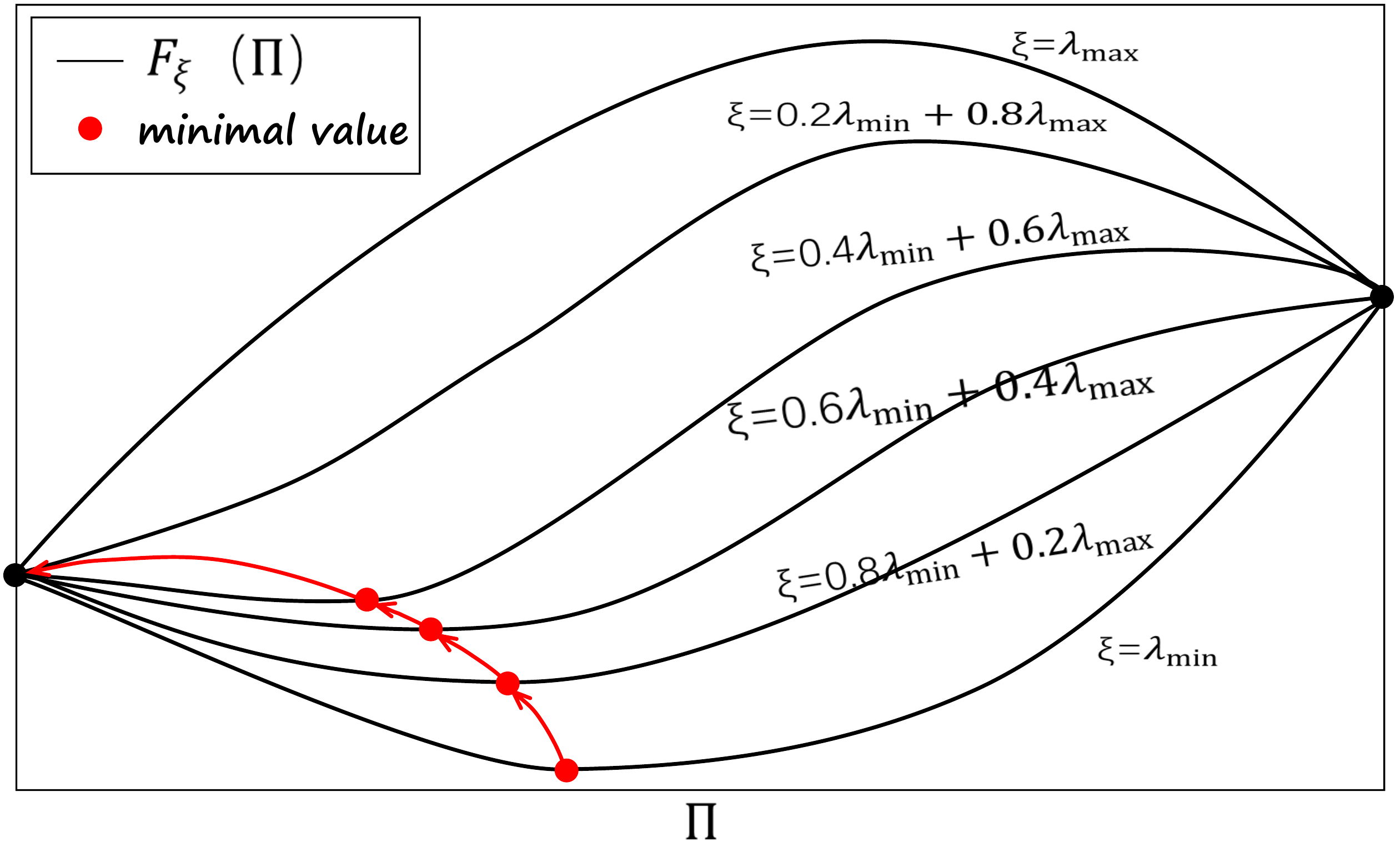}
		\caption{An Illustration of the Implementation of CCOM by Lemma \ref{lemma3}.}
		\label{Fig:convexconcave}
\end{figure}

Based on the above analysis, we propose Algorithm \ref{alg:1} as our main algorithm for the weighted-edge matching problem (WEMP) under CCOM. We call Algorithm \ref{alg:1} \emph{Convex-concave Based De-anonymization Algorithm (CBDA)}. Note that $F_0(\mathbf{\Pi})$ itself is convex in our problem, thus we can set $\xi$ from $0$ to an arbitrarily large number, which obviates the great complexity to calculate eigenvalues of Hessian matrices.

CBDA consists of an outer loop (lines $3$ to $10$) and an inner loop (lines $4$ to $8$). The outer loop modifies $\xi$ in CCOM. The inner loop finds the minimizer of $F(\mathbf{\Pi})$, whose main idea resembles descending algorithms:
In line $5$, we obtain descending direction by minimizing $\textbf{tr}(\nabla_{\mathbf{\Pi_k}} F(\mathbf{\Pi_k})^T\mathbf{X^{\bot}})$, dangling the highest probability to find a descending direction characterized by $\textbf{tr}(\nabla_{\mathbf{\Pi_k}} F(\mathbf{\Pi_k})^T\mathbf{X^{\bot}})<0$.
In line $6$ we search for step length $\gamma_k$ contributing most to lowering $F(\mathbf{\Pi})$ on this descending direction. Line $7$ is the update of estimation.

\begin{algorithm}[h]
\caption{Convex-concave Based De-anonymization Algorithm (CBDA)}
\begin{algorithmic}[1]
\REQUIRE ~Adjacent matrices $\mathbf{A}$ and $\mathbf{B}$;
Community assignment matrix $\mathbf{M}$; \\
\quad Weight controlling parameter $\mu$;
Adjustable parameters $\delta$, $\Delta\xi$. \\
\ENSURE  ~Estimated permutation matrix $\mathbf{\tilde{\Pi}}$.

\STATE Form the objective function $F_0(\mathbf{\Pi})$ and $F(\mathbf{\Pi})$.
\STATE $\xi\leftarrow0$, $k\leftarrow1$. Initialize $\mathbf{\Pi_1}$. Set $\xi_m$, the upper limit of $\xi$.
\WHILE {$\xi<\xi_{m}$ and $\mathbf{\Pi_k} \notin \Omega_0$}
\WHILE {$k = 1$ or $|F(\mathbf{\Pi_{k+1}})-F(\mathbf{\Pi_k})|\geq \delta$
}
\STATE $\mathbf{X^{\bot}}\leftarrow \arg\min_{\mathbf{X^{\bot}}}\textbf{tr}(\nabla_{\mathbf{\Pi_k}} F(\mathbf{\Pi_k})^T\mathbf{X^{\bot}})$, where $\mathbf{X^{\bot}}\in \Omega$.
\\
//Finding the optimal descent direction
\STATE $\gamma_k\leftarrow \arg\min_{\gamma}F(\mathbf{\Pi_k}+\gamma(\mathbf{X^{\bot}}-\mathbf{\Pi_k}))$, where $\gamma_k \in [0,1]$.
//Finding the optimal step size
\STATE $\mathbf{\Pi_{k+1}}\leftarrow \mathbf{\Pi_{k}}+\gamma_k(\mathbf{X^{\bot}}-\mathbf{\Pi_k})$, $k\leftarrow k+1$.
//Estimation Update
\ENDWHILE
\STATE $\xi \leftarrow \xi+\Delta\xi$.
\vspace{-1mm}
\ENDWHILE
\vspace{-1mm}
\end{algorithmic}
\label{alg:1}
\end{algorithm}

\subsubsection{Time Complexity and Convergence Analysis}

\

\textbf{Time Complexity:} The inner loop is similar to the Frank-Wolfe algorithm, with $O(n^6)$ in a round (since the input is an $n \times n$ matrix). If the maximum number of inner loops as $T$,
thus the whole algorithm has a complexity of $O\big(\frac{n^6T\xi}{\Delta\xi}\big)$
. As far as we know, a dearth of algorithmic analysis of seedless de-anonymization exists except for \cite{Fu:arxiv,Fu:GC}, with their proposed algorithm sharing identical complexity of $O(n^6)$ with ours.

\textbf{Convergence:} There are two loops in CBDA and we provide convergence analysis on them respectively. Before that, we first clarify that:

\begin{itemize}
\item We set
    $\mathbf{\mathbf{\Pi_{k}}}$ as the estimation after $k$ rounds in the inner loop, thus $\mathbf{\mathbf{\Pi_{k+1}}}$ is the estimation after $k+1$ rounds in the inner loop and $\mathbf{\mathbf{\Pi_{k+1}}}=\mathbf{\Pi_{k}}+\gamma_k(\mathbf{X_{\bot}}-\mathbf{\Pi_k})$.
\item We set $F_{\xi}(\mathbf{\Pi})=F_0(\mathbf{\Pi})+\xi(n-||\mathbf{\Pi}||_F^2)$ and $\mathbf{\Pi^{\xi}}$ as the minimizer of $F_{\xi}(\mathbf{\Pi})$.
    Thus $F_{\xi+\Delta \xi}(\mathbf{\Pi})=F_0(\mathbf{\Pi})+(\xi+\Delta\xi)(n-||\mathbf{\Pi}||_F^2)$
    and $\mathbf{\Pi^{\xi+\Delta \xi}}$ is the minimizer of $F_{\xi+\Delta \xi}(\mathbf{\Pi})$.

\end{itemize}

Then we propose Lemma \ref{lemma7} to discuss the convergence of CBDA.

\begin{lemma}
\label{lemma7}
CBDA converges and the final output is a permutation matrix in the original feasible region $\Omega_0$.
\end{lemma}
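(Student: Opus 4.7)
My plan is to decompose the proof into (i) convergence of the inner loop (lines 4--8) for each fixed value of $\xi$, and (ii) termination of the outer loop (lines 3--10) at an iterate that lies in $\Omega_0$.

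For the inner loop, I would observe that it is precisely a Frank--Wolfe (conditional gradient) iteration on the compact convex polytope $\Omega$: line 5 is the standard linear-minimization oracle that minimizes the linearization $\mathbf{tr}(\nabla_{\mathbf{\Pi_k}} F(\mathbf{\Pi_k})^T \mathbf{X^{\bot}})$ over $\Omega$, and line 6 performs an exact line search restricted to $[0,1]$. Because $\gamma_k$ is chosen optimally, $F(\mathbf{\Pi_{k+1}}) \leq F(\mathbf{\Pi_k})$, so the sequence of objective values is monotone non-increasing. Since $F$ is continuous and $\Omega$ is compact, $F$ is bounded below on $\Omega$, hence the monotone sequence converges and the stopping criterion $|F(\mathbf{\Pi_{k+1}}) - F(\mathbf{\Pi_k})| < \delta$ fires after finitely many iterations.

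For the outer loop, the crucial algebraic fact I would invoke is that for every doubly stochastic $\mathbf{\Pi} \in \Omega$ one has $\|\mathbf{\Pi}\|_F^2 = \sum_{i,j}\mathbf{\Pi}_{ij}^2 \leq \sum_{i,j}\mathbf{\Pi}_{ij} = n$, with equality if and only if $\mathbf{\Pi} \in \Omega_0$ (because $\mathbf{\Pi}_{ij}\in[0,1]$ forces $\mathbf{\Pi}_{ij}^2 \leq \mathbf{\Pi}_{ij}$, and each row sum equals $1$). Consequently the penalty $\xi(n-\|\mathbf{\Pi}\|_F^2)$ appearing in $F_\xi$ is nonnegative on $\Omega$ and vanishes precisely on the permutation matrices. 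By Lemma~\ref{lemma3}, once $\xi$ exceeds $\lambda_{max}$ the modified objective $F_{\xi}$ becomes concave, and a concave function on a compact convex set attains its minimum at an extreme point. The Birkhoff--von Neumann theorem identifies the extreme points of $\Omega$ with the permutation matrices in $\Omega_0$, so in the concave regime the inner loop must produce an iterate in $\Omega_0$, whereupon the outer loop's termination condition $\mathbf{\Pi_k}\in\Omega_0$ is triggered within at most $\lceil \xi_m/\Delta\xi\rceil$ rounds.

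The main obstacle I anticipate is justifying that the inner loop in the concave regime indeed returns an extreme point rather than stalling at a non-vertex stationary iterate, since Frank--Wolfe is classically analyzed for convex objectives. My plan here is to exploit the exact line search: along any chord $\mathbf{\Pi_k} + \gamma(\mathbf{X^{\bot}}-\mathbf{\Pi_k})$ a strictly concave function is minimized at an endpoint, so either $\gamma_k = 1$ and the iterate jumps directly to the vertex $\mathbf{X^{\bot}}\in\Omega_0$, or $\gamma_k = 0$ and the first-order condition $\mathbf{tr}(\nabla F_{\xi}(\mathbf{\Pi_k})^T(\mathbf{X}-\mathbf{\Pi_k}))\geq 0$ for all $\mathbf{X}\in\Omega$, combined with concavity, forces $\mathbf{\Pi_k}$ itself to already lie at an extreme point. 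This coupling between the two loops delivers the claimed conclusion that CBDA converges and outputs a matrix in $\Omega_0$.
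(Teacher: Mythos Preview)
Your decomposition into inner-loop and outer-loop convergence matches the paper's, but the arguments you supply for each half are genuinely different and, in several respects, cleaner.

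For the inner loop, the paper expands $F_\xi(\mathbf{\Pi_{k+1}})$ by Taylor's theorem with an explicit remainder, derives the recursion
\[
F_\xi(\mathbf{\Pi_{k+1}})-F_\xi(\mathbf{\Pi^{\xi}})\le(1-\gamma_k)\bigl(F_\xi(\mathbf{\Pi_k})-F_\xi(\mathbf{\Pi^{\xi}})\bigr)+\gamma_k\Delta\mathbf{R_k},
\]
and then argues that the telescoped product $\prod_i(1-\gamma_i)$ and the remainder sum both vanish. Your monotonicity-plus-compactness argument bypasses all of this: exact line search forces $F(\mathbf{\Pi_{k+1}})\le F(\mathbf{\Pi_k})$, and continuity on the compact polytope gives a lower bound, so the stopping criterion fires in finitely many steps regardless of whether $F_\xi$ is convex. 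This is shorter and does not require controlling Taylor remainders, though it forfeits the quantitative contraction the paper's recursion nominally provides.

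For the outer loop, the paper proves the monotonicity $\|\mathbf{\Pi^{\xi}}\|_F^2\ge\|\mathbf{\Pi^{\xi-\Delta\xi}}\|_F^2$ and the value bound $|F_\xi(\mathbf{\Pi^{\xi}})-F_{\xi-\Delta\xi}(\mathbf{\Pi^{\xi-\Delta\xi}})|\le\Delta\xi(n-1)$, concluding that $\Delta\xi=o(1/n)$ suffices. Your route via Birkhoff--von~Neumann is more structural: once $\xi$ exceeds $\lambda_{\max}$, $F_\xi$ is strictly concave, and you argue a single Frank--Wolfe step lands on a vertex. This gives a more transparent reason why the output lies in $\Omega_0$, which the paper's value-convergence argument establishes only indirectly.

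One small point to tighten: your justification for the $\gamma_k=0$ branch (``first-order condition combined with concavity forces $\mathbf{\Pi_k}$ to be extreme'') is not literally correct---a strictly concave function can be first-order stationary at an interior point (e.g.\ $-x^2$ at $0$). The clean way to close this is to note that the LP oracle always returns a vertex $\mathbf{X^\perp}$, and if $\mathbf{X^\perp}\neq\mathbf{\Pi_k}$ then $\mathbf{tr}(\nabla F_\xi(\mathbf{\Pi_k})^T(\mathbf{X^\perp}-\mathbf{\Pi_k}))\le 0$; strict concavity along the chord then forces $F_\xi(\mathbf{X^\perp})<F_\xi(\mathbf{\Pi_k})$, so the line search must return $\gamma_k=1$. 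Hence $\gamma_k=0$ can only occur when $\mathbf{\Pi_k}=\mathbf{X^\perp}$ is already a vertex. With that adjustment your argument is complete.
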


\begin{proof}
As stated above, showing the convergence of CBDA is equivalent to showing the convergence of both inner and outer loops.

\textbf{1. Inner Loop:}
We focus on $F_{\xi}(\mathbf{\Pi_{k+1}})$ and $F_{\xi}(\mathbf{\Pi_{k+1}})$. Since $\mathbf{\Pi_{k+1}}=\mathbf{\Pi_{k}}+\gamma_k(\mathbf{X_{\bot}}-\mathbf{\Pi_k})$, according to Taylor's Theorem,
\begin{equation}
\label{e65}
\begin{aligned}
F_{\xi}(\mathbf{\Pi_{k+1}})&=F_{\xi}(\mathbf{\Pi_{k}}+\gamma_k(\mathbf{X_{\bot}}-\mathbf{\Pi_k}))
\\&
=F_{\xi}(\mathbf{\Pi_{k}})+\gamma_k \textbf{tr}(\nabla F_{\xi}^T(\mathbf{\Pi_k})(\mathbf{X^{\bot}}-\mathbf{\Pi_k}))+\gamma_k\mathbf{R_k}
\\&\leq F_{\xi}(\mathbf{\Pi_{k}})+\gamma_k \textbf{tr}(\nabla F_{\xi}^T(\mathbf{\Pi_k})(\mathbf{\Pi^{\xi}}-\mathbf{\Pi_k}))+\gamma_k\mathbf{R_k},
\end{aligned}
\end{equation}
where $\gamma_k\mathbf{R_k}$ is the remainder of this Taylor series, and this form makes sense since the remainder must contain a multiplicative factor of $\gamma_k$. The last inequality holds since $\mathbf{X}^{\bot}$ is the minimizer of $\textbf{tr}(\nabla F_{\xi}^T(\mathbf{\Pi_k})(\mathbf{\Pi^{\xi}}-\mathbf{\Pi_k}))$.

In terms of $F_{\xi}(\mathbf{\Pi^{\xi}})$, we have
\begin{equation}
\label{e66}
\begin{aligned}
&F_{\xi}(\mathbf{\Pi^{\xi}})=F_{\xi}(\mathbf{\Pi_k}+\mathbf{\Pi^{\xi}}-\mathbf{\Pi_k})
\\&=F_{\xi}(\mathbf{\Pi_k})+\textbf{tr}(\nabla F_{\xi}^T(\mathbf{\Pi_k})(\mathbf{\Pi^{\xi}}-\mathbf{\Pi_k}))+\mathbf{R_k^{\prime}},
\end{aligned}
\end{equation}
where $\mathbf{R_k^{\prime}}$ is the remainder of this Taylor series.

Combining Eqn. (\ref{e65}) and (\ref{e66}), we can obtain
\begin{equation}
\label{i67}
F_{\xi}(\mathbf{\Pi_{k+1}})\leq
F_{\xi}(\mathbf{\Pi_{k}})+\gamma_k (F_{\xi}(\mathbf{\Pi^{\xi}})-F_{\xi}(\mathbf{\Pi_{k}}))+\gamma_k(\mathbf{R_k}-\mathbf{R_k^{\prime}}).
\end{equation}

Denote $\Delta \mathbf{R_k}= \mathbf{R_k-R_k^{\prime}}$ and by simple transformation of Inequality (\ref{i67}), we obtain
\begin{equation}
\label{i68}
F_{\xi}(\mathbf{\Pi_{k+1}})-F_{\xi}(\mathbf{\Pi^{\xi}})\leq
(1-\gamma_k)(F_{\xi}(\mathbf{\Pi_{k}})-F_{\xi}(\mathbf{\Pi^{\xi}}))+\gamma_k\Delta \mathbf{R_k}.
\end{equation}

Note that Inequality (\ref{i68}) builds up the relationship between $F_{\xi}(\mathbf{\Pi_{k+1}})$ and $F_{\xi}(\mathbf{\Pi_{k}})$, and we obtain
\begin{equation}
\label{i69}
\begin{aligned}
&F_{\xi}(\mathbf{\Pi_{k+1}})-F_{\xi}(\mathbf{\Pi^{\xi}}) \\&\leq \prod_{i=1}^k (1-\gamma_i) (F_{\xi}(\mathbf{\Pi_{1}})-F_{\xi}(\mathbf{\Pi^{\xi}}))
+
\sum_{i=1}^k \gamma_i \prod_{j=1}^{k-i} (1-\gamma_j)\Delta\mathbf{R}_i.
\end{aligned}
\end{equation}

For $F_{\xi}(\mathbf{\Pi_{1}})-F_{\xi}(\mathbf{\Pi^{\xi}})$, note that $\mathbf{\Pi_{1}}=\mathbf{\Pi^{\xi-\Delta\xi}}$, then
\begin{equation}
\label{i70}
\begin{aligned}
F_{\xi}(\mathbf{\Pi^{\xi}})&=F_0(\mathbf{\Pi^{\xi}})+\xi(n-||\mathbf{\Pi^{\xi}}||_F^2)
\\&=F_0(\mathbf{\Pi^{\xi}})+(\xi-\Delta\xi)(n-||\mathbf{\Pi^{\xi}}||_F^2)-\Delta\xi(n-||\mathbf{\Pi^{\xi}}||_F^2)
\\&\geq F_0(\mathbf{\Pi^{\xi-\Delta\xi}})+(\xi-\Delta\xi)(n-||\mathbf{\Pi^{\xi-\Delta\xi}}||_F^2)\\&\quad-\Delta\xi(n-||\mathbf{\Pi^{\xi}}||_F^2)
\\&=F_0(\mathbf{\Pi^{\xi-\Delta\xi}})+\xi(n-||\mathbf{\Pi^{\xi-\Delta\xi}}||_F^2)
\\& \quad +
\Delta\xi(||\mathbf{\Pi^{\xi}}||_F^2-||\mathbf{\Pi^{\xi-\Delta\xi}}||_F^2)
\\&=F_{\xi}(\mathbf{\Pi^{\xi-\Delta\xi}})+\Delta\xi(||\mathbf{\Pi^{\xi}}||_F^2-||\mathbf{\Pi^{\xi-\Delta\xi}}||_F^2)
.
\end{aligned}
\end{equation}

Hence
\begin{equation}
\label{e71}
F_{\xi}(\mathbf{\Pi^{\xi-\Delta\xi}})-F_{\xi}(\mathbf{\Pi^{\xi}})\leq\Delta\xi(||\mathbf{\Pi^{\xi-\Delta\xi}}||_F^2-||\mathbf{\Pi^{\xi}}||_F^2)
.
\end{equation}

Therefore by combining Inequalities (\ref{e71}) and (\ref{i69}), we can obtain if $\Delta\xi$ is small enough, or if $k\rightarrow \infty$, then the term $\prod_{i=1}^k (1-\gamma_i) (F_{\xi}(\mathbf{\Pi_{1}})-F_{\xi}(\mathbf{\Pi^{\xi}}))$ in last expression of Inequality (\ref{i69}) goes to $0$.

For the second term $\sum_{i=1}^k \gamma_i \prod_{j=1}^{k-i} (1-\gamma_j)\Delta\mathbf{R}_i$, we note that when $k\rightarrow \infty$, then $\forall \epsilon>0, \exists K>0, \delta_1>0$, when $i>K$, $\gamma_i \prod_{j=1}^{k-i} (1-\gamma_j)< \gamma_i < \frac{\epsilon}{2^{\delta_1 i}}$, and meanwhile when $i\leq K$, $\gamma_k (1-\gamma_j)<\prod_{j=1}^{k-i} (1-\gamma_j)<\frac{\epsilon}{2^{\delta_2 i}}$. Setting $\delta^*=\min\{\delta_1, \delta_2\}$, then we can upper bound the sum $\sum_{i=1}^k \gamma_i \prod_{j=1}^{k-i} (1-\gamma_j)\Delta\mathbf{R}_i \leq \sum_{i=1}^{\infty} \frac{\epsilon}{2^{\delta i}}=0$. Therefore we prove that the inner loop converges.

\textbf{2. Outer Loop:}
Note that from Eqn. (\ref{e71}), we know $(||\mathbf{\Pi^{\xi-\Delta\xi}}||_F^2-||\mathbf{\Pi^{\xi}}||_F^2)$ is nonnegative since $\Delta\xi>0$ and $\mathbf{\Pi^{\xi}}$ is the minimizer of $F_{\xi}(\mathbf{\Pi})$. Thus $||\mathbf{\Pi^{\xi}}||_F^2 \leq ||\mathbf{\Pi^{\xi-\Delta\xi}}||_F^2
$. Note that for all the $\mathbf{\Pi}\in \Omega$, the maximum value of $||\mathbf{\Pi}||_F^2$ is $n$, and the maximizer is in $\Omega_0$. Therefore $||\mathbf{\Pi}||_F^2-n\leq 0$. From Inequality (\ref{i70}), we find that
\begin{equation}
\small
\begin{aligned}
&F_{\xi}(\mathbf{\Pi^{\xi}})\\&\geq F_0(\mathbf{\Pi^{\xi-\Delta\xi}})+(\xi-\Delta\xi)(n-||\mathbf{\Pi^{\xi-\Delta\xi}}||_F^2)-\Delta\xi(n-||\mathbf{\Pi^{\xi}}||_F^2)\\&=F_{\xi-\Delta\xi}(\mathbf{\Pi^{\xi-\Delta\xi}})-\Delta \xi \textbf{tr}(||\mathbf{\Pi^{\xi}}||_F^2-n).\nonumber
\end{aligned}
\end{equation}

Therefore
\begin{equation}
\begin{aligned}
&|F_{\xi}(\mathbf{\Pi^{\xi}})-F_{\xi-\Delta\xi}(\mathbf{\Pi^{\xi-\Delta\xi}})|\\&\leq \Delta \xi|||\mathbf{(\Pi^{\xi})}||_F^2-n|
\leq \Delta \xi |||\mathbf{\Pi^{\xi-\Delta\xi}}||_F^2-n|
\\&\leq \Delta \xi |||\mathbf{\Pi^{\xi_0}}||_F^2-n|
\leq \Delta \xi (n-1),\nonumber
\end{aligned}
\end{equation}
where the third inequality holds since $\mathbf{\Pi^{\xi_0}}$ is the minimizer of $F_{\lambda_{min}}(\mathbf{\Pi})$, i.e., the convex relaxation of $F_0(\mathbf{\Pi})$, and the fourth inequality holds since $\min_{\mathbf{\Pi}\in \Omega} ||\mathbf{\Pi}||_F^2=1$ and $\mathbf{\Pi}=\textbf{1}_{n\times n}./n$ is the minimizer. Therefore, the analysis tells us if $\Delta\xi=o\big(\frac{1}{n}\big)$, then we can ensure that the outer loop converges.

Combining the convergence analysis of both inner and outer loops above, we complete the proof of the convergence of CBDA.
\end{proof}

Lemma \ref{lemma7} shows that CBDA can exactly find $\mathbf{\tilde{\Pi}}$, the minimizer of the objective function $F_0(\mathbf{\Pi})$, meanwhile ensuring that CBDA can perfectly solve WEMP, which vanishes the relative NME under mild conditions (Recall Theorem \ref{th3}). Therefore CBDA is an algorithmic approach for seedless de-anonymization with high feasibility and good performance, especially for networks with larger size.

\begin{figure*}[!tb]
	\centering
	\subfigure[N=500, a=3]{
		\begin{minipage}[]{0.235\linewidth}
			\centering
			\vspace{-3mm}
			\includegraphics[width=1.0\linewidth]{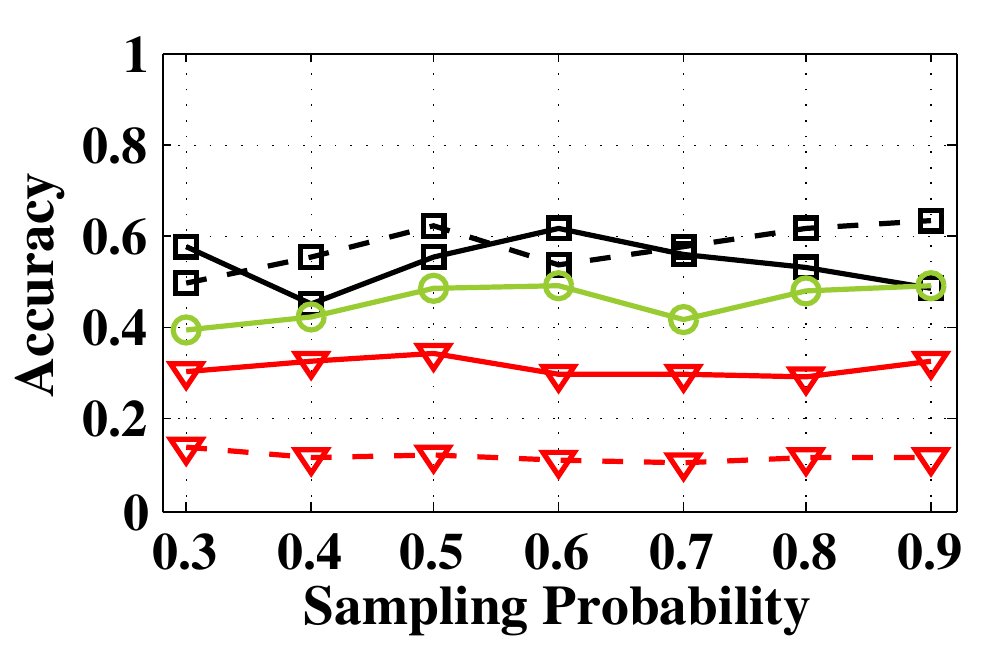}
			\vspace{-3mm}
		\end{minipage}%
		
	}
	\subfigure[N=1000, a=3]{
		\begin{minipage}[]{0.235\linewidth}
			\centering
			\vspace{-3mm}
			\includegraphics[width=1.0\linewidth]{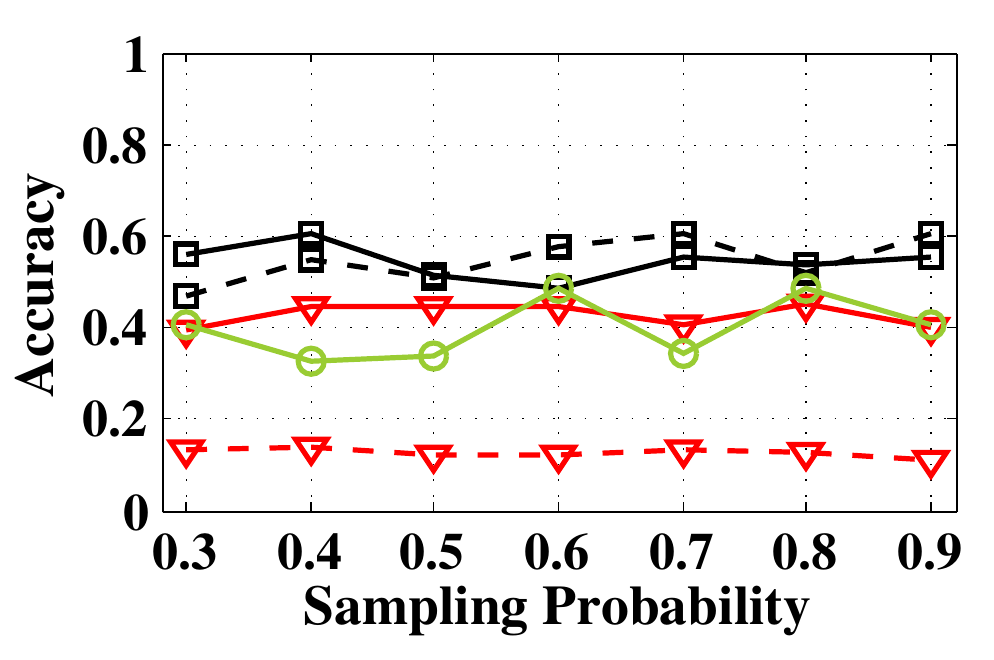}
			\vspace{-3mm}
		\end{minipage}%
		
	}
	\subfigure[N=1500, a=3]{
		\begin{minipage}[]{0.235\linewidth}
			\centering
			\vspace{-3mm}
			\includegraphics[width=1.0\linewidth]{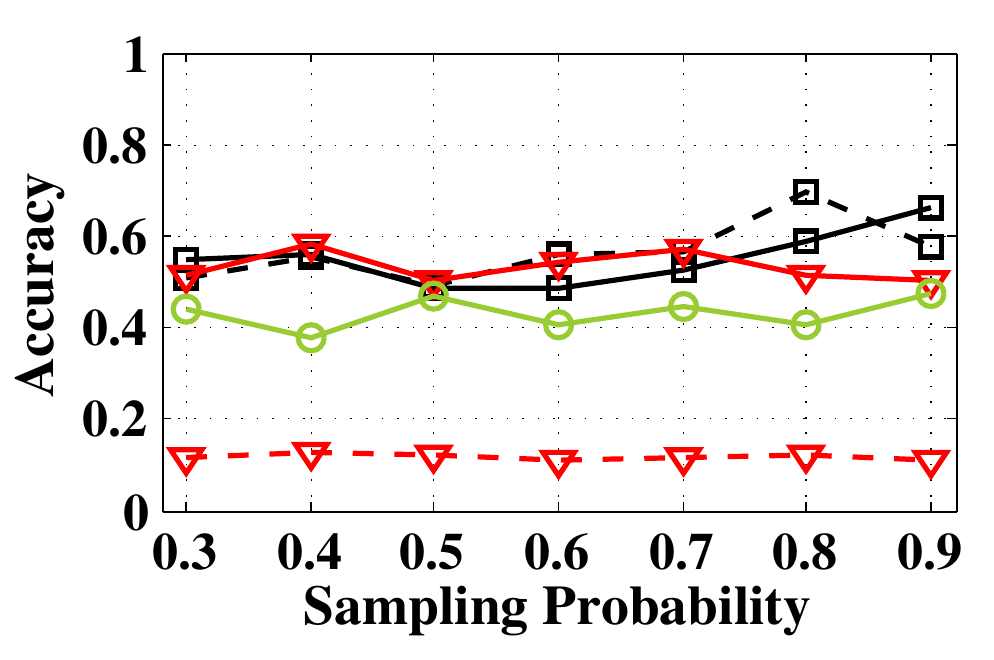}
			\vspace{-3mm}
		\end{minipage}%
		
	}
	\subfigure[N=2000, a=3]{
		\begin{minipage}[]{0.235\linewidth}
			\centering
			\vspace{-3mm}
			\includegraphics[width=1.0\linewidth]{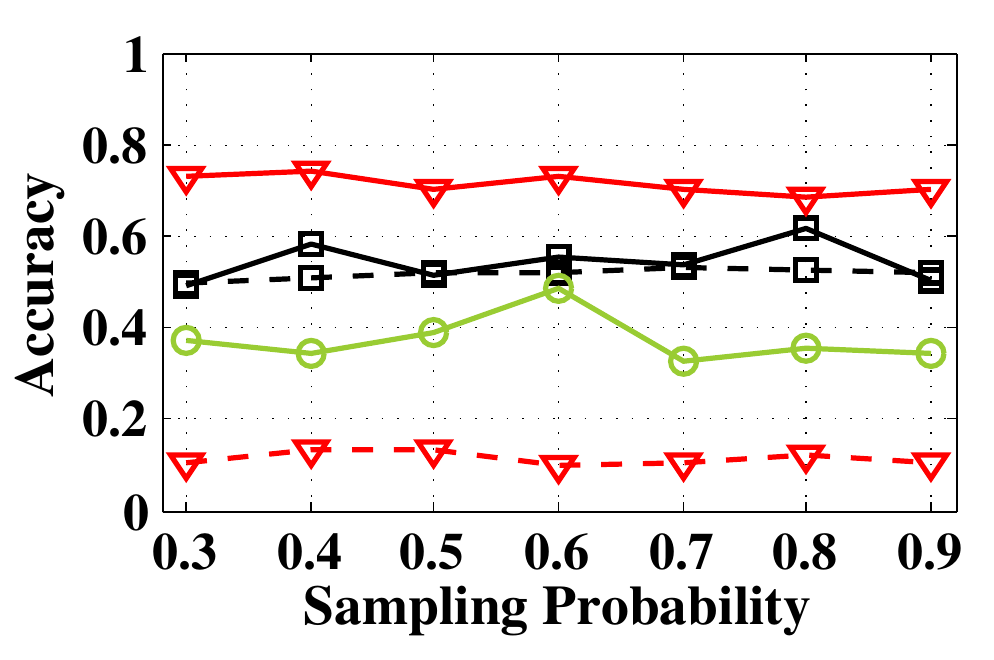}
			\vspace{-3mm}
		\end{minipage}%
		
	}

	\subfigure[N=500, a=5]{
		\begin{minipage}[]{0.235\linewidth}
			\centering
			\vspace{-3mm}
			\includegraphics[width=1.0\linewidth]{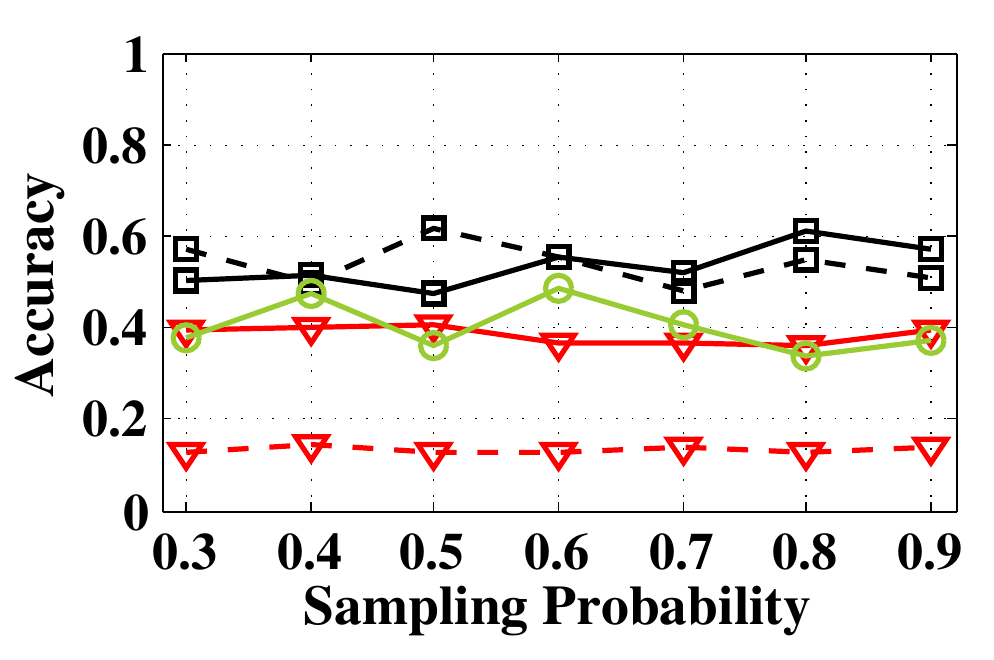}
			\vspace{-3mm}
		\end{minipage}%
		
	}
	\subfigure[N=1000, a=5]{
		\begin{minipage}[]{0.235\linewidth}
			\centering
			\vspace{-3mm}
			\includegraphics[width=1.0\linewidth]{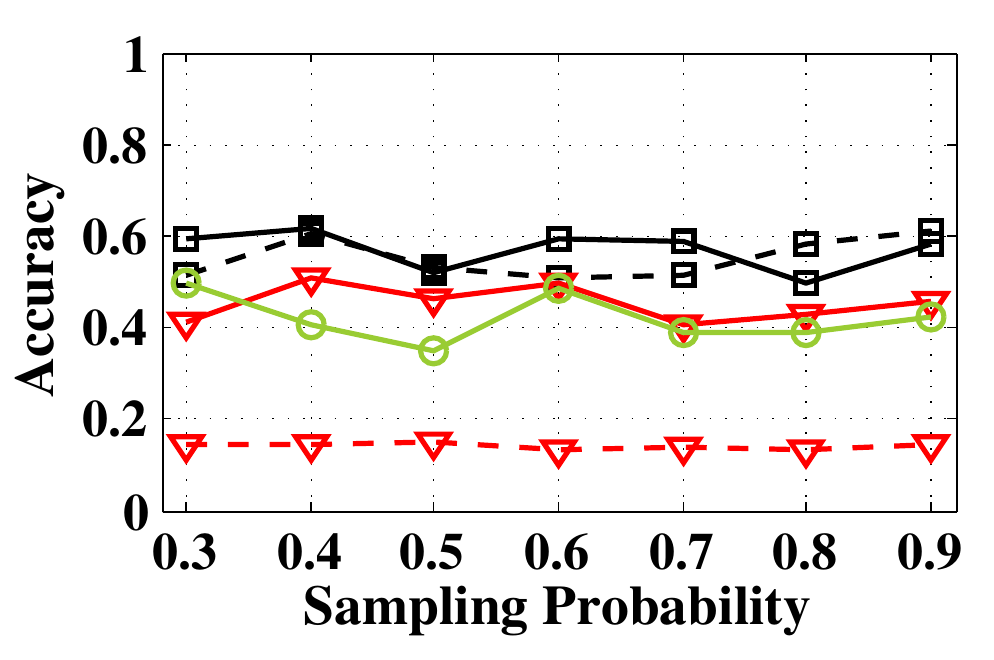}
			\vspace{-3mm}
		\end{minipage}%
		
	}
	\subfigure[N=1500, a=5]{
		\begin{minipage}[]{0.235\linewidth}
			\centering
			\vspace{-3mm}
			\includegraphics[width=1.0\linewidth]{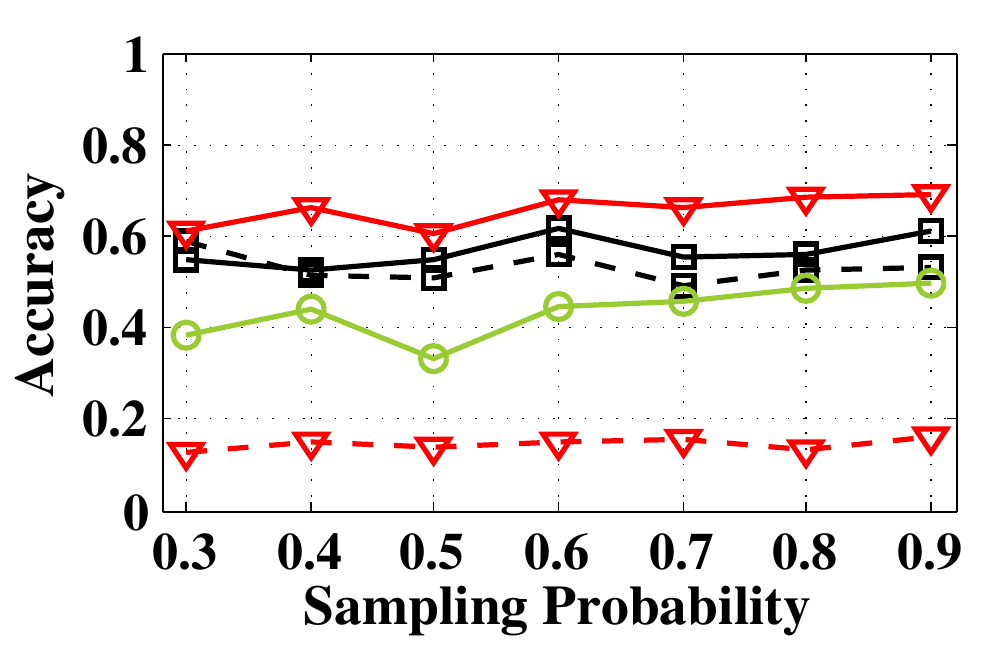}
			\vspace{-3mm}
		\end{minipage}%
		
	}
	\subfigure[N=2000, a=5]{
		\begin{minipage}[]{0.235\linewidth}
			\centering
			\vspace{-3mm}
			\includegraphics[width=1.0\linewidth]{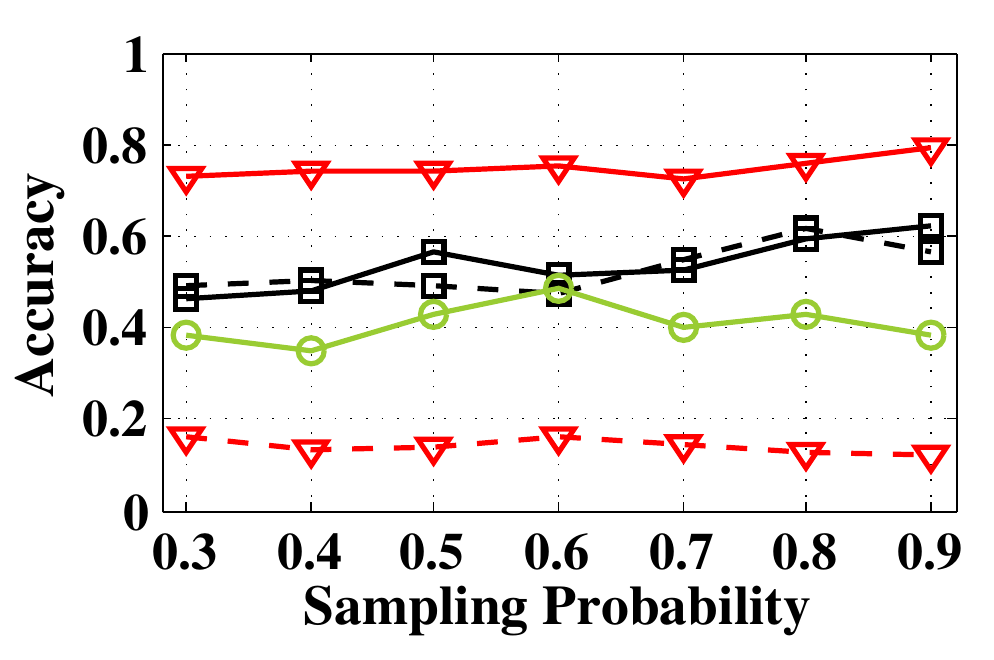}
			\vspace{-3mm}
		\end{minipage}%
		
	}

	\subfigure[N=500, a=7]{
		\begin{minipage}[]{0.235\linewidth}
			\centering
			\vspace{-3mm}
			\includegraphics[width=1.0\linewidth]{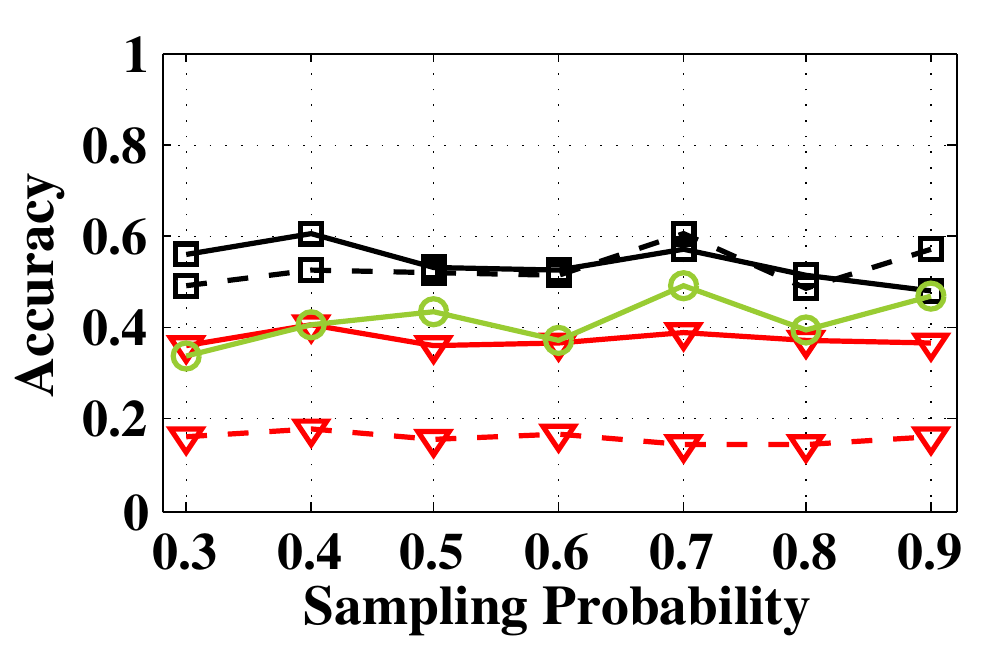}
			\vspace{-3mm}
		\end{minipage}%
		
	}
	\subfigure[N=1000, a=7]{
		\begin{minipage}[]{0.235\linewidth}
			\centering
			\vspace{-3mm}
			\includegraphics[width=1.0\linewidth]{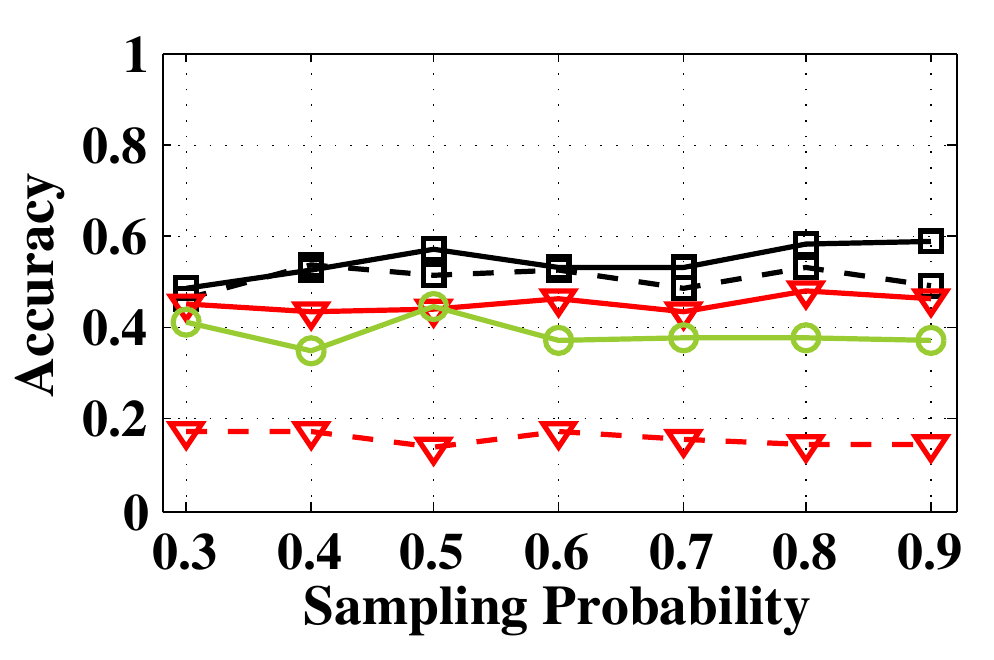}
			\vspace{-3mm}
		\end{minipage}%
		
	}
	\subfigure[N=1500, a=7]{
		\begin{minipage}[]{0.235\linewidth}
			\centering
			\vspace{-3mm}
			\includegraphics[width=1.0\linewidth]{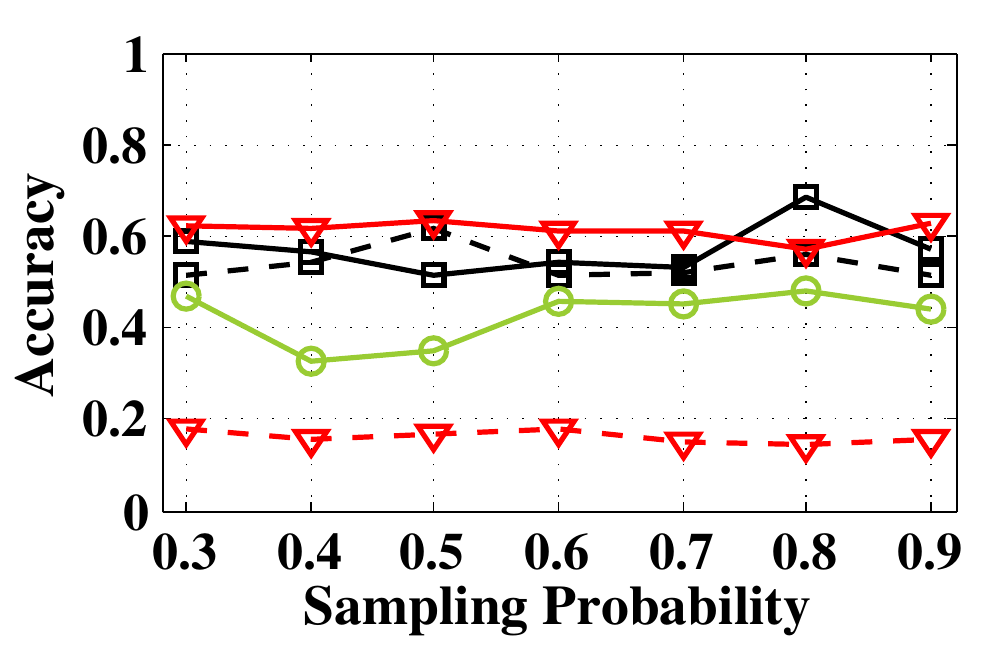}
			\vspace{-3mm}
		\end{minipage}%
		
	}
	\subfigure[N=2000, a=7]{
		\begin{minipage}[]{0.235\linewidth}
			\centering
			\vspace{-3mm}
			\includegraphics[width=1.0\linewidth]{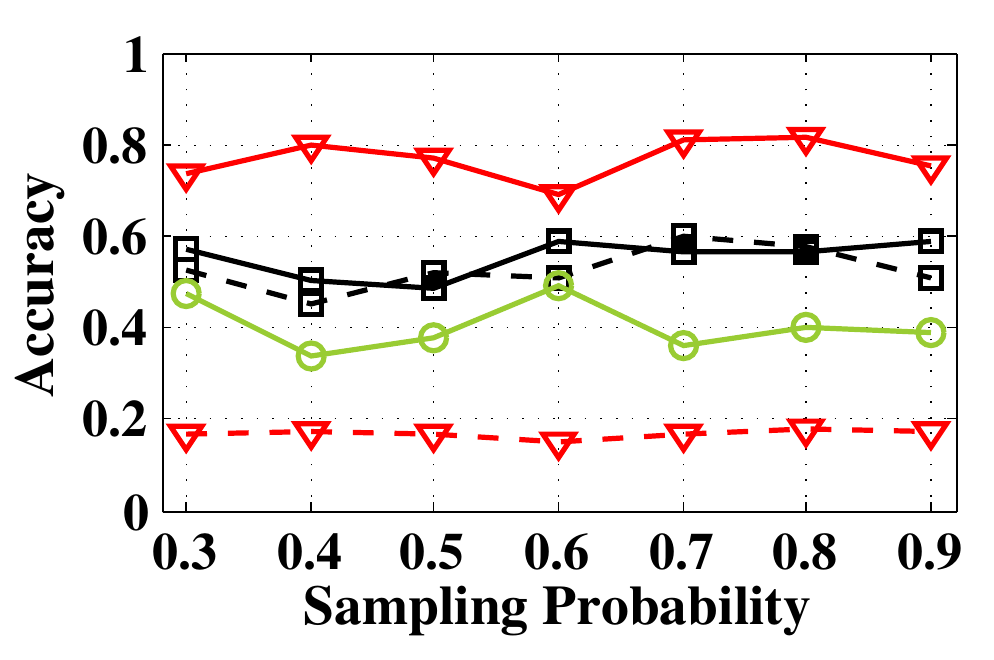}
			\vspace{-3mm}
		\end{minipage}%
		
	}
	\subfigure[N=500, a=9]{
		\begin{minipage}[]{0.235\linewidth}
			\centering
			\vspace{-3mm}
			\includegraphics[width=1.0\linewidth]{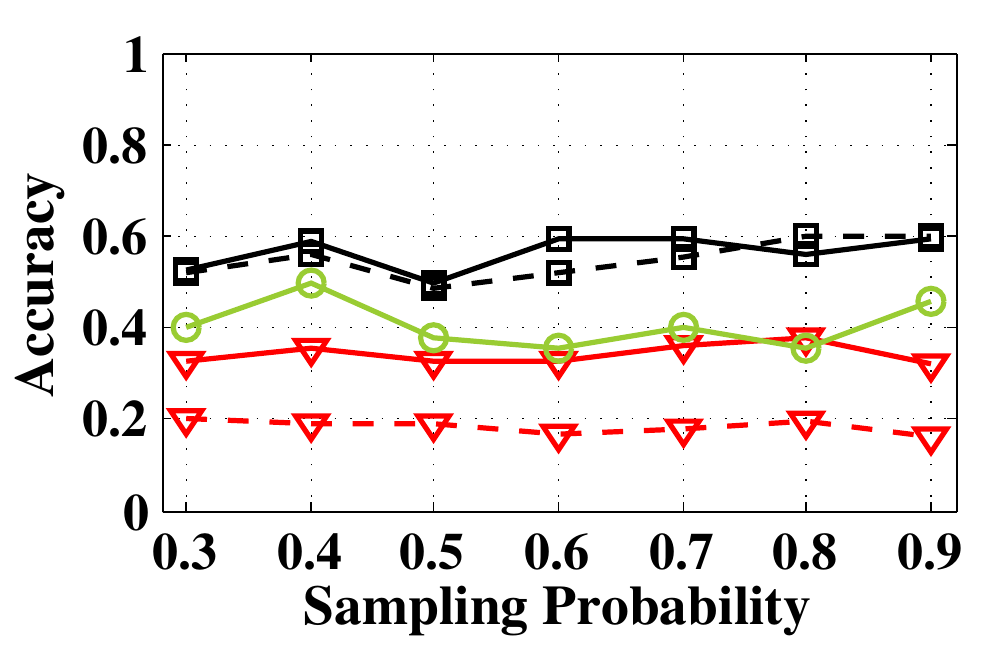}
			\vspace{-3mm}
		\end{minipage}%
		
	}
	\subfigure[N=1000, a=9]{
		\begin{minipage}[]{0.235\linewidth}
			\centering
			\vspace{-3mm}
			\includegraphics[width=1.0\linewidth]{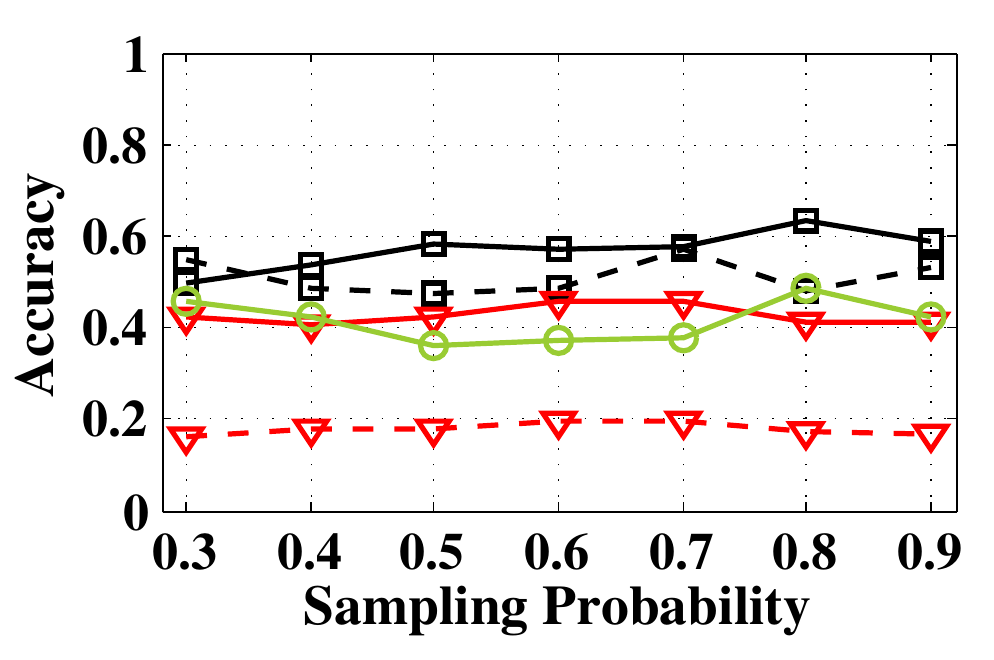}
			\vspace{-3mm}
		\end{minipage}%
		
	}
	\subfigure[N=1500, a=9]{
		\begin{minipage}[]{0.235\linewidth}
			\centering
			\vspace{-3mm}
			\includegraphics[width=1.0\linewidth]{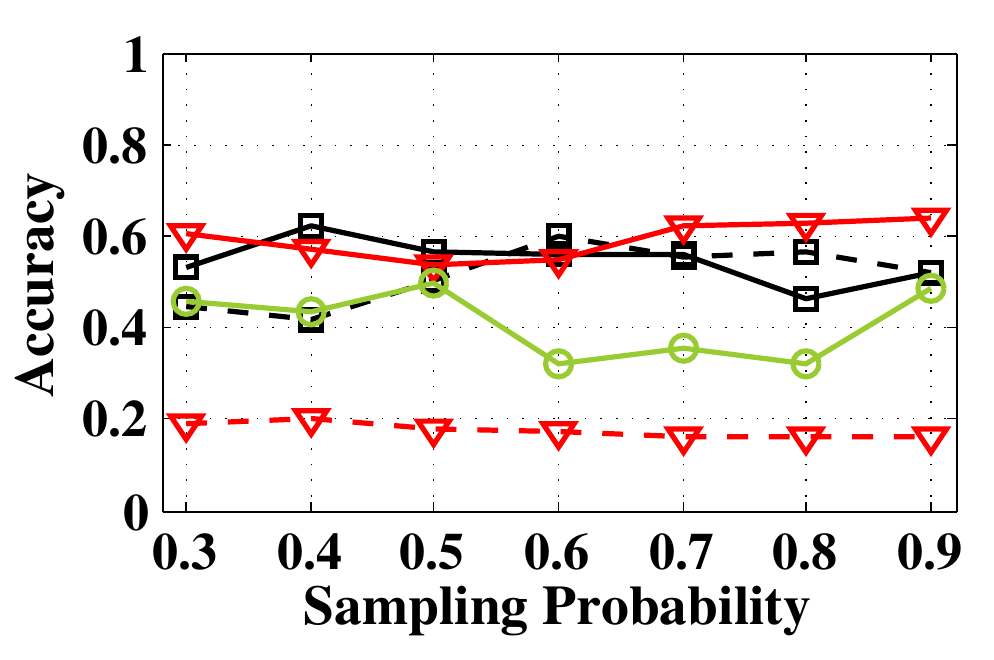}
			\vspace{-3mm}
		\end{minipage}%
		
	}
	\subfigure[N=2000, a=9]{
		\begin{minipage}[]{0.235\linewidth}
			\centering
			\vspace{-3mm}
			\includegraphics[width=1.0\linewidth]{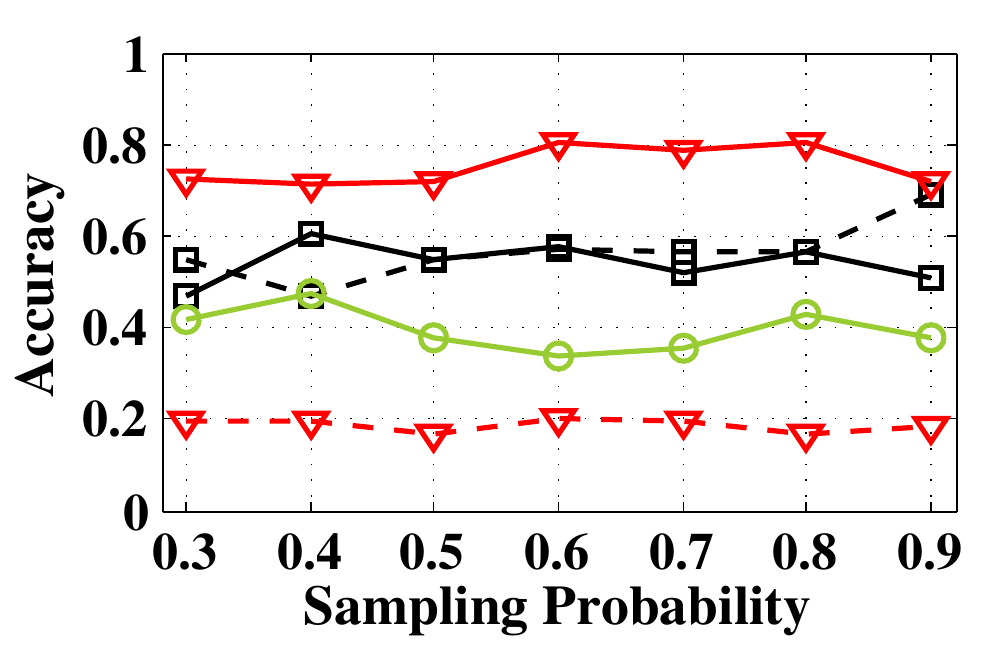}
			\vspace{-3mm}
		\end{minipage}%
	}

		\includegraphics[width=0.95\textwidth]{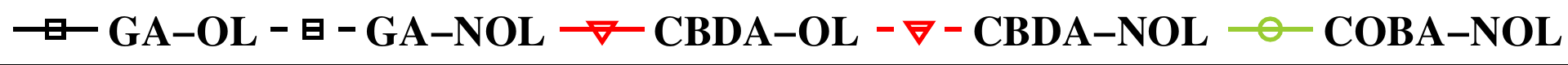}
		\caption{Experiments on Synthetic Networks with $\eta=0.05$.}
		\label{fig:syn5}
\end{figure*}

\section{Experimental Aspect of Social Network De-anonymization Problem}\label{Experimental}

In this section, we utilize three datasets: synthetic networks, sampled real social networks and true cross-domain networks, to conduct the experimental validation in terms of our analytical results and the performance of our proposed algorithm CBDA. Before we start, we need to clarify that our theoretical results are based on asymptotical analysis when the size of the network goes to infinity, thus it is hard to validate them under finite computability. However, we can also observe some expected phenomenons under networks with finite size. In our experiments, the number of nodes in cross-domain co-author networks is $3176$, larger than previous work in \cite{Fu:arxiv,Fu:GC} which is $2093$. The performance validation of algorithms for seedless de-anonymization on large-scale real social networks, adopted by studies on seeded de-anonymization, as far as we know, is still an open problem.

\subsection{Experiment Setup}

Before presenting our experimental results, we first introduce the basic experimental settings.

\subsubsection{Main Parameters} We list our adjustable parameters involved in our experiments in Table \ref{table:notation3}. Three parameters are in need of further explanations:

(i) ${a}$. This is a parameter in the overlapping stochastic block model (OSBM) which determines the $p_{\boldsymbol{C_iC_j}}$, the probability of edge existence between nodes $i$ and $j$ in underlying graph. Specifically, $p_{\boldsymbol{C_iC_j}}$ can be expressed as\footnote{Note that this expression is equivalent to that in \cite{latouche2011overlapping}, though their forms are different.}
\begin{equation}
\label{exp39}
p_{\boldsymbol{C_iC_j}}=\frac{1}{1+ae^{-x}},
\end{equation}
where $x$ is the number of communities that both nodes $i$ and $j$ belong to. Note that $p_{\boldsymbol{C_iC_j}}$ increases as $x$ rises, which corresponds to the real case that nodes with more overlapping communities are more possibly related. Meanwhile, if $a$ becomes larger (smaller), then $p_{\boldsymbol{C_iC_j}}$ is smaller (larger) so that the graph becomes sparser (denser).

(ii) ${\eta}$. This is the community ratio. It means the ratio between the number of communities and nodes. This ratio reflects the fact that when the size of network becomes larger, the number of communities also increases. In performance validation of CBDA we set $\eta=0.05$ or $0.1$, while when studying the influence of $\eta$ on de-anonymization accuracy, it will be endowed with more values.

(iii) $OL/NOL$. $OL$ means that communities are overlapping while $NOL$ means not. This makes for illustrating the impact of the overlapping property of communities on the mapping accuracy.

				\begin{table}[!tb]
					\setlength{\extrarowheight}{3pt}
					
						\renewcommand\arraystretch{0.82}
						\caption{\bf Main Experimental Parameters}
					
						\centering
						\label{table:notation3}
						\resizebox{1.0\columnwidth}{!}{

							\begin{tabular}{c|c|c}
                                 \textbf{Notation} & \textbf{Definition} & \textbf{Range} \\ \hline
                                $N$ & Number of Nodes & \{500, 1000, 1500, 2000\} \\
								$s$ & Sampling Probability ($s_1=s_2=s$) & 0.3-0.9 \\ 
								$a$ & OSBM Parameter & \{3, 5, 7, 9\} \\ 
								$\eta$ & Community Ratio & \{0.05, 0.1\} \\ 
                                $OL/NOL$ & Overlapping or Non-Overlapping & \{OL, NOL\} \\
                                \hline						
							\end{tabular}}		
							
					\end{table}

\begin{figure*}[htbp]
	\centering
	\subfigure[N=500, a=3]{
		\begin{minipage}[]{0.235\linewidth}
			\centering
			\vspace{-3mm}
			\includegraphics[width=1.0\linewidth]{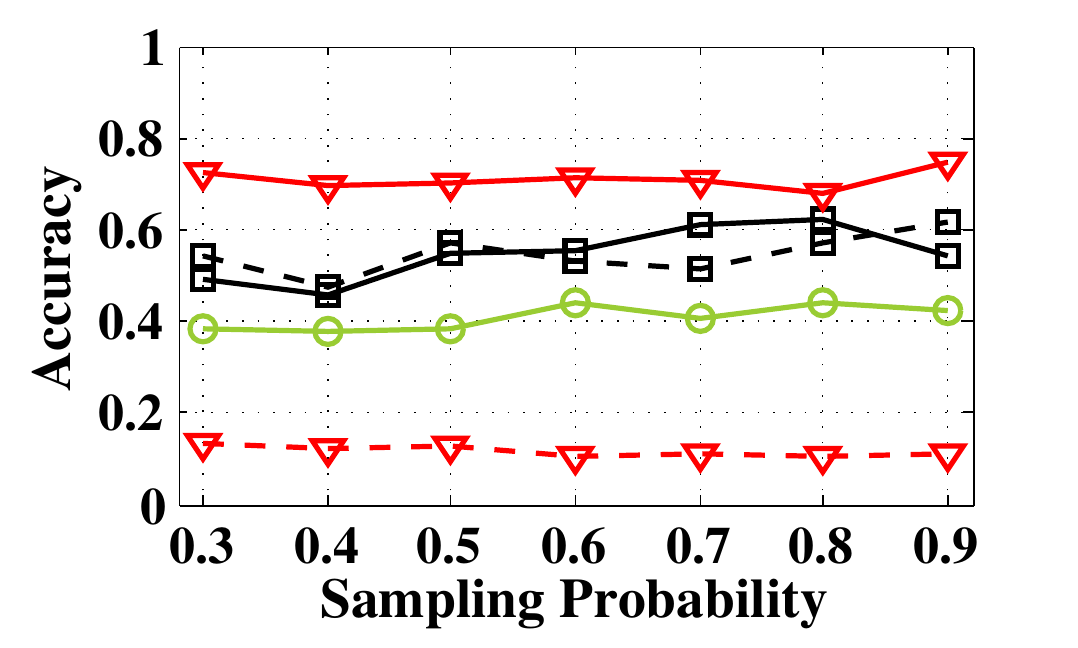}
			\vspace{-3mm}
		\end{minipage}%
	}
	\subfigure[N=1000, a=3]{
		\begin{minipage}[]{0.235\linewidth}
			\centering
			\vspace{-3mm}
			\includegraphics[width=1.0\linewidth]{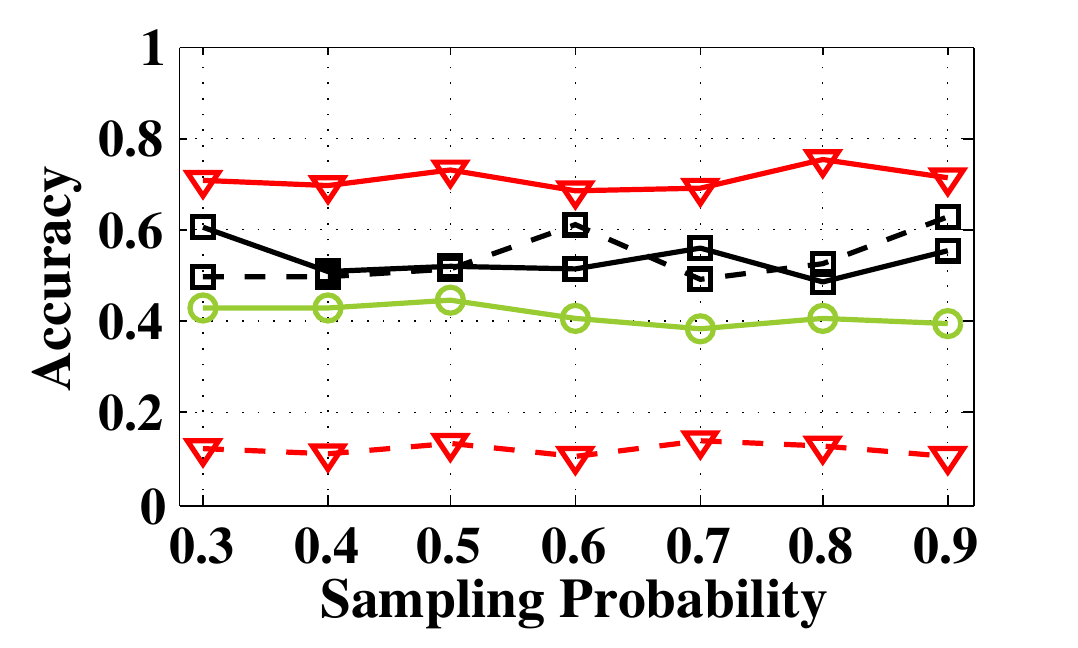}
			\vspace{-3mm}
		\end{minipage}%
		
	}
	\subfigure[N=1500, a=3]{
		\begin{minipage}[]{0.235\linewidth}
			\centering
			\vspace{-3mm}
			\includegraphics[width=1.0\linewidth]{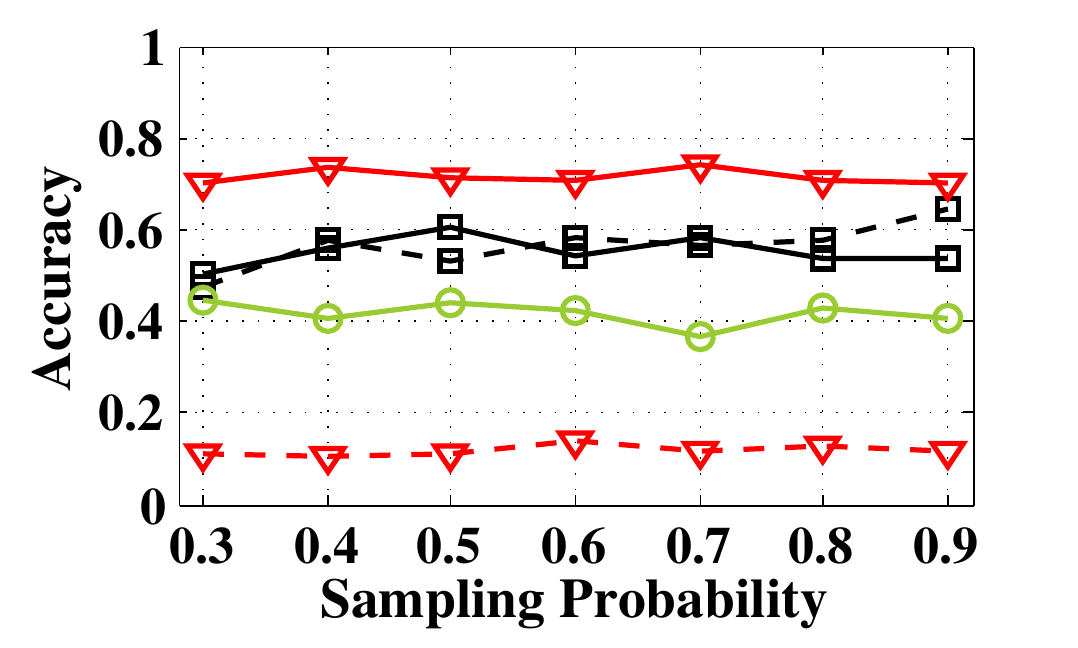}
			\vspace{-3mm}
		\end{minipage}%
		
	}
	\subfigure[N=2000, a=3]{
		\begin{minipage}[]{0.235\linewidth}
			\centering
			\vspace{-3mm}
			\includegraphics[width=1.0\linewidth]{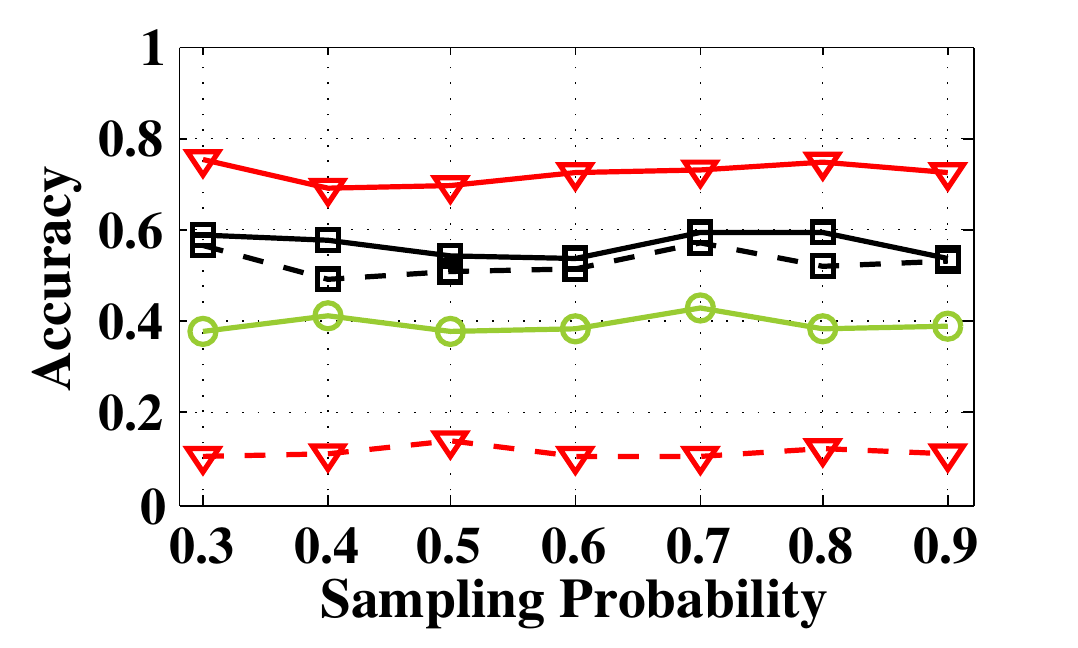}
			\vspace{-3mm}
		\end{minipage}%
		
	}

	\subfigure[N=500, a=5]{
		\begin{minipage}[]{0.235\linewidth}
			\centering
			\vspace{-3mm}
			\includegraphics[width=1.0\linewidth]{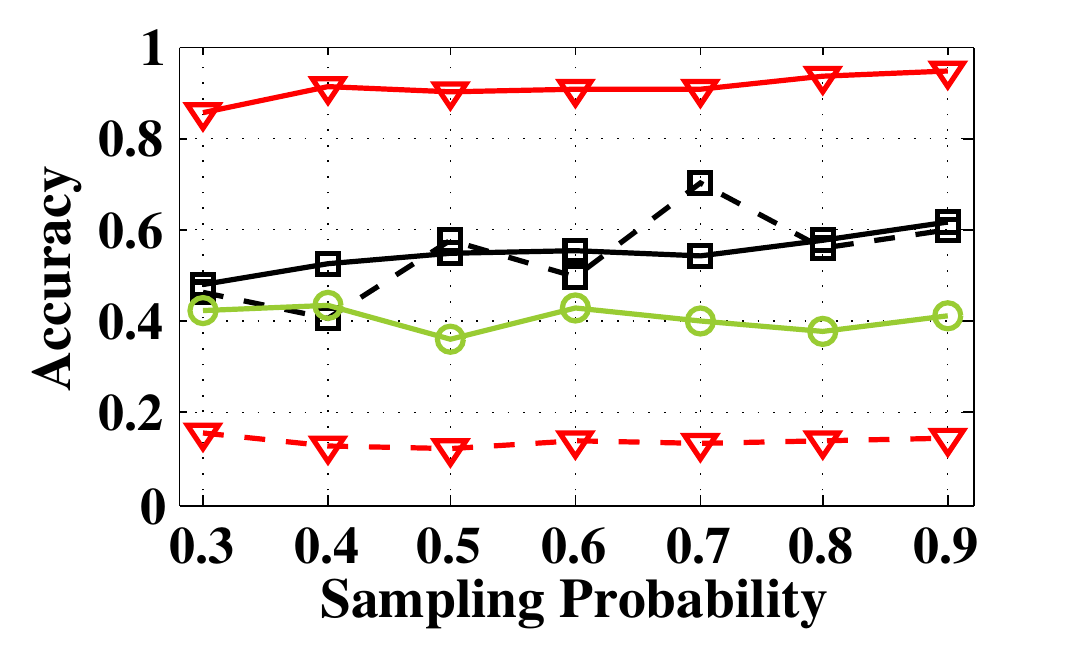}
			\vspace{-3mm}
		\end{minipage}%
		
	}
	\subfigure[N=1000, a=5]{
		\begin{minipage}[]{0.235\linewidth}
			\centering
			\vspace{-3mm}
			\includegraphics[width=1.0\linewidth]{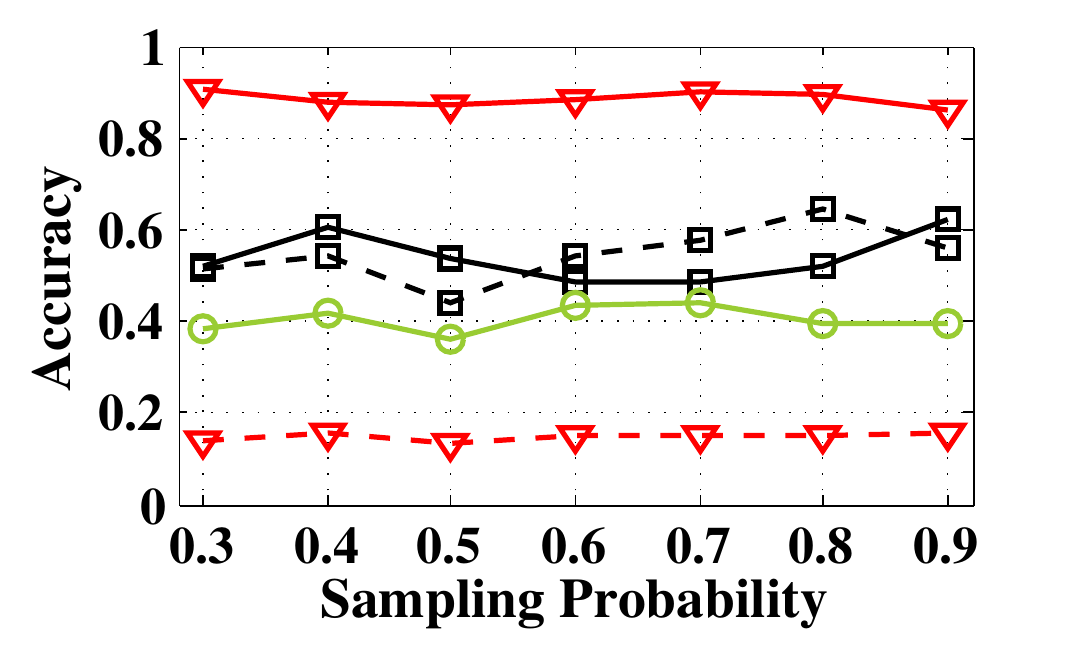}
			\vspace{-3mm}
		\end{minipage}%
		
	}
	\subfigure[N=1500, a=5]{
		\begin{minipage}[]{0.235\linewidth}
			\centering
			\vspace{-3mm}
			\includegraphics[width=1.0\linewidth]{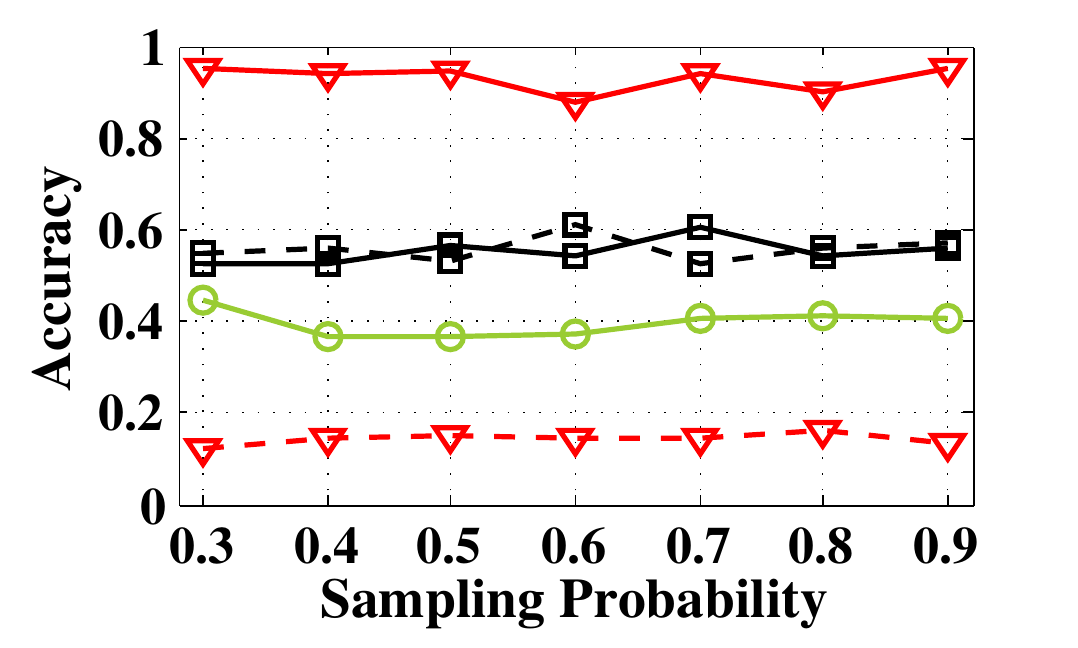}
			\vspace{-3mm}
		\end{minipage}%
		
	}
	\subfigure[N=2000, a=5]{
		\begin{minipage}[]{0.235\linewidth}
			\centering
			\vspace{-3mm}
			\includegraphics[width=1.0\linewidth]{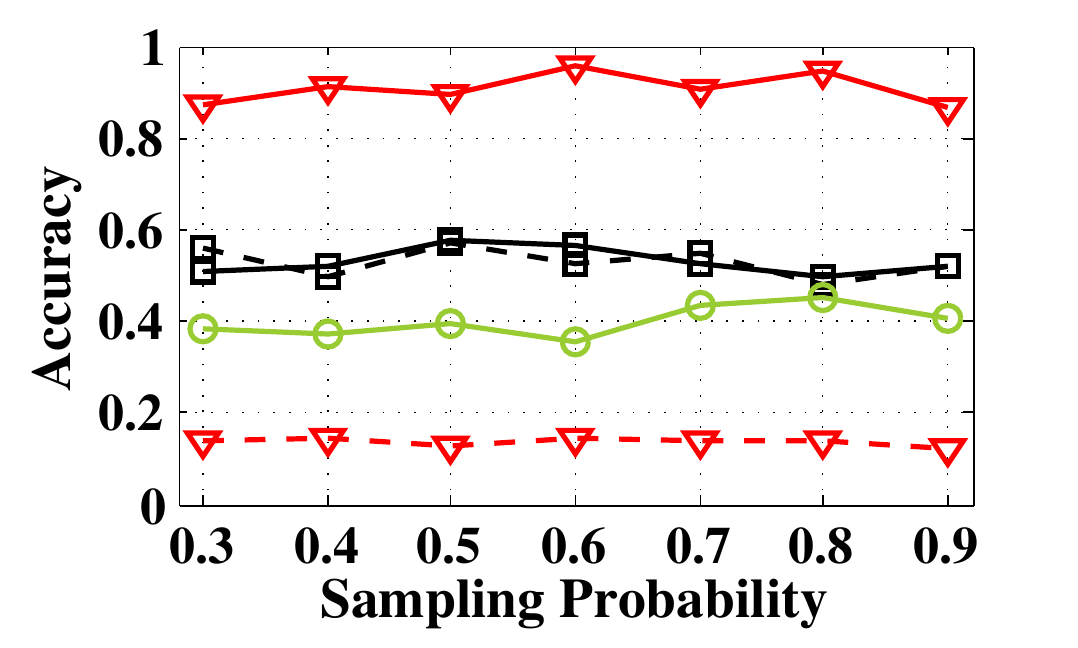}
			\vspace{-3mm}
		\end{minipage}%
		
	}

	\subfigure[N=500, a=7]{
		\begin{minipage}[]{0.235\linewidth}
			\centering
			\vspace{-3mm}
			\includegraphics[width=1.0\linewidth]{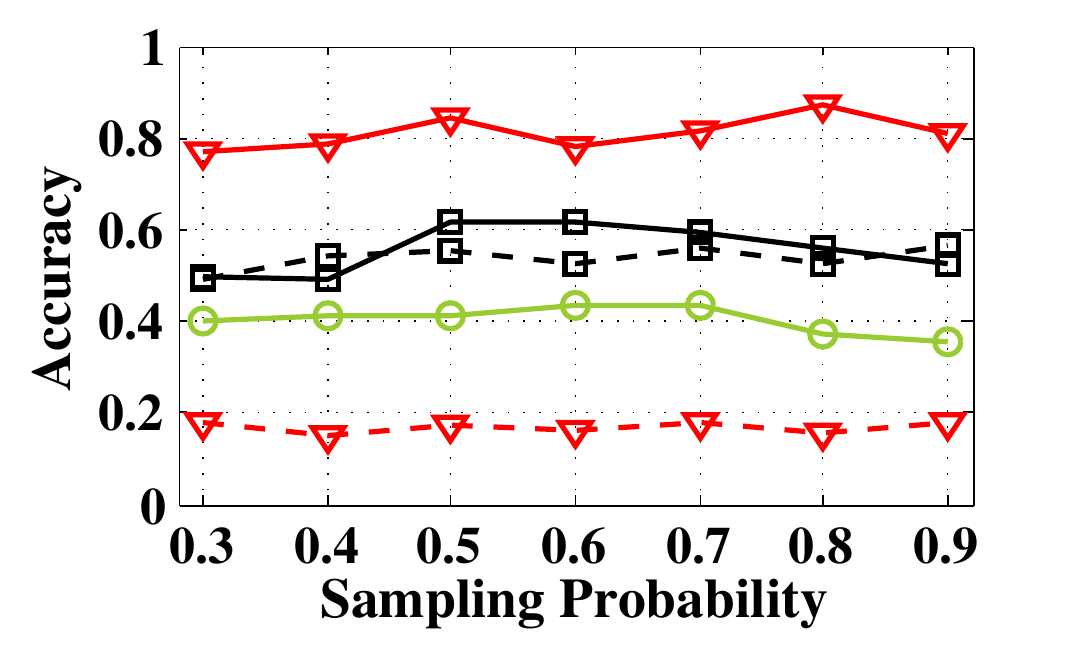}
			\vspace{-3mm}
		\end{minipage}%
		
	}
	\subfigure[N=1000, a=7]{
		\begin{minipage}[]{0.235\linewidth}
			\centering
			\vspace{-3mm}
			\includegraphics[width=1.0\linewidth]{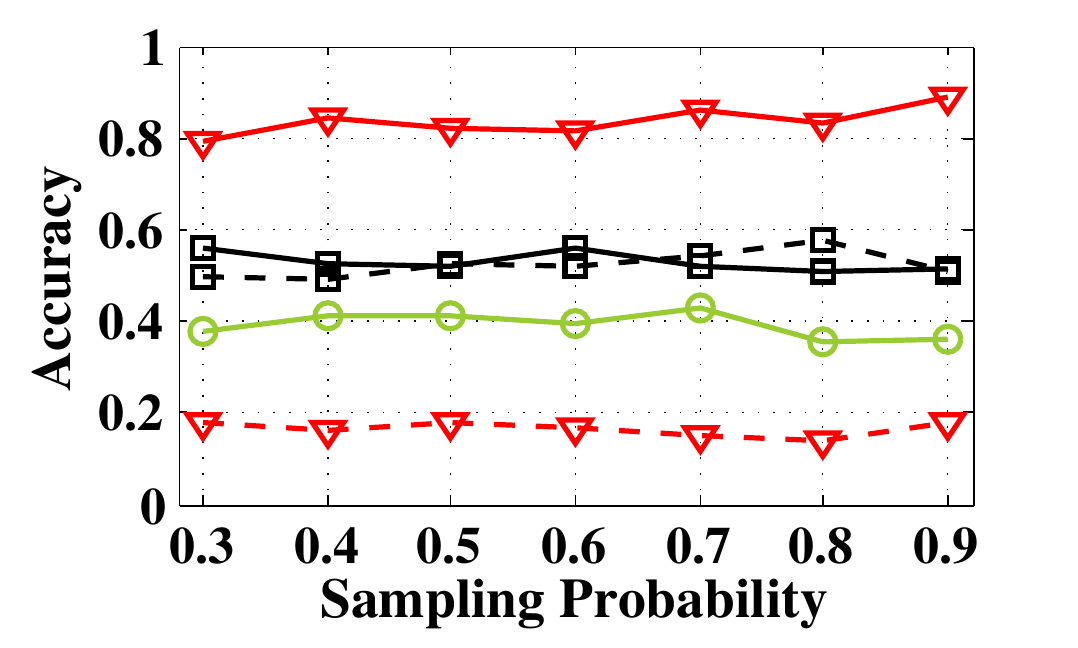}
			\vspace{-3mm}
		\end{minipage}%
		
	}
	\subfigure[N=1500, a=7]{
		\begin{minipage}[]{0.235\linewidth}
			\centering
			\vspace{-3mm}
			\includegraphics[width=1.0\linewidth]{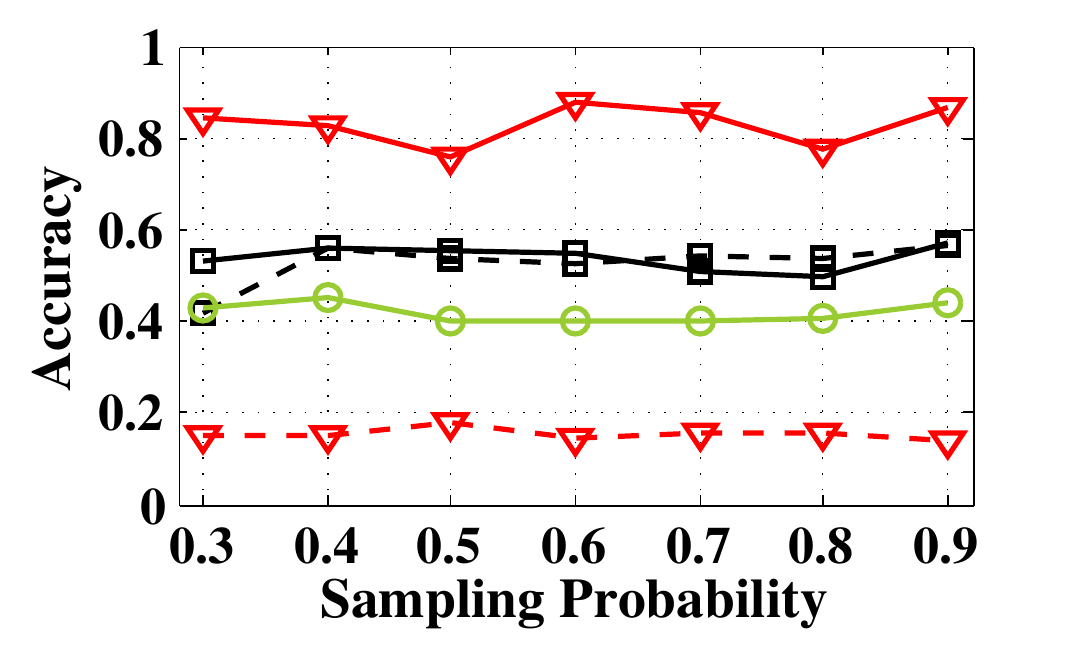}
			\vspace{-3mm}
		\end{minipage}%
		
	}
	\subfigure[N=2000, a=7]{
		\begin{minipage}[]{0.235\linewidth}
			\centering
			\vspace{-3mm}
			\includegraphics[width=1.0\linewidth]{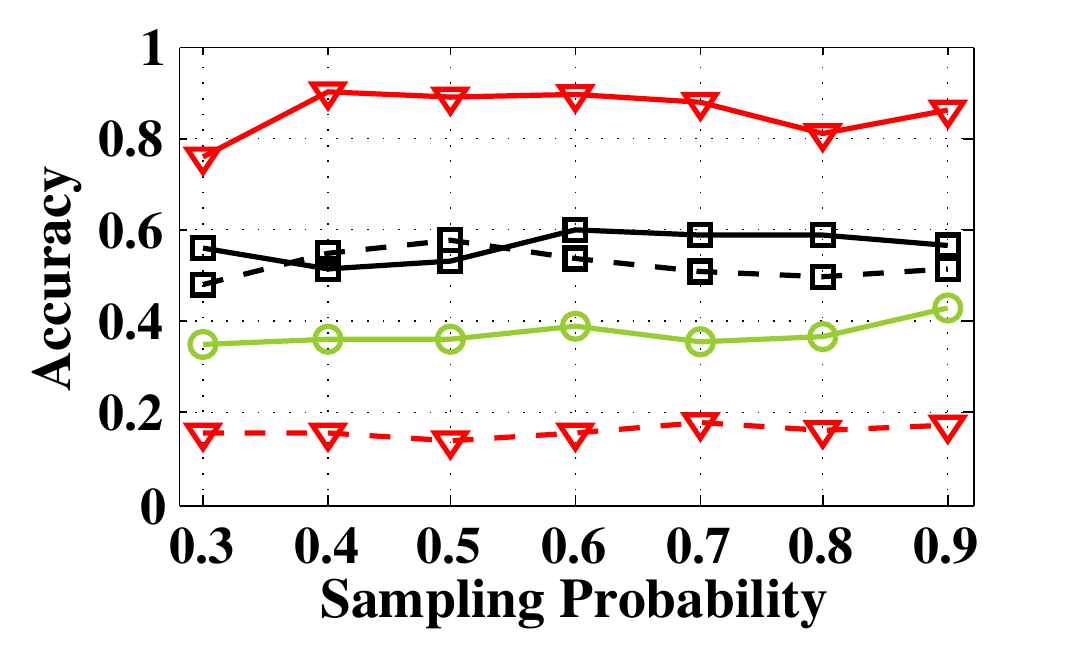}
			\vspace{-3mm}
		\end{minipage}%
		
	}
	\subfigure[N=500, a=9]{
		\begin{minipage}[]{0.235\linewidth}
			\centering
			\vspace{-3mm}
			\includegraphics[width=1.0\linewidth]{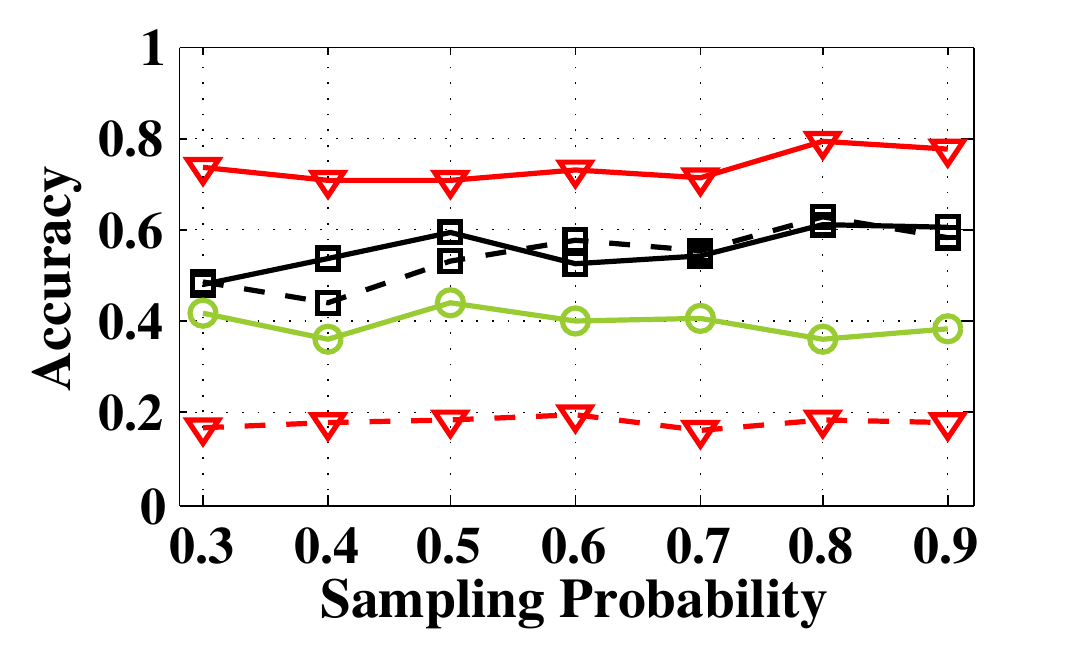}
			\vspace{-3mm}
		\end{minipage}%
		
	}
	\subfigure[N=1000, a=9]{
		\begin{minipage}[]{0.235\linewidth}
			\centering
			\vspace{-3mm}
			\includegraphics[width=1.0\linewidth]{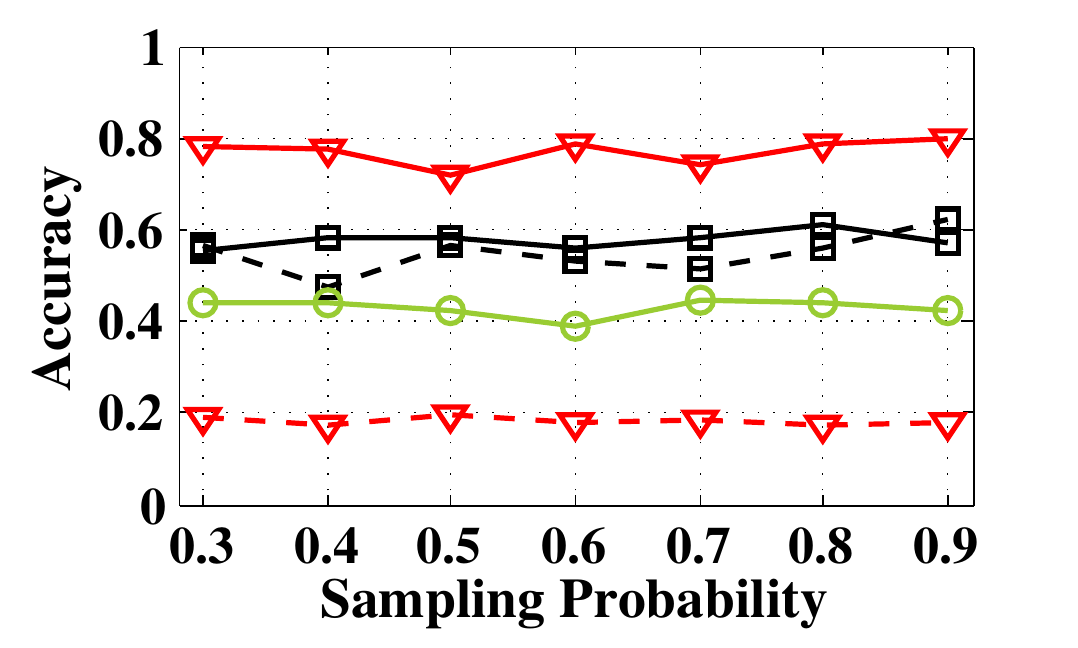}
			\vspace{-3mm}
		\end{minipage}%
		
	}
	\subfigure[N=1500, a=9]{
		\begin{minipage}[]{0.235\linewidth}
			\centering
			\vspace{-3mm}
			\includegraphics[width=1.0\linewidth]{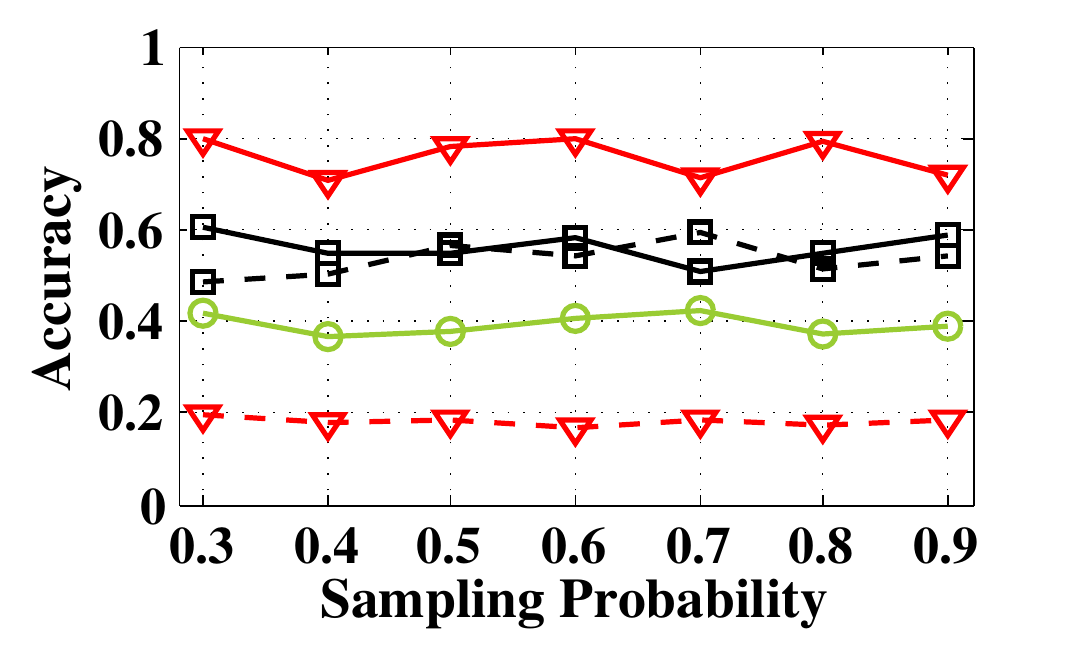}
			\vspace{-3mm}
		\end{minipage}%
		
	}
	\subfigure[N=2000, a=9]{
		\begin{minipage}[]{0.235\linewidth}
			\centering
			\vspace{-3mm}
			\includegraphics[width=1.0\linewidth]{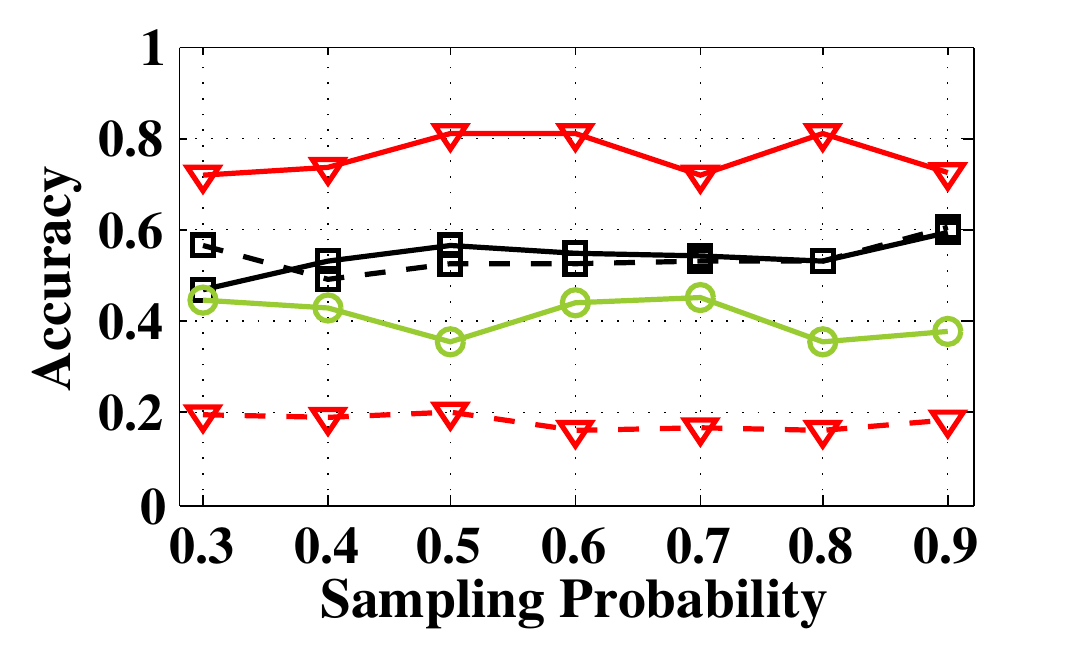}
			\vspace{-3mm}
		\end{minipage}%
		
	}
 \includegraphics[width=0.95\textwidth]{legend.pdf}
		\caption{Experiments on Synthetic Network with $\eta=0.1$.}
		\label{fig:syn1}
\end{figure*}

\subsubsection{Experimental Datasets} We discuss three adopted datasets in an order from model-based to real social networks.

\underline{1. Synthetic Networks:} When we generate synthetic networks, there are two main steps: (i) randomly setting the community representation of every node and (ii) judging whether an edge exists between any two nodes. For step (i), since the nodes and communities in our model are both independently distributed, step (i) can be viewed as a Bernoulli trial for every node: Setting the probability that node $i$ belongs to any one community as $p_{c_i}$, then the probability that node $i$ belongs to $k$ different communities is $p_{c_{ik}}=C_{m}^kp_{c_i}^k(1-p_{c_i})^{m-k}$. In our experiment we set the same $p_{c_i}$ for all nodes as we view them equally. For step (ii),  we can set the probability of edge existence between any two nodes based on Eqn. (\ref{exp39})\footnote{Unlike existing work \cite{cite:seedless} which determines the edge existence in the graph based on different distributions like Poisson, power law or exponential expected degree distributions, we strictly follow the OSBM and alter the edge distribution by modifying the parameter $a$.}, with $x$ determined by the community representation matrix (Recall Definition \ref{def11}). In experiments on this dataset, we adjust the parameters based on Table \ref{table:notation3} to validate the performance of our algorithm under different network settings.

\underline{2. Sampled Real Social Networks:} In sampled real social networks, the underlying social network $G$ is extracted from the dataset LiveJournal \cite{cite:livejournal} without changes. The published and auxiliary networks ($G_1$ and $G_2$) are artificially sampled from $G$ with the  same probability $s$. To compare with the results in synthetic networks, we keep the settings of $N$, $s$ and $\eta$ as Table \ref{table:notation3}. However, since $G$ is not generated from the OSBM, the OSBM parameter $a$ does not exist anymore. In experiments on this dataset, we adjust $N$, $s$ and $\eta$ to characterize different possible situations based on real underlying networks.

\underline{3. Cross-Domain Co-author Networks:} The co-author networks are from the Microsoft Academic Graph (MAG) \cite{cite:MAG}. We extract $4$ networks belonging to different sub-areas in the field of computer science, with the same group of authors, each of whom has a unique $8$-bit hexadecimal ID enabling us to construct the true mapping between two networks as the one mapping nodes with same ID. Each network can be viewed as $G_1$ or $G_2$, thus there are $C_4^2=6$ combinations. (Table \ref{table:notation3}) Note that we can assign $w_{ij}$ on all these $3$ datasets since the prior knowledge is just $M$, which can be generated or known from the real networks. In experiments on this dataset, the results accurately reflect the practical situations.
\begin{figure*}[!tb]
	\centering
	\subfigure[N=500, C=0.05]{
		\begin{minipage}[]{0.235\linewidth}
			\centering
			\vspace{-3mm}
			\includegraphics[width=1.0\linewidth]{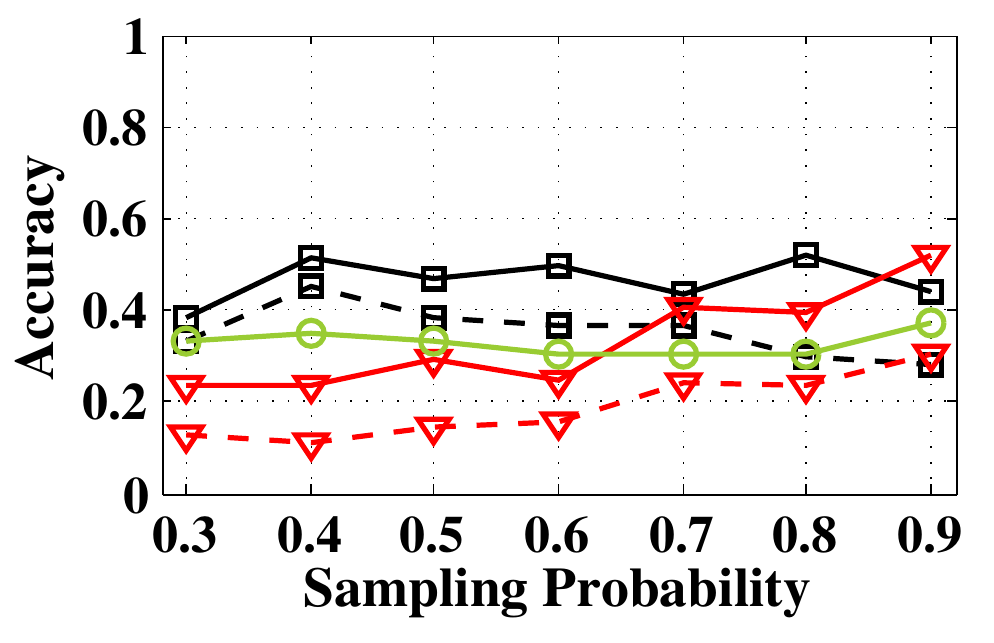}
			\vspace{-3mm}
		\end{minipage}%
		
	}
	\subfigure[N=1000, C=0.05]{
		\begin{minipage}[]{0.235\linewidth}
			\centering
			\vspace{-3mm}
			\includegraphics[width=1.0\linewidth]{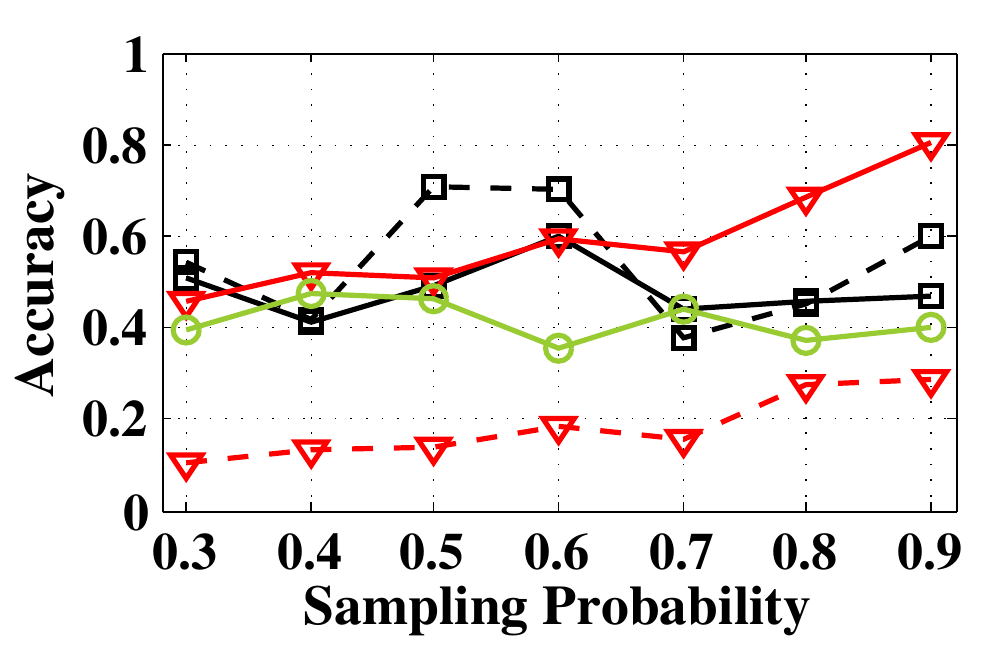}
			\vspace{-3mm}
		\end{minipage}%
		
	}
	\subfigure[N=1500, C=0.05]{
		\begin{minipage}[]{0.235\linewidth}
			\centering
			\vspace{-3mm}
			\includegraphics[width=1.0\linewidth]{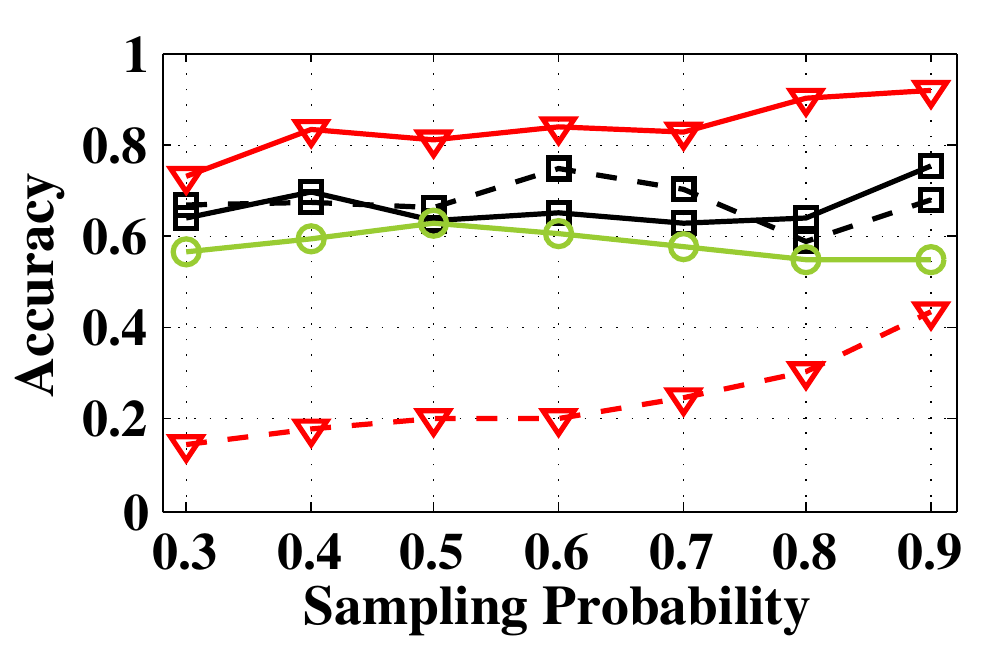}
			\vspace{-3mm}
		\end{minipage}%
		
	}
	\subfigure[N=2000, C=0.05]{
		\begin{minipage}[]{0.235\linewidth}
			\centering
			\vspace{-3mm}
			\includegraphics[width=1.0\linewidth]{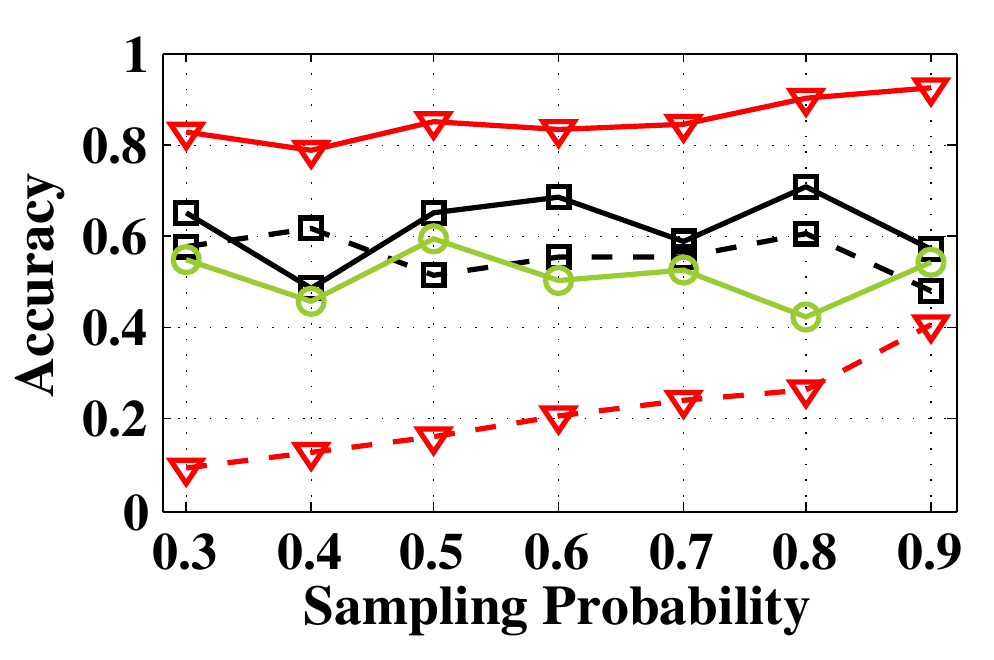}
			\vspace{-3mm}
		\end{minipage}%
		
	}
	\subfigure[N=500, C=0.1]{
		\begin{minipage}[]{0.235\linewidth}
			\centering
			\vspace{-3mm}
			\includegraphics[width=1.0\linewidth]{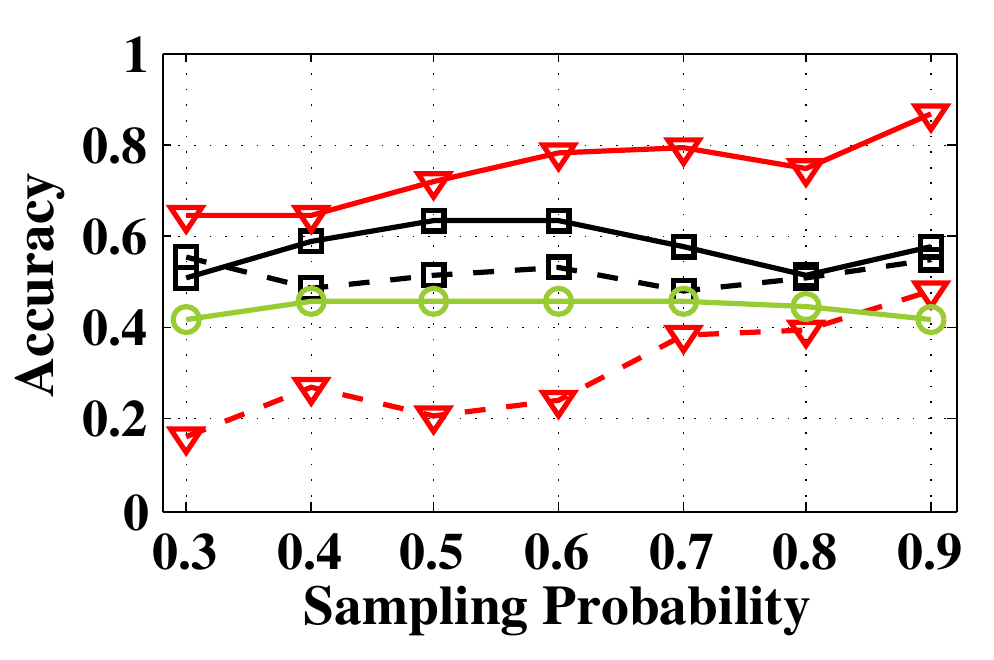}
			\vspace{-3mm}
		\end{minipage}%
		
	}
	\subfigure[N=1000, C=0.1]{
		\begin{minipage}[]{0.235\linewidth}
			\centering
			\vspace{-3mm}
			\includegraphics[width=1.0\linewidth]{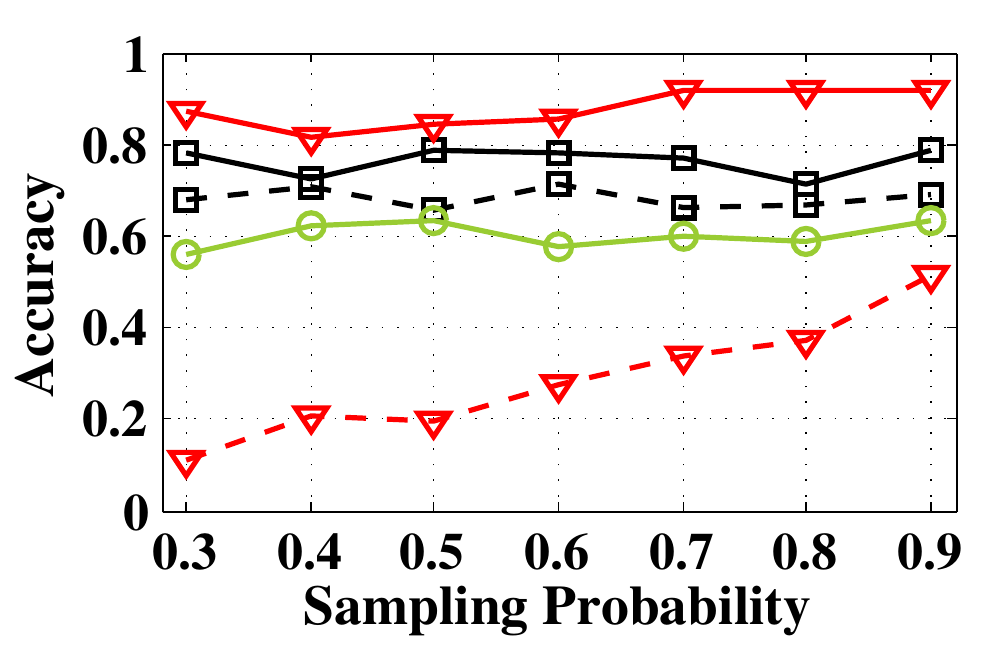}
			\vspace{-3mm}
		\end{minipage}%
		
	}
	\subfigure[N=1500, C=0.1]{
		\begin{minipage}[]{0.235\linewidth}
			\centering
			\vspace{-3mm}
			\includegraphics[width=1.0\linewidth]{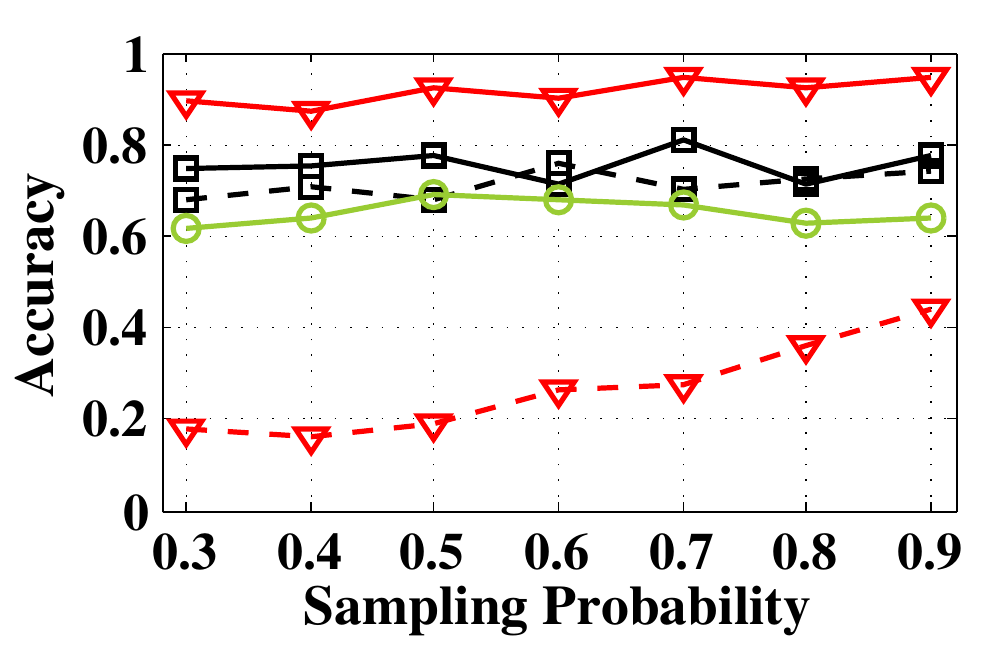}
			\vspace{-3mm}
		\end{minipage}%
		
	}
	\subfigure[N=2000, C=0.1]{
		\begin{minipage}[]{0.235\linewidth}
			\centering
			\vspace{-3mm}
			\includegraphics[width=1.0\linewidth]{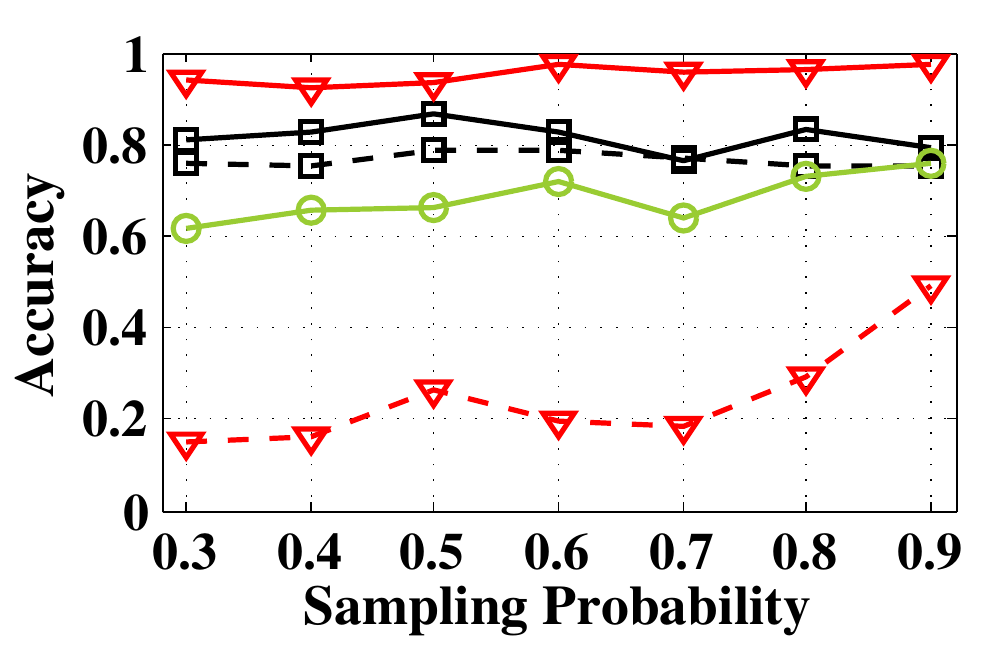}
			\vspace{-3mm}
		\end{minipage}%
		
	}
 \includegraphics[width=0.95\textwidth]{legend.pdf}
		\caption{Experiments on Sampled Real Social Networks.}
		\label{fig:soc}
\end{figure*}

	\begin{table}[!tb]
					\setlength{\extrarowheight}{5pt}
					\begin{scriptsize}
						\renewcommand\arraystretch{0.82}
						\caption{\bf Datasets in Basic Experiments}
						\centering
						\label{table:notation2}
						\resizebox{1.0\columnwidth}{!}{
							
							\begin{tabular}{l|l|l|l}\hline
								\textbf{Dataset} & Synthetic & Sampled Real Social & Cross-Domain Co-author \\ \hline
                                \textbf{Source} & OSBM & LiveJournal \cite{cite:SNAP} & MAG \cite{cite:MAG} \\ \hline
								\textbf{Num. of Nodes} & $500\sim2000$ & $500\sim2000$ & $3176$ \\ \hline
								\textbf{Num. of Communities} & $25\sim1000$ & $25\sim1000$ & $89$ \\
							\end{tabular}}		
							\vspace{-4.6mm}
						\end{scriptsize}
					\end{table}

\subsubsection{Algorithms for Comparison and Performance Metric}
Note that the main point of our experiments is to show the influence of overlapping communities on the accuracy, and our algorithm can effectively harness this overlapping property. Therefore, We exclude algorithms for seeded de-anonymization and select algorithms suitable for seedless cases related to our main point: showing the impact of overlapping communities on reducing NME, though other algorithms might outperform ours. We select two algorithms for comparison: (i) the Genetic Algorithm (GA), an epitome of heuristic algorithms, however due to its instability\footnote{The instability of GA will be shown in experimental results.}, we run $10$ times and average these results as the accuracy of GA in every experiment; (ii) the Convex Optimization-Based Algorithm (COBA) in \cite{Fu:arxiv,Fu:GC}, assigning a node to a unique community, which primarily suits non-overlapping cases. The performance metric is \emph{accuracy}, the proportion of correctly mapped nodes.

\begin{figure}[!tb]
	\centering
	\subfigure[Overlapping]{
		\begin{minipage}[]{0.46\linewidth}
			\centering
			\vspace{-3mm}
			\includegraphics[width=1.0\linewidth]{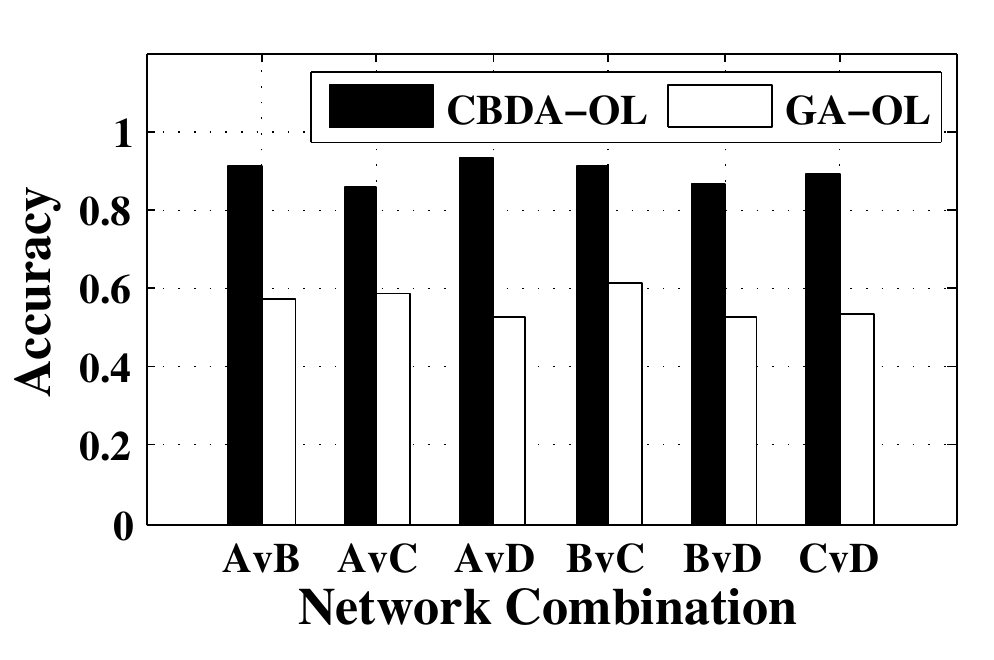}
			\vspace{-3mm}
		\end{minipage}%
		
	}
	\subfigure[Non-Overlapping]{
		\begin{minipage}[]{0.46\linewidth}
			\centering
			\vspace{-3mm}
			\includegraphics[width=1.0\linewidth]{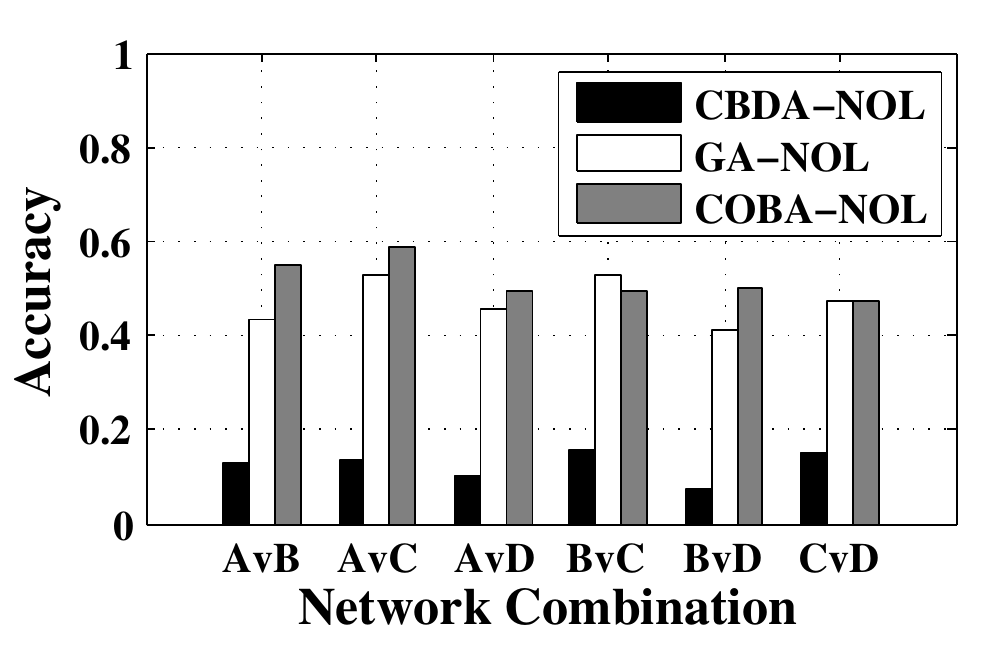}
			\vspace{-3mm}
		\end{minipage}%
		
	}
\caption{Experiments on Cross-Domain Co-author Networks.}
	\label{fig:coauthor}
\end{figure}

\begin{figure}[htbp]
     	\centering
		\includegraphics[width=0.46\textwidth]{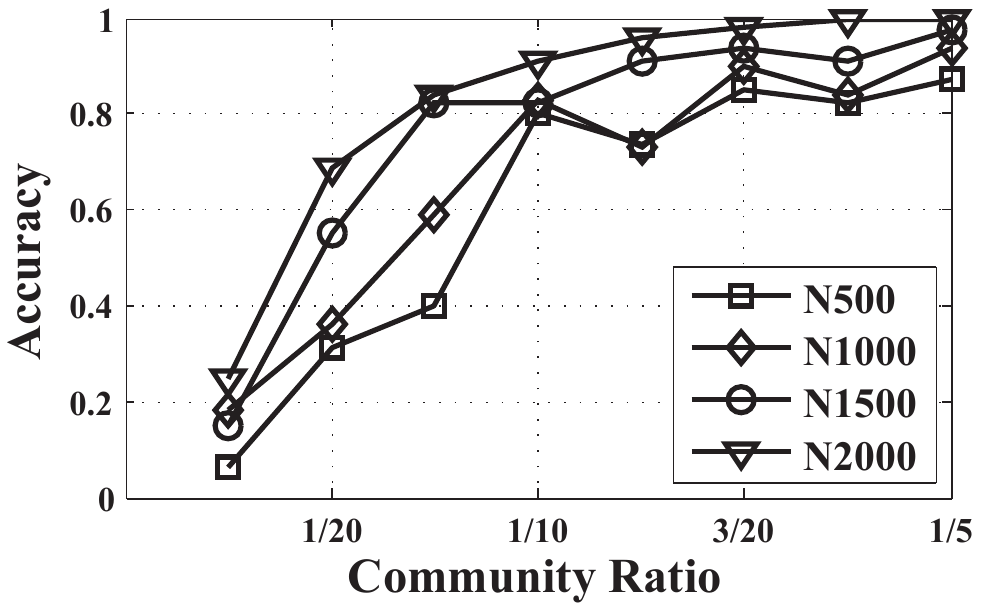}
		\caption{The Influence of Community Ratio on Accuracy.}
		\label{Fig:CommunityDiff}
\end{figure}

\begin{figure}[htbp]
     	\centering
		\includegraphics[width=0.46\textwidth]{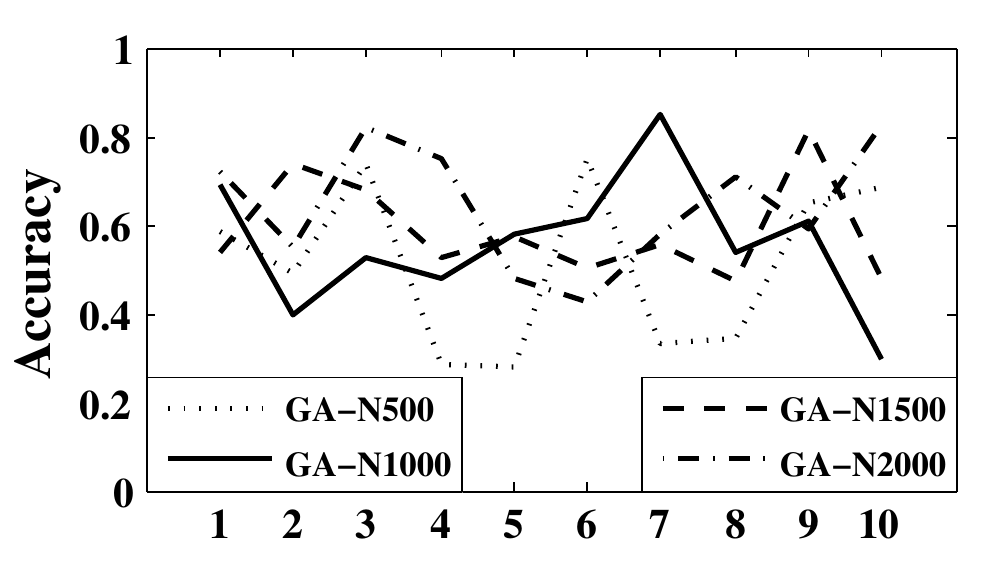}
		\caption{The Instability of Genetic Algorithm.}
		\label{Fig:GAinstable}
\end{figure}

\subsubsection{Supplementary Experiments}
To make our experimental validation more comprehensive and convincing, we supplement three experiments: (i) We study the effect of different community ratios ($\eta$) on the accuracy based on sampled real social networks. We modify $\eta$ from $0.025$ to $0.2$ with interval $0.025$; (ii) We study whether the weight matrix $\mathbf{W}$ in our cost function makes for the higher accuracy, compared with the cost function without appending $\mathbf{W}$ in existing work \cite{cite:seedless} . Appending $\mathbf{W}$ means adding the community information in the cost function. (iii) We study the instability of genetic algorithm (GA) and reveals the reason why GA lacks practical usage even if it achieves acceptable average accuracy in our main experiments.

\begin{figure*}[!htbp]
	\centering
	\subfigure[N=500, C=0.05]{
		\begin{minipage}[]{0.235\linewidth}
			\centering
			\vspace{-3mm}
			\includegraphics[width=1.0\linewidth]{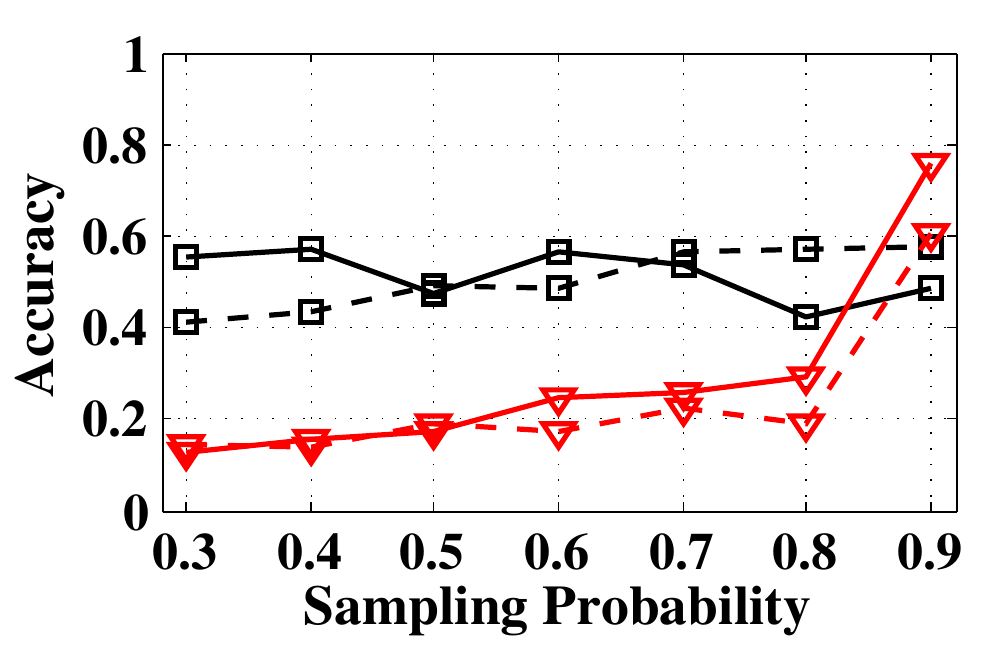}
			\vspace{-3mm}
		\end{minipage}%
		
	}
	\subfigure[N=1000, C=0.05]{
		\begin{minipage}[]{0.235\linewidth}
			\centering
			\vspace{-3mm}
			\includegraphics[width=1.0\linewidth]{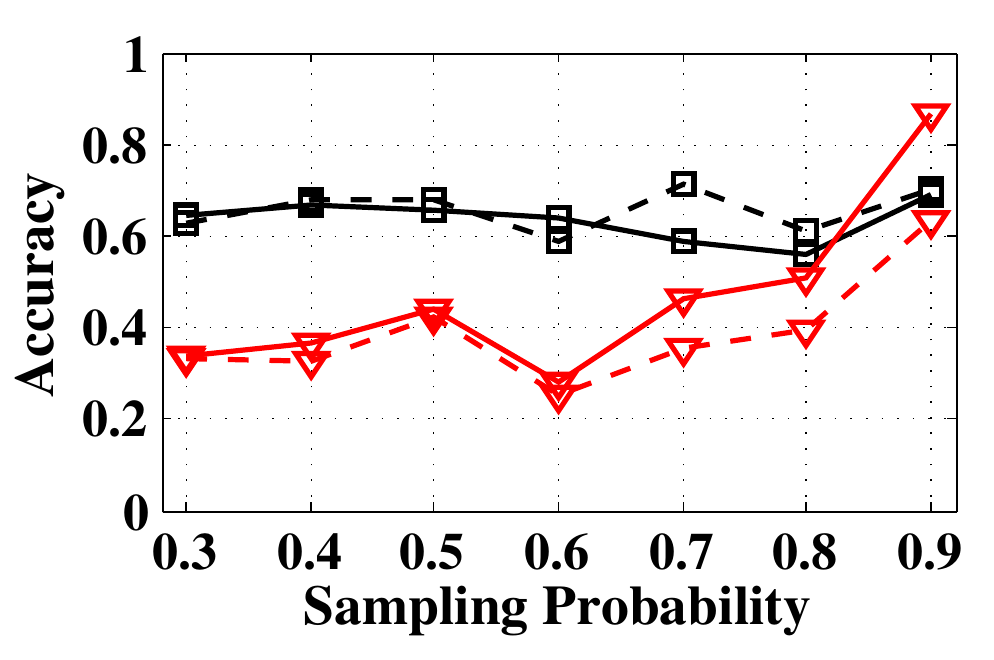}
			\vspace{-3mm}
		\end{minipage}%
		
	}
	\subfigure[N=1500, C=0.05]{
		\begin{minipage}[]{0.235\linewidth}
			\centering
			\vspace{-3mm}
			\includegraphics[width=1.0\linewidth]{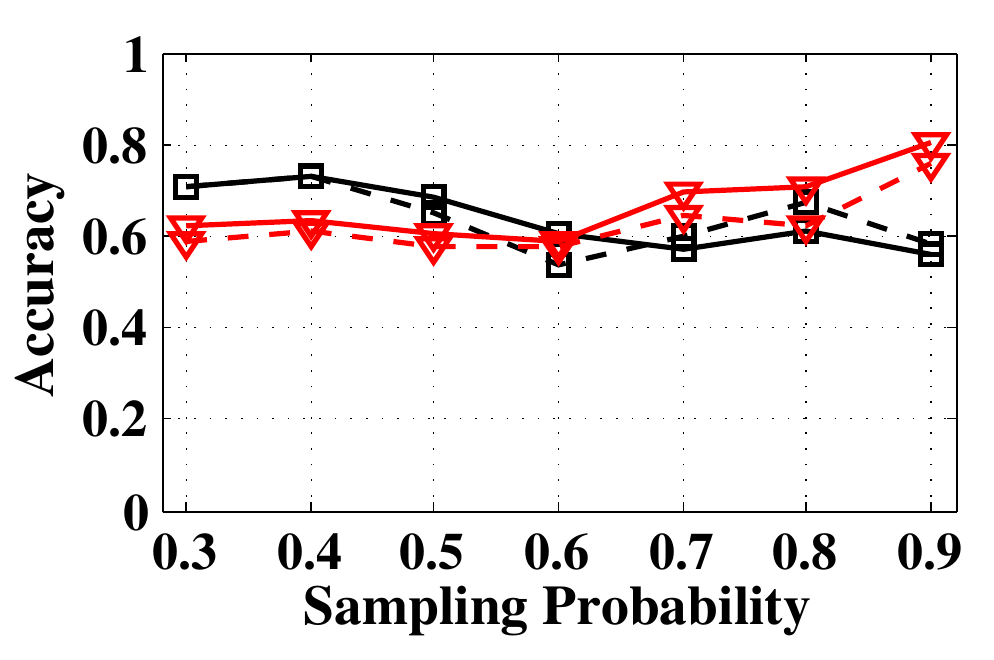}
			\vspace{-3mm}
		\end{minipage}%
		
	}
	\subfigure[N=2000, C=0.05]{
		\begin{minipage}[]{0.235\linewidth}
			\centering
			\vspace{-3mm}
			\includegraphics[width=1.0\linewidth]{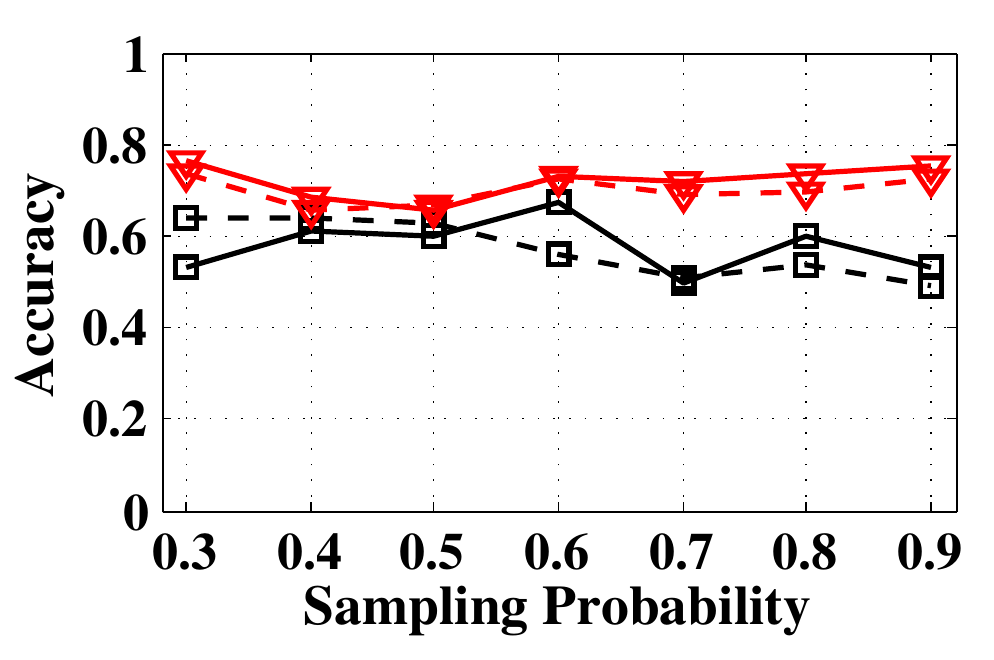}
			\vspace{-3mm}
		\end{minipage}%
		
	}
	\subfigure[N=500, C=0.1]{
		\begin{minipage}[]{0.235\linewidth}
			\centering
			\vspace{-3mm}
			\includegraphics[width=1.0\linewidth]{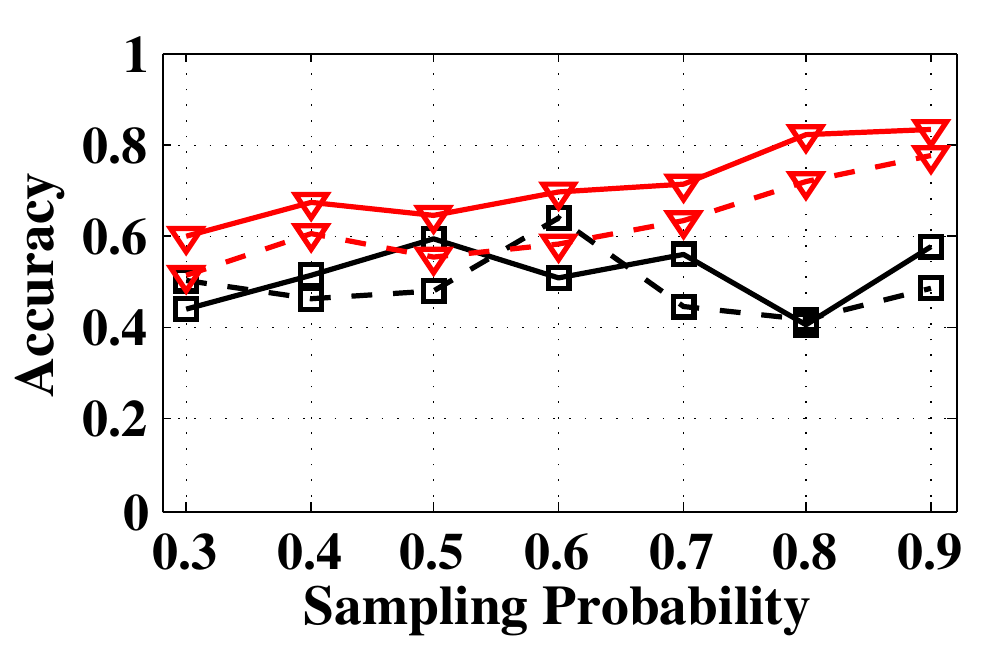}
			\vspace{-3mm}
		\end{minipage}%
		
	}
	\subfigure[N=1000, C=0.1]{
		\begin{minipage}[]{0.235\linewidth}
			\centering
			\vspace{-3mm}
			\includegraphics[width=1.0\linewidth]{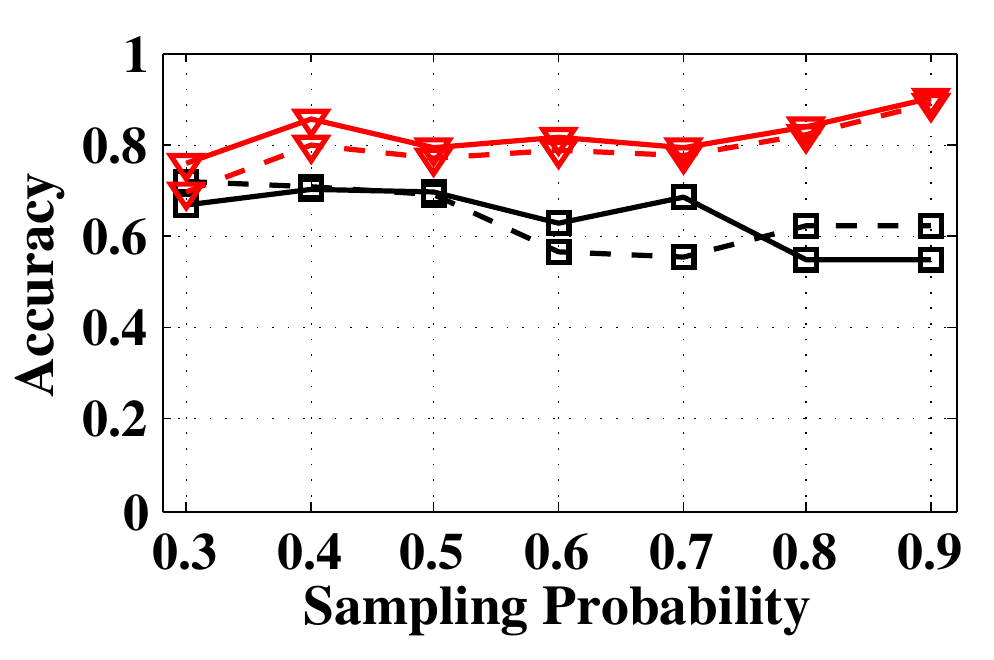}
			\vspace{-3mm}
		\end{minipage}%
		
	}
	\subfigure[N=1500, C=0.1]{
		\begin{minipage}[]{0.235\linewidth}
			\centering
			\vspace{-3mm}
			\includegraphics[width=1.0\linewidth]{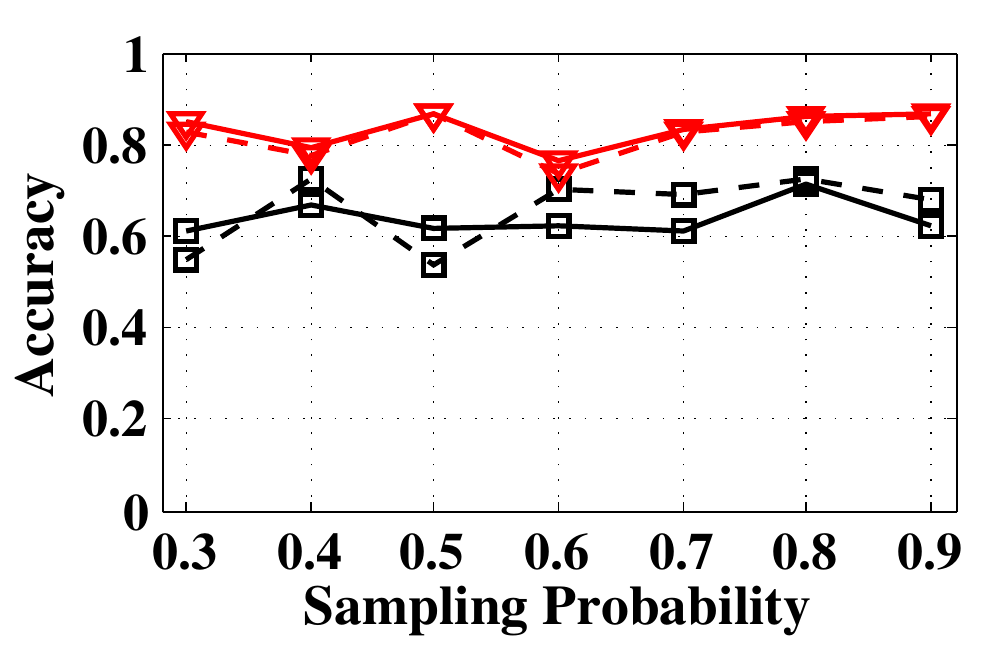}
			\vspace{-3mm}
		\end{minipage}%
		
	}
	\subfigure[N=2000, C=0.1]{
		\begin{minipage}[]{0.235\linewidth}
			\centering
			\vspace{-3mm}
			\includegraphics[width=1.0\linewidth]{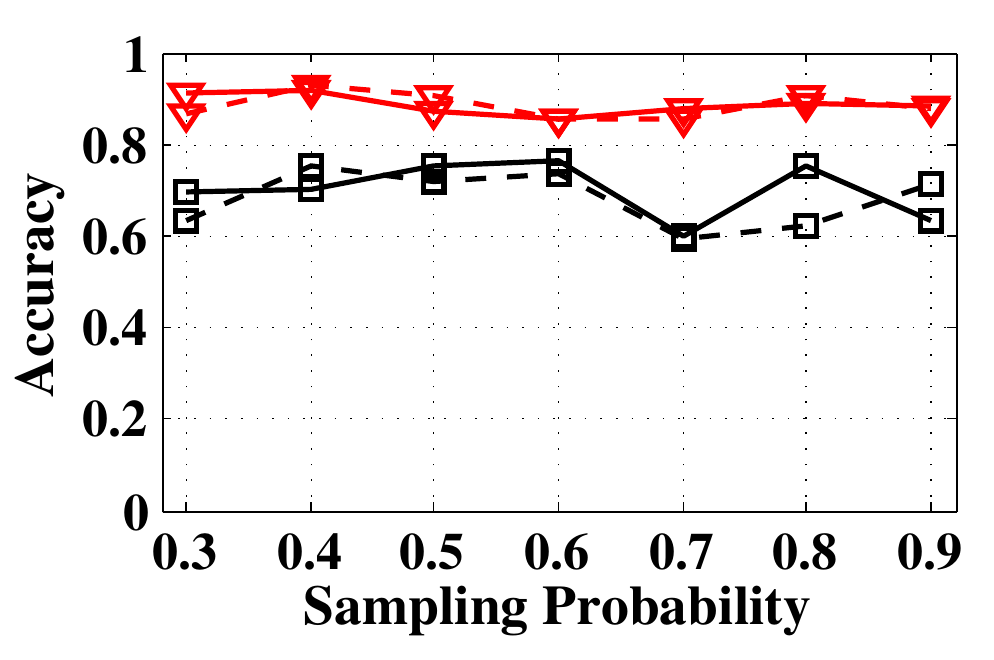}
			\vspace{-3mm}
		\end{minipage}%
		
	}
		\includegraphics[width=0.9\textwidth]{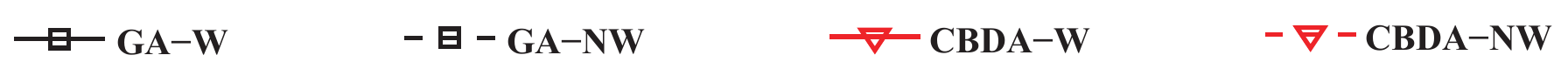}
		\caption{Experiments on Weighted and Non-weighted Cost Function.}
\label{fig:wnw}
\end{figure*}

\subsection{Experiment Results}
\subsubsection{Synthetic Networks}

Fig. \ref{fig:syn5} and \ref{fig:syn1} illustrate our experimental results on synthetic networks, where community ratio $\eta=0.05$ in Fig. \ref{fig:syn5} and $\eta=0.1$ in Fig. \ref{fig:syn1}.
Firstly looking at Fig. \ref{fig:syn5}, with lower community ratio, we observe that: (i) The \emph{average} accuracy of genetic algorithm (GA) under different settings keeps at levels around $40\%-60\%$, which illustrates that based on OSBM, different sizes, densities and whether the communities overlap or not do not make a difference on the performance of GA averagely. This is because GA examines the edges one by one to make the cost function as small as possible, like a greedy algorithm which searches for the local optimum, therefore GA is not seriously affected by the global setting of the networks. (ii) The accuracy of COBA also keeps at a stable level in different situations. However, COBA can only cope with non-overlapping situations, and generally its performance is inferior to GA when communities are not overlapped, which is in line with the results in \cite{Fu:arxiv,Fu:GC}. (iii) The accuracy of CBDA, our algorithm, keeps stable under one specific situation but varies a lot in different networks when the communities overlap each other. This variation is mainly caused by the value of $N$. When the network size $N$ becomes larger, the accuracy of CBDA rises up as well. Specifically, when $N$ goes from $500$ to $2000$, the accuracy rises from approximately $40\%$ to $80\%$. This striking phenomenon demonstrates that our CBDA is suitable for larger size of networks under networks with relatively sparse communities, which corresponds to our Theorem \ref{th3} that as the size of networks becomes larger, the relative NME becomes smaller\footnote{Here when $N$ is larger, the NME is smaller, thus the relative NME becomes smaller as well.}. On the other hand, however, when dealing with non-overlapping situations, our CBDA works stably but not as efficiently as GA or COBA, with the accuracy only around $20\%$.

Now we focus on Fig. \ref{fig:syn1} and compare it with Fig. \ref{fig:syn5}. Fig. \ref{fig:syn1} shows the results under higher community ratio, i.e., denser communities. We can discover that the performance of GA follows that in Fig. \ref{fig:syn5}, which makes sense since, as mentioned above, the performance of GA is not at the mercy of global information like community density. When communities are non-overlapping, the COBA and our CBDA keep similar trends as they do in lower $\eta$, showing that the community density under non-overlapping situations does not affect the performance of all these algorithms. However, what is noticeable is that our CBDA always performs better than other algorithms when the communities are overlapping each other. Moreover, compared with Fig. \ref{fig:syn5} in which $\eta$ is low, the community parameter $a$ is dominant in the accuracy of CBDA when the $\eta$ is high, and when $a=5$ the accuracy can keep stable at around $90\%$. This vivid comparison tells us that our CBDA is very suitable for high accuracy de-anonymization when the community density is large. Moreover, when the community density is large, the performance of CBDA is mainly decided by the edge density ($a$), positively correlated to community density; when the community density is small, then the performance of CBDA is mainly decided by the size of the networks ($N$). This shows that the community ratio (density) determines the dominant factor ($a$ or $N$) in de-anonymization accuracy in networks with overlapping communities.


\subsubsection{Sampled Real Social Networks}
In sampled real social networks, we utilize the real underlying network, thus no modifications on $a$ exist. The results are in Fig. \ref{fig:soc}. We can observe: (i) GA performs better in larger networks and under denser communities, either overlapping or non-overlapping; (ii) The performance of COBA is also enhanced when the size of networks become larger and the community becomes denser; (iii) The performance of CBDA under non-overlapping situations does not outperform other algorithms, but a rising tendency exists as the sampling probability $s$ becomes larger; (iv) The performance of CBDA under overlapping situations still performs well under denser communities and larger network size, with the highest point $95\%$ and the highest average level around $90\%$ when $N=2000$ and $\eta=0.1$, the largest size and densest communities in Table \ref{table:notation2}.

Synthesizing the above four observations, we can learn that the OSBM does not reflect the real social networks very precisely, since the performance of all three algorithms under non-overlapping or overlapping communities differs in two datasets. Moreover, with the same experimental setting, we discover that the performance of our CBDA is better in sampled real social networks than in OSBM-based synthetic networks, which further undergirds the high performance of our algorithm in practical use. Additionally, the results in Fig. \ref{fig:soc} also meet Theorem \ref{th3} that as the network size becomes larger, the relative NME is much smaller and close to $0$, indicating that Theorem \ref{th3} also works in real social networks.

\subsubsection{Cross-Domain Co-author Networks}
In cross-domain co-author networks, we pick up four networks with the same set of $3176$ users.
Fig. \ref{fig:coauthor} illustrates our results. We find that in non-overlapping situation, the results correspond to those in previous datasets that our CBDA does not perform well, while GA and COBA work well. On the other hand, in overlapping situation, we find our CBDA reaches accuracy around $90\%$, outstripping GA whose accuracy is averagely $60\%$. This phenomenon places the significance of our CBDA in a higher level in de-anonymization with overlapping communities since it characterizes the real case totally. Moreover, due to the fact that overlapping situations are much more broadly in real large social networks than non-overlapping situations, our CBDA has wider usage than GA and COBA.

\subsubsection{The Effect of Community Density}
After presenting the results of three basic datasets, we further study the effect of community density on accuracy with more details by using our CBDA. Note that the community ratio $\eta$ directly controls the community density, thus we apply the sampled real social networks under which we can adjust the community ratio $\eta$. We modify $\eta$ from $0.025$ to $0.2$, with interval $0.025$. The results are shown in Fig. \ref{Fig:CommunityDiff}. We can observe that in most cases our CBDA performs better when the network size is larger, which again echoes the conclusion in Theorem \ref{th3}. Moreover, with the larger community ratio, the accuracy of CBDA rises up, showing that CBDA is suitable for social networks with highly overlapping communities. If we observe more carefully, the huge difference of accuracy occurs between $\eta=0.025$ and $\eta=0.075$, and when $\eta\geq0.01$, the accuracy of CBDA under all the network sizes involved keeps at high levels, around $80\%$ or higher. The results further illustrate that the higher community ratio $\eta$, the better de-anonymizing result will be.

\subsubsection{The Instability of Genetic Algorithm}
Now we discuss the weakness of GA in detail. Due to the fact that GA is a heuristic algorithm searching for a local minimum, we will obtain different results when trailing GA multiple times. Fig. \ref{Fig:GAinstable} illustrates the results running GA for $10$ times under real social networks with different sizes. Note that the performance of GA fluctuates violently, for example it swings from $30\%$ to $84\%$ when $N=1000$ and from $42\%$ to $80\%$ when $N=2000$. Therefore, although in average case GA keeps stable at around $40\%$ to $60\%$, users who adopt GA cannot determine whether the solution GA outputs this time is of good or bad quality. This instability in output quality inhibits the usage of GA in practical situations.

\subsubsection{The Effect of Weight Matrix $\mathbf{W}$}
In addition to previous experiments, we intend to supplement a study on the effect of weight matrix $\mathbf{W}$. The purpose of this study is to show that whether minimizing the cost function with $\mathbf{W}$ is of higher accuracy than minimizing the cost function without $\mathbf{W}$, proposed in \cite{cite:seedless}. Embedding $\mathbf{W}$ in the cost function means that We do this experiment under real sampled social networks. Fig. \ref{fig:wnw} illustrates the results. We can observe: (i) The performance of GA does not depend on whether the cost function is appended with $\mathbf{W}$. The curves under weighted and non-weighted cost functions interleave each other. This phenomenon, we suggest, is attributed to the instability of GA. (ii) The performance of our CBDA under weighted cost function is higher than that under non-weighted cost function in almost all the situations. One exception exists when $N=2000$ and $\eta=0.1$. In this situation two curves are almost overlapping each other, which tells us that in larger networks, embedding the community information in the cost function is less significant compared with the increasing network size. In smaller network size ($N\leq 1500$), however, the embedding of community information performs visible increment in accuracy.

\section{Conclusion}\label{sec:conclusion}
We tackle seedless de-anonymization under a more practical social network model parameterized by \emph{overlapping communities} than existing work. By MMSE, we derive a well-justified cost function minimizing the expected number of mismatched users. While showing the NP-hardness of minimizing MMSE, we validly
transform it into WEMP which resolves the tension between optimality and complexity:
(i) WEMP asymptotically returns a negligible mapping error under mild conditions facilitated by higher overlapping strength; (ii) WEMP can be algorithmically solved via CBDA, which exactly finds the optimum of WEMP. Extensive experiments further confirm the effectiveness of CBDA under overlapping communities.


\begin{thebibliography}{1}
\begin{spacing}{0.89}

        	\bibitem{cite:arxiv-community} E. Onaran, G. Siddharth and E. Erkip, ``Optimal de-anonymization in random graphs with community structure", arXiv preprint arXiv:1602.01409, 2016.


\bibitem{privacy} W. Wang, L. Ying and J. Zhang. ``On the Relation Between Identifiability, Differential Privacy, and Mutual-Information Privacy", in \emph {IEEE Transactions on Information Theory}, No. 62, Vol. 9, pp. 5018-5029, 2016.

    	\bibitem{cite:de-anonymization} A. Narayanan and V. Shmatikov, ``De-anonymizing social networks", in \emph{IEEE Symposium on Security and Privacy}, pp. 173-187, 2009.

    \bibitem{cite:seedless}P. Pedarsani and M. Grossglauser, ``On the privacy of anonymized networks" in \emph{Proc. ACM SIGKDD}, pp. 1235-1243, 2011.
	\bibitem{cite:allerton}E. Kazemi, L. Yartseva and M. Grossglauser, ``When can two unlabeled networks be aligned under partial overlap?", in \emph{IEEE 53rd Annual Allerton Conference on Communication, Control, and Computing}, pp. 33-42, 2015.
	\bibitem{cite:improved-bound} D. Cullina and N. Kiyavash, ``Improved achievability and converse bounds for Erd\H{o}s-R\'{e}nyi graph matching", in \emph{Proc. ACM SIGMETRICS}, pp. 63-72, 2016.

\bibitem{cite:shouling1}S. Ji, W. Li, M. Srivatsa and R. Beyah, ``Structural data de-anonymization: Quantification, practice, and implications", in \emph{Proc. ACM CCS}, pp. 1040-1053, 2014.

    \bibitem{cite:shouling2}S. Ji, W. Li, N. Z. Gong, P. Mittal and R. Beyah, ``On your social network de-anonymizablity: Quantification and large scale evaluation with seed knowledge" in \emph{NDSS} 2015.


\bibitem{palla2005uncovering} G. Palla, I. Derenyi, L. J. Farkas and T. Vicsek, ``Uncovering the overlapping community structure of complex networks in nature and society'', in \emph{Nature}, No. 7043, Vol. 435, pp. 814-818, 2005.

    \bibitem{Fu:arxiv} L. Fu, X. Fu, Z. Hu, Z. Xu and X. Wang, ¡°De-anonymization of Social
Networks with Communities: When Quantifications Meet Algorithms¡±,
arXiv preprint arXiv:1703.09028, 2017.
\bibitem{Fu:GC} X. Fu, Z. Hu, Z. Xu, L. Fu and X. Wang, ¡°De-anonymization of
Networks with Communities: When Quantifications Meet Algorithms¡±,
to appear in IEEE Globecom, 2017.

\bibitem{latouche2011overlapping} P. Latouche, E. Birmel¨¦ and C. Ambroise, ``Overlapping stochastic block models with application to the french political blogosphere", in \emph{The Annals of Applied Statistics} pp.309--336, 2011.

    		\bibitem{cite:percolation-matching}L. Yartseva and M. Grossglauser, ``On the performance\\of percolation graph matching", in \emph{Proc. ACM COSN}, pp. 119-130, 2013.
	
	\bibitem{cite:vldb1} E. Kazemi, S. H. Hassani and M. Grossglauser, ``Growing a graph matching from a handful of seeds", in \emph{Proc. the VLDB Endowment}, pp. 1010-1021, 2015.
	\bibitem{cite:garetto1} C. F. Chiasserini, M. Garetto and E. Leonardi, ``Social network de-anonymization under scale-free user relations", in \emph{IEEE/ACM Trans. on Networking}, Vol. 24, No. 6, pp. 3756-3769, 2016.
	
		\bibitem{cite:vldb} N. Korula and S. Lattanzi, ``An efficient reconciliation algorithm for social networks", in \emph{Proc. the VLDB Endowment}, pp. 377-388, 2014.

	
	\bibitem{cite:garetto2}C. F. Chiasserini, M. Garetto and E. Leonardi, ``Impact of clustering on the performance of network de-anonymization", in \emph{Proc. ACM COSN}, pp. 83-94, 2015.

\bibitem{hardy1988inequalities} G. H. Hardy, J. E. Littlewood and G. P¨®lya, ``Inequalities. Reprint of the 1952 edition." in \emph{Cambridge Mathematical Library}, 1988

            \bibitem{privacy1} E. Abbe, A. S. Bandeira and G. Hall, ``Exact Recovery in the Stochastic Block Model", in \emph{IEEE Transactions on Information Theory}, Vol. 62, No. 1, pp. 471-487, 2016.

        \bibitem{privacy2} B. Hajek, Y. Wu and J. Xu, ``Information Limits for Recovering a Hidden Community", in \emph{IEEE Transactions on Information Theory}, Vol. 63, No. 8, pp. 4729-4745, 2016.

        \bibitem{privacy3} B. Hajek, Y. Wu and J. Xu, ``Achieving Exact Cluster Recovery Threshold via Semidefinite Programming: Extensions", in \emph{IEEE Transactions on Information Theory}, Vol. 62, No. 10, pp. 5918-5937, 2016.

            \bibitem{cite:blockmodel} A. Decelle, F. Krzakala, C. Moore and L. Zdeborov¨¢, ``Asymptotic analysis of the stochastic block model for modular networks and its algorithmic applications" in \emph{Physical Review E}, No. 84, Vol. 6, pp. 066106, 2011.


\bibitem{Kariv1979ALGORITHM} O. Kariv and S. L. Hakimi, ``Algorithm approach to network location problems - 2. the p-medians'', in \emph{Siam Journal on Applied Mathematics}, No. 3, Vol. 37, pp. 539-560, 1979.
\bibitem{zaslavskiy2009path} M. Zaslavskiy, F. Bach and J. P. Vert, ``A path following algorithm for the graph matching problem'' , in \emph{IEEE Transactions on Pattern Analysis and Machine Intelligence}, No. 12, Vol. 31, pp. 2227-2242, 2009.

    		\bibitem{cite:livejournal} J. Yang and J. Leskovec, ``Defining and evaluating network communities based on ground-truth", in \emph{Knowledge and Information Systems}, No. 42, Vol. 1, pp. 181-213, 2015.

\bibitem{cite:SNAP} J. Leskovec and A. Krevl, ``SNAP Datasets: Stanford Large Network Dataset Collection", \url{http://snap.stanford.edu/data}, 2014.

	
	\bibitem{cite:MAG}\url{https://www.microsoft.com/en-us/research/project/microsoft-academic-graph/}


	




	








\end{spacing}
\end{thebibliography}
\end{document}